\newcommand{\ihbar}{\imath \hbar}
\newcommand{\Sp}{\mathrm{Sp}}
\renewcommand{\Re}{\mathrm{Re}}
\renewcommand{\Im}{\mathrm{Im}}
\newcommand{\Ted}{\underset{\rightarrow}{\mathbb{T}e}}
\newcommand{\Teg}{\underset{\leftarrow}{\mathbb{T}e}}
\newcommand{\llangle}{\langle \hspace{-0.2em} \langle}
\newcommand{\rrangle}{\rangle \hspace{-0.2em} \rangle}
\newcommand{\lllangle}{\langle \hspace{-0.2em} \langle \hspace{-0.2em} \langle}
\newcommand{\rrrangle}{\rangle \hspace{-0.2em} \rangle \hspace{-0.2em} \rangle}
\newcommand{\id}{\mathrm{id}}
\newcommand{\xrightarrow}[1]{\overset{#1}{\longrightarrow}}
\newcommand{\Env}{\mathrm{Env}}
\newcommand{\Der}{\mathrm{Der}}
\newcommand{\dnc}{\mathbf{d}}
\newcommand{\Diff}{\mathrm{Diff}}
\newcommand{\Dis}{\mathbf{D}}
\newcommand{\Lie}{\mathrm{Lie}}
\newtheorem{theo}{Theorem}
\newtheorem{prop}{Property}
\newenvironment{proof}{\noindent \textit{Proof:}}{\hfill $\Box$ \\}
\begin{document}

\title[Emergent gravity and D-brane adiabatic dynamics]{Emergent gravity and D-brane adiabatic dynamics: emergent Lorentz connection}

\author{David Viennot}
\address{Institut UTINAM (CNRS UMR 6213, Universit\'e de Bourgogne-Franche-Comt\'e, Observatoire de Besan\c con), 41bis Avenue de l'Observatoire, BP1615, 25010 Besan\c con cedex, France.}

\begin{abstract}
We explore emergent geometry of the spacetime at the microscopic scale by adiabatic transport of a quasi-coherent state of a fermionic string, with quantum spacetime described by the matrix theory (BFSS matrix model). We show that the generator of the Berry phase is the shift vector of the spacetime foliation by spacelike surfaces associated with the quasi-coherent state. The operator-valued generator of the geometric phase of weak adiabatic transport is the Lorentz connection of the emergent geometry which is not torsion free at the microscopic scale. The effects of the torsion seem consistent with the usual interpretation of the Berry curvature as a pseudo magnetic field.
\end{abstract}

\noindent{\it Keywords\/}: quantum gravity, matrix model, geometric phases, non-commutative geometry, torsion.

\section{Introduction}
Matrix models \cite{Zarembo} provide a description of the quantum spacetime as a noncommutative manifold associated with a D-brane. Such an interesting model is the  BFSS (Banks-Fischler-Shenker-Susskind) matrix theory \cite{Banks}. Some works \cite{Klammer, Steinacker, Kunter, Sahakien} show that gravity emerges at macroscopic scale from the noncommutativity of the manifold obtained by quantisation of an embedding flat spacetime. More precisely, the semi-classical limit $\frac{1}{\ihbar} [\hat f,\hat g] \xrightarrow{\hbar \to 0} \theta^{ij} \partial_i f \partial_j g$ (for $\hat f$ and $\hat g$ two observables of the quantum spacetime) induces a Poisson structure $(\theta^{ij})$ describing an emergent geometry at the macroscopic scale (the semi-classical limit being equivalent to the thermodynamical limit where the number of strings tends to infinity). The effective metric of this emergent geometry is $g^{ij} = \theta^{ik} \theta^{jl} \partial_k x^a \partial_l x^b \eta_{ab}$ where $(x^a)$ are the coordinates onto the embedding flat spacetime and which are the semi-classical limits of operators $(X^a)$ describing the D-brane. At the microscopic level (with a ``small'' number of strings), the spacetime is purely quantum since it is described in matrix models as a noncommutative manifold. To compare this one to classical spacetimes, we can use two usual methods of quantum mechanics. The first one consists to consider the mean values of the quantum spacetime observables which obey to classical laws by the Ehrenfest theorem (and so to consider the ``mean value'' of the noncommutative manifold as a classical manifold). The second one consists to consider equivalents to coherent states for the algebra of quantum spacetime observables. Such states are the quantum states closest to classical states, since they minimise the quantum uncertainties. We call emergent geometry at the microscopic level, the one defined by the quasi-coherent states $|\Lambda(x)\rrangle$ \cite{Schneiderbauer} (which are strongly related to the Perelomov coherent states of a Lie algebra \cite{Perelomov}) and which minimise the spacetime quantum uncertainties (see \cite{Schneiderbauer}). These quasi-coherent states are eigenvectors of the noncommutative Dirac operator $\slashed D_x$ of a probe fermionic string (gravity and spacetime geometry are physically revealed by test particles).\\
An interesting point of view has been presented in \cite{Berenstein,Karczmarek,Huboda}. The D-brane defines a $U(1)$-principal bundle endowed with the Berry connection $A = -\imath \llangle \Lambda|d|\Lambda \rrangle$ \cite{Shapere} and in some examples $\theta$ seems to identify with a dual tensor of the Berry curvature $F=dA$. But this point, and the precise role of the Berry connection, seem to be unclear in these previous works. This suggests a possible link between the emergent geometry and the adiabatic dynamics of the probe fermionic string. Indeed, adiabatic evolution can be defined by the adiabatic limit $U_{ad}(s) = \lim_{T \to + \infty} \Teg^{-\imath \frac{T}{\hbar} \int_0^s \hat H(s')ds'}$ where $T$ is the duration of the dynamics, $s=t/T$ is the reduced time and $\hat H$ is a time-dependent Hamiltonian ($\Teg$ denotes the time-ordered exponential, i.e. the Dyson series). By application of an adiabatic theorem \cite{Teufel} we have
\begin{equation}
  U_{ad}(s) = \sum_i e^{-\imath \frac{T}{\hbar} \int_0^s (\lambda_i(s')+A_i(s'))ds'} |\lambda_i(s)\rangle\langle \lambda_i(0)|
\end{equation}
where $(\lambda_i)$ are the instantaneous eigenvalues of $\hat H$ (supposed without crossing) with instantaneous eigenvectors $(|\lambda_i\rangle)$ and $A_i$ is the Berry connection (the Berry phase generator) for the $i$-th eigenvector. We see that for the evolution operator it is equivalent to consider the adiabatic limit $T \to +\infty$ or to consider the semi-classical limit $\hbar \to 0$ (or in other words, the relevant limit is $\frac{\hbar}{T} \to 0$). Reciprocally, if we rewrite the Heisenberg equation with the reduced time $\frac{d\hat f}{ds} = \frac{T}{\ihbar} [\hat f,\hat H]$, we see that $\frac{T}{\ihbar} [\hat f,\hat H] \xrightarrow{\frac{\hbar}{T}\to 0} \{ f,H\}$ (where the braces denote the Poisson bracket). Another heuristic argument is in favour of the relevance of the adiabatic regime in matrix model. The emergent geometry being revealed by dynamics of a test particle (probe fermionic string), the time ratio appearing is $\frac{t_P}{T}$ where $T$ is the characteristic duration of the particle transport and $t_P$ is the Planck time which characterises the inner quantum evolution of the quantum spacetime. We have clearly $\frac{t_P}{T} \ll 1$ justifying the adiabatic limit. Since the quasi-coherent states are eigenvectors of $\slashed D_x$, at the adiabatic limit they define the time-dependent mean values of the spacetime quantum observables. And so, the two approaches to compare quantum spacetimes to classical spacetimes (Ehrenfest theorem and coherent states) are in this context the same thing. The quasi-coherent state ensures a behaviour closest to a classical one for the space, and the adiabatic limit ensures a time evolution closest to a classical one ($\frac{\hbar}{T} \to 0$). The emergent geometry at the microscopic level is the one which emerges from the adiabatic limit with the quasi-coherent state.\\
The strict adiabatic limit defines strong adiabatic regimes. But weaker adiabatic regimes can be also considered (with eigenvalue crossings, with non-adiabatic transitions restricted to a small group of eigenvectors,...). In past works \cite{Viennot1,Viennot2,Viennot3}, we have studied a weak adiabatic regime where a quantum system is submitted to a competition between the adiabatic transport and the entanglement with another system (called environment). In that case, we can define a dynamics which is adiabatic with respect to the environment but not with respect to the studied system itself. In such an approach the Berry connection becomes operator-valued. In the case of the adiabatic transport of a probe fermionic string, we can have also entanglement between the spin degree of freedom with the D-brane. What does the weak adiabatic regime provide from the viewpoint of the emergent gravity? The previous works \cite{Klammer, Steinacker, Kunter, Sahakien} concerning the emergent geometry at the thermodynamical limit only focus on the emergent metric (or equivalently on the emergent tetrads). At the thermodynamical limit, to find the usual general relativity at the macroscopic scale, only torsion free geometries are considered. But is it really the case at the microscopic scale? The existence of a torsion in string theory is discussed in different models \cite{Hammond, Li, Hammond2, Sarkar}. In the context of a matrix model at the semi-classical limit, a torsion has been found in \cite{Steinacker2} which is significant at the cosmic scale and it could be a candidate to explain the dark matter problem. In this paper, we want to show that a torsion can also arise in the BFSS matrix model with manifestations at the microscopic scale (in the quasi-coherent picture). This one is related to the adiabatic limit and seems consistent with the usual interpretation of the Berry curvature.\\

After a presentation of the emergent geometry theory from the viewpoint of the adiabatic approach in section 2 (where we discuss the role of the Berry connection in matrix model), we show section 3 that the Berry connection of the weak adiabatic regime defines a Lorentz connection which completes the emergent geometry and which is not necessarily torsion free. Section 4 presents simple applications. Moreover, in \ref{adiabth} we prove a weak adiabatic theorem which is applicable in the present context, justifying mathematically the consistency of the adiabatic ansatz. \ref{NCgeom} explores in more details the relation between the adiabatic formalism developed in this paper with the noncommutative geometry and treats the diffeomorphism gauge changes. \ref{higherdim} generalises the developments of the core of this paper to fast evolving spacetimes. A last appendix presents some technical calculations used for the examples.\\

\textit{From this point, throughout this paper, we consider the unit system such that $\hbar=c=G=1$ ($\ell_P=t_P=m_P=1$ for the Planck units).\\ Moreover we denote by $\Omega^n(M,\mathfrak g)$ the set of the $\mathfrak g$-valued differential $n$-forms of the classical manifold $M$, where $\mathfrak g$ is an algebra.\\ We adopt the Einstein notations with lowercase latin indices starting from 1 and greek and capital latin indices starting from 0.}

\section{Strong adiabatic transport and emergent geometry}\label{sec1}
\subsection{Dirac-Einstein equation}
In order to fix some notations and to introduce some reference equations, we start by recalling some basic facts about the Dirac equation in curved spacetime:
\begin{equation} 
  \imath \gamma^I e^\mu_I (\partial_\mu + \omega_\mu) \Psi = 0
\end{equation}
where $\Psi$ is a massless spinor field, $(\gamma^I)$ are the Dirac matrices, $(e^\mu_I)$ are the spacetime tetrads and $\omega = \omega^{IJ}_\mu \mathscr D(L_{IJ}) dx^\mu \in \Omega^1(M,\mathfrak{sl}(2,\mathbb C))$ is the Lorentz connection ($\mathscr D$ denoting a representation of the $\mathfrak{sl}(2,\mathbb C)$ algebra, $(L_{AB})$ are the generators of the Lorentz group, $M$ denotes the spacetime manifold). The metric is defined by $g_{\mu \nu} = e^I_\mu e^J_\nu \eta_{IJ}$ (where $(\eta_{IJ})$ is the Minkowski metric of the flat spacetime viewed by an ideal local Galilean observer) and the Christoffel symbols are $\Gamma^\mu_{\rho \nu} = e^\mu_I \partial_\rho e^I_\nu + e^\mu_I \omega^{IJ}_\rho e_{J\nu}$.\\

In synchronous coordinates (i.e. metric in Gaussian normal form $ds^2 = dt^2 - g_{ab} dx^a dx^b \iff e^\mu_0=\delta^\mu_0$ and $e^0_a=0$, \cite{Landau,Cook}) the Dirac equation becomes
\begin{equation}
  \imath \gamma^0 \partial_0 \Psi + \imath \gamma^i e^a_i \partial_a \Psi + \imath \gamma^I e_I^\mu \omega_\mu \Psi = 0
\end{equation}
and by adopting the Weyl representation, we have
\begin{equation} \label{Diraceq}
  \imath \partial_0 \tilde \psi - \imath \sigma^i e^a_i \partial_a \tilde \psi + \imath(\omega_0-\sigma^i e_i^a \omega_a) \tilde \psi = 0
\end{equation}
where $(\sigma^i)$ are the Pauli matrices and $\tilde \psi$ is the half-part of the Dirac spinor. Let $\psi = U_\omega^{-1} \tilde \psi$ with $U_\omega \in \mathscr D(SL(2,\mathbb C))$ be a new representation of the spinor field such that
\begin{equation} \label{reducedDiraceq}
  \imath \partial_0 \psi - \imath \sigma^i e_i^a \partial_a \psi = 0
\end{equation}
By replacing $\tilde \psi$ by $U_\omega \psi$ in the Dirac equation, we have:
\begin{eqnarray}
  & & \left[\imath(\partial_0-\sigma^i e^a_i \partial_a)U_\omega\right]\psi+U_\omega\left[\imath(\partial_0-\sigma^i e^a_i \partial_a)\psi\right] \nonumber \\
  & & \qquad + \imath(\omega_0-\sigma^ie^a_i\omega_a)U_\omega \psi = 0 \\
  & \iff & (\partial_0 - \sigma^i e_i^a) U_\omega = -(\omega_0-\sigma^i e_i^a \omega_a) U_\omega\\
  & \iff & \sigma_e^\mu \partial_\mu U_\omega = -\sigma_e^\mu \omega_\mu U_\omega \\
  & \iff & \partial_\mu U_\omega = -\omega_\mu U_\omega \label{Uomega}
\end{eqnarray}
(with $\sigma_e^\mu = \bar \sigma^I e^\mu_I$, $\bar \sigma=(\id,-\sigma^1,-\sigma^2,-\sigma^2)$). In particular, if the spacetime dependence of $\tilde \psi$ is a ``wave packet'' strongly localised (with small width) around a classical worldline $s \mapsto x(s)$, we have $U_\omega \simeq \Teg^{- \int_0^s \omega_\mu \frac{dx^\mu}{ds}}$. Note that
\begin{eqnarray}
  \omega & =& \omega^{IJ}\mathscr D(L_{IJ}) \\
  & = & \frac{\imath}{2}\left(\begin{array}{cc} \omega^{12} & \omega^{23}-\imath \omega^{31} \\ \omega^{23}+\imath \omega^{31} & -\omega^{12} \end{array} \right) \nonumber \\
  & & \quad + \frac{1}{2}  \left(\begin{array}{cc} \omega^{03} & \omega^{01}-\imath \omega^{02} \\ \omega^{01}+\imath \omega^{02} & -\omega^{03} \end{array} \right) \label{matrixomega}
\end{eqnarray}
  
\subsection{BFSS model}
We consider a stack of $N$ D0-branes in an embedding 4D Minkowski (flat) spacetime, represented by three Hermitian operators $X^i \in \mathcal L(\mathcal H)$ (where $\mathcal H$ is a separable Hilbert space, the case $\dim \mathcal H=+\infty$ being not excluded). The reduction to 3+1 dimensions (the original BFSS model presents 9+1 dimensions) results from a truncation by taking a supersymmetric orbifold $\mathbb C^3/\mathbb Z_k$ as explained in details section II of ref. \cite{Berenstein}. In this paper, we suppose that $(X^i)$ (eventually by adding $\id_{\mathcal H}$) generates a Lie algebra $\mathfrak X$. $(X^i)$ can be assimilated to coordinates operators of a noncommutative manifold $\mathscr M$ (the $C^*$-algebra of the observables of $\mathscr M$ is generated by $(X^i)$). From the viewpoint of string theory, $\mathscr M$ is a D2-brane formed by the stack. Intuitively $X^i = \left(\begin{array}{cc} x^i_1 & s^i_{12} \\ s^i_{12} & x^i_2 \end{array} \right)$ represents a stack of two D0-branes of coordinates $x_1$ and $x_2$ in the embedding space linked by a bosonic string of oscillation radii $\{|s^i_{12}|\}_i$. We consider a massless fermionic string linking $\mathscr M$ to a probe D0-brane described by a spinor: $|\psi \rrangle = |0 \rangle \otimes |\psi^{0} \rangle + |1 \rangle \otimes |\psi^{1} \rangle \in \mathbb C^2 \otimes \mathcal H$. We can interpret the component $\psi^\alpha_a$ as $\psi^\alpha(x_a)$ where $x_a$ is the pseudo-position of the $a$th D0-brane in the embedding space (the spatial delocalisation of quantum point particle is replaced by the quantum superposition of $\dim \mathcal H$ attachment points for the string). $(|0 \rangle,|1\rangle)$ are the spin states. $\mathscr M$ can be then viewed as a quantised space (in the BFSS model the time is not quantised). The fermionic string state obeys to the following noncommutative Dirac equation \cite{Berenstein,Viennot4}:
\begin{equation}\label{BFSSeq}
  \imath \dot \Psi = \sigma_i [\mathbf X^i,\Psi]
\end{equation}
with $\mathbf X^i = \left(\begin{array}{cc} X^i & 0 \\ 0 & x^i \end{array}\right) \in \mathcal L(\mathcal H\oplus\mathbb C)$, $\Psi=\left(\begin{array}{cc} 0 & |\psi\rrangle \\ \llangle \psi| & 0 \end{array}\right) \in \mathcal L(\mathcal H\otimes \mathbb C^2 \oplus \mathbb C)$, which is equivalent to
\begin{equation}\label{NCDiraceq}
  \imath |\dot \psi \rrangle = \slashed D_x |\psi \rrangle \qquad \text{with } \slashed D_x = \sigma_i \otimes (X^i - x^i)
\end{equation}
where $x$ is the pseudo-position of the probe D0-brane in the embedding space. Formally this equation can be viewed as the space quantisation of eq.(\ref{Diraceq}-\ref{reducedDiraceq}) for a flat spacetime ($e^a_i = \delta^a_i$ and $\omega_\mu=0$). 

\subsection{Emergent geometry}\label{emergeom}
As explained in the introduction, we want to define an emergent geometry by invoking a generalisation of the notion of coherent state for $\mathscr M$. This state will constitute the foundation of the adiabatic representation of the dynamics of $|\psi\rrangle$. We start then by considering the eigenequation associated with $\slashed D_x$.\\ 
Let $M_\Lambda = \{x \in \mathbb R^3, \text{s.t. } \det\slashed D_x = 0\}$ and $|\Lambda(x)\rrangle \in \mathbb C^2 \otimes \mathcal H$ be the quasi-coherent state of $\mathscr M$, i.e. the state of the probe fermionic string such that
\begin{equation} \label{NCEVeq}
  \forall x \in M_\Lambda, \quad \slashed D_x |\Lambda(x) \rrangle = 0
\end{equation}
Eq. (\ref{NCEVeq}) can be rewritten as $\sigma_i \otimes X^i|\Lambda(x)\rrangle = E(x)|\Lambda(x)\rrangle$ with $E(x)=\sigma_ix^i$ which can be viewed as a noncommutative eigenequation in the sense where the $E(x)$ is a ``noncommutative eigenvalue'' (matrix-valued eigenvalue) of $\sigma_i \otimes X^i$. Since the solutions of the noncommutative eigenequation define the classical manifold $M_\Lambda$, this one can be viewed as an ``eigenmanifold'' of $\mathscr M$. Indeed:
\begin{eqnarray}
  \slashed D_x^2 & = & \delta_{ij} (X^i-x^i)(X^j-x^j) + \imath {\varepsilon_{ij}}^k \sigma_k \otimes X^i X^j \\
   & = & |X-x|^2 + \frac{\imath}{2} {\varepsilon_{ij}}^k \sigma_k \otimes [X^i,X^j] \label{D2}
\end{eqnarray}
$|X-x|^2$ is the operator measuring the square distance between the probe D0-brane and $\mathscr M$, and for a separable state $|\psi \rrangle = |s\rangle\otimes|\phi \rangle$ we have the Heisenberg uncertainty relation:
\begin{equation}
  {\varepsilon_{ij}}^k\langle s|\sigma_k|s\rangle \Delta_\phi X^i \Delta_\phi X^j \geq \frac{1}{2} |\llangle \psi |{\varepsilon_{ij}}^k \sigma_k \otimes [X^i,X^j]|\psi \rrangle|
\end{equation}
More precisely, ref. \cite{Schneiderbauer} shows that $|\Lambda \rrangle$ minimises the \textit{displacement energy}, intuitively the ``tension energy'' of the probe fermionic string which increases if the probe D0-brane is moved away from $\mathscr M$ or if the \textit{dispersion} $\sum_i(\Delta_{\Phi} X^i)^2$ is large:
\begin{equation}
  (\Delta_\Phi X^i)^2 = \llangle \Phi|(X^i)^2|\Psi\rrangle-\llangle \Phi|X^i|\Phi\rrangle^2
\end{equation}
$|\Lambda \rrangle$ is then the state for which the probe D0-brane ``runs'' onto $M_\Lambda$ (the probe fermionic string - the test particle - reveals the geometry). Moreover,
\begin{equation}
  \frac{1}{2}\{\sigma_i,\slashed D_x\} = \delta_{ij}(X^j-x^j) \Rightarrow \llangle \Lambda(x)|X^i|\Lambda(x) \rrangle = x^i
\end{equation}
The ``location'' of the probe D0-brane on $M_\Lambda$ indicates then the mean value of the ``location of $\mathscr M$'' . From the other side, $|\Lambda(x)\rrangle$ is also the state of a fermionic string which is close to a point particle ($|X-x|^2$ is small) strongly localised at the point $x$ (the dispersion $(\Delta_\Phi X^i)^2$ is small). To reveal the geometry at the microscopic scale, it is obvious that we need to move a strongly localised point particle.\\
For these reasons, $M_\Lambda$ is the classical manifold closest to $\mathscr M$ (it is both the ``mean value'' of $\mathscr M$ and the manifold associated with the state minimising the quantum uncertainties of the space observables). It is then the space manifold of the emergent geometry at the microscopic scale (in the sense defined in introduction, i.e. in the quasi-coherent picture).\\

As previously explained, we think that the adiabatic regime is relevant to analyse the emergent gravity of matrix models. We start here by considering the main ingredients of the (strong) adiabatic transport. Let $(u^a)$ be a local coordinates system onto  $M_\Lambda$ (and $u \mapsto x(u)$ be a parametrisation of $M_\Lambda$ embedded in $\mathbb R^3$). We consider the slow transports of the probe D0-brane along $M_\Lambda$: $t\mapsto u(t)$. The adiabatic solutions of eq. (\ref{NCDiraceq}) are local sections of the line bundle associated with a $U(1)$-bundle $\mathcal P_\Lambda$ over $M_\Lambda$ endowed with a connection described by the gauge potential $A = -\imath \llangle \Lambda|d|\Lambda \rrangle \in \Omega^1(M_\Lambda,\mathbb R)$ (where $d$ stands for the exterior derivative onto $M_\Lambda$) and the local curvature $F = dA \in \Omega^2(M_\Lambda,\mathbb R)$ (as introduced in \cite{Berenstein}). Or in other words
\begin{eqnarray}
  A_a(u) & = & -\imath \llangle \Lambda|\partial_i|\Lambda\rrangle_{|x=x(u)} \frac{\partial x^i}{\partial u^a} \\
  F_{ab}(u) & = & -\imath \left[(\partial_i \llangle \Lambda|)(\partial_j|\Lambda \rrangle) - (\partial_j \llangle \Lambda|)(\partial_i|\Lambda \rrangle)\right]_{|x=x(u)} \frac{\partial x^i}{\partial u^a} \frac{\partial x^j}{\partial u^b}\\
  & = & - \imath \llangle \Lambda| \left[\frac{\partial P_\Lambda}{\partial u^a},\frac{\partial P_\Lambda}{\partial u^b}\right]|\Lambda \rrangle
\end{eqnarray}
where $\frac{\partial x^i}{\partial u^a}$ stands for the partial derivative of the parametrisation $u \mapsto x^i(u)$ and with $P_\Lambda=|\Lambda \rrangle \llangle \Lambda|$.\\
For a slow transport $t \mapsto u(t)$ describing a closed path $\mathcal C$ on $M_\Lambda$ starting and ending at $u_0$, the adiabatic solution of eq. (\ref{NCDiraceq}) is
\begin{eqnarray}\label{psigeomphase}
  |\psi \rrangle & = & e^{-\imath \oint_{\mathcal C} A} |\Lambda(x(u_0))\rrangle \\
  & = & e^{-\imath \int_{\mathcal S} F} |\Lambda(x(u_0)) \rrangle
\end{eqnarray}
where $\mathcal S$ is the surface of $M_\Lambda$ with $\mathcal C$ as border. $F$ defines a symplectic 2-form of $M_\Lambda$, and the bivector $\tilde \theta=F^{-1}$ (where the inverse denotes the matrix inverse) defines a Poisson bracket $\{f,g\}_{\tilde \theta} = \tilde \theta^{ab} \frac{\partial f}{\partial u^a} \frac{\partial g}{\partial u^b}$ ($\forall f,g \in \mathcal C^1(M_\Lambda)$).

$M_\Lambda$ is naturally endowed with the metric of its embedding in $\mathbb R^3$: $\gamma_{ab} = \frac{\partial x^i}{\partial u^a} \frac{\partial x^j}{\partial u^b}\delta_{ij}$. Moreover note that $\gamma_{ab} = \llangle \partial_a \Lambda|\slashed D_x^2|\partial_b \Lambda \rrangle$, indeed
\begin{eqnarray}
  \slashed D_x|\Lambda\rrangle = 0 \Rightarrow  \slashed D_x \partial_a |\Lambda\rrangle & = & \frac{\partial x^i}{\partial u^a}\sigma_i|\Lambda\rrangle\\
  \Rightarrow  \llangle \partial_a \Lambda|\slashed D_x^2|\partial_b \Lambda \rrangle & = & \llangle \Lambda|\sigma_i \sigma_j|\Lambda \rrangle \frac{\partial x^i}{\partial u^a}\frac{\partial x^j}{\partial u^b} \\
  & = & \delta_{ij} \frac{\partial x^i}{\partial u^a}\frac{\partial x^j}{\partial u^b} \\
  & = & \gamma_{ab}
\end{eqnarray}
We can remark that $\llangle \Lambda|\sigma^i|\Lambda\rrangle \partial_i$ is a normal vector to $M_\Lambda$.\\
The metric $\gamma_{ab}$ is the second element of the emergent geometry, implying emergent gravitational effects (in the quasi-coherent/adiabatic picture) as manifestations of the curvature of the classical space $M_\Lambda$. By eq.(\ref{D2}) we see that $\gamma_{ab} = \gamma_{ab}^{dist}+\gamma_{ab}^{NC}$ where $\gamma^{dist} = \llangle \partial_a \Lambda||X-x|^2|\partial_b \Lambda \rrangle du^a du^b$ is the quadratic variation of the mean value of the square distance observable, and where $\gamma_{ab}^{NC} = \frac{1}{4} \llangle \partial_a \Lambda| [\sigma_i,\sigma_j] \otimes [X^i,X^j]|\partial_b \Lambda \rrangle$ is the contribution of the non-commutativity of $\mathscr M$ to the emergent metric.\\

As in \cite{Klammer,Steinacker,Kunter} we can also define a Poisson structure issued from the noncommutativity of the coordinates operators $\theta^{ij}(x)=-\imath \llangle \Lambda(x)|[X^i,X^j]|\Lambda(x)\rrangle$. But in contrast with ref. \cite{Klammer,Steinacker,Kunter} we do not consider the semi-classical limit of $\theta$. To understand the role of $\theta$ in the emergent geometry defined by the quasi-coherent state, it is necessary to formalise the operations of quantisation and of classical emergence (replacing the semi-classical limit in our approach). Let $\Env(\mathfrak X)$ be the universal $C^*$-enveloping algebra of $\mathfrak X$, which plays the role in noncommutative geometry of the algebra of ``functions'' on $\mathscr M$. Let $\omega_{\Lambda,x} : \Env(\mathfrak X) \to \mathbb R$ be the normal pure state of $\Env(\mathfrak X)$ defined by $\omega_{\Lambda,x}(Y) = \tr(P_\Lambda(x)Y)$ with $P_\Lambda = |\Lambda \rrangle \llangle \Lambda|$. We can see $\omega_\Lambda : x \mapsto \omega_{\Lambda,x}$ as a map from $\Env(\mathfrak X)$ (``noncommutative functions'' of $\mathscr M$) to $\mathcal C^\infty(M_\Lambda)$ (commutative functions of $M_\Lambda$), or also $\omega_\Lambda : \mathscr M \to M_\Lambda \subset \mathbb R^3$ with $\omega_\Lambda(X^i) = x^i$ (by identifying a point of $M_\Lambda$ with its coordinates). $\omega_\Lambda$ is then the map providing the classical analogues of the quantum space observables (as their mean values in the quasi-coherent state).\\
We recall that $\Der \mathfrak X$ (the algebra of derivatives of $\mathfrak X$) plays the role of the noncommutative tangent vector fields of $\mathscr M$. The set of $\mathcal Z(\mathfrak X)$-multilinear antisymmetric maps from ($\Der \mathfrak X)^n$ to $\Env(\mathfrak X)$, $\Omega^n_\Der \mathfrak X$, plays the role of the noncommutative $n$-forms of $\mathscr M$ ($\mathcal Z(\mathfrak X)$ is the center of $\mathfrak X$). We can introduce $\omega_{\Lambda *}$ the push-forward (tangent map) and $\omega_\Lambda^*$ the pull-back (cotangent map) of $\omega_\Lambda$:
\begin{equation}
  \omega_{\Lambda *} : \begin{array}[t]{rcl} \Der \mathfrak X & \to & T\mathbb R^3_{|M_\Lambda} = TM_\Lambda \oplus NM_\Lambda \\ L & \mapsto & \tr(P_\Lambda L(X^i)) \partial_i \end{array}
\end{equation}
where $T_xM_\Lambda$ and $N_x M_\Lambda$ denote respectively the tangent space and the normal space in $\mathbb R^3$ onto $M_\Lambda$ at the point $x$,
\begin{equation}
  \omega_{\Lambda}^* : \begin{array}[t]{rcl} \Omega^1\mathbb R^3_{|M_\Lambda} & \to & \Omega^1_\Der \mathfrak X \otimes \mathcal C^\infty(M_\Lambda) \\ \eta_i dx^i & \mapsto & \eta_i \dnc X^i \end{array}
\end{equation}
where $\dnc$ is the noncommutative derivative defined by the Koszul formula: $\dnc X^i(L) = L(X^i)$. These definitions are chosen in order to $\forall \eta \in \Omega^1\mathbb R^3_{|M_\Lambda}$, $\forall L \in \Der \mathfrak X$,
\begin{equation}
  \langle \eta, \omega_{\Lambda *}(L) \rangle_{\mathbb R^3} = \omega_\Lambda\left(\langle \omega_\Lambda^*\eta,L \rangle_{\mathfrak X} \right)
\end{equation}
where $\langle \cdot,\cdot \rangle_{\mathbb R^3} : \Omega^1 \mathbb R^3 \times T\mathbb R^3 \to \mathcal C^\infty(\mathbb R^3)$ and $\langle \cdot,\cdot \rangle_{\mathfrak X} : \Omega^1_\Der \mathfrak X \times \Der \mathfrak X \to \Env(\mathfrak X)$ are the duality brackets. $\omega_\Lambda^* : dx^i \to \dnc X^i$ can be viewed as the quantisation map, so its dual $\omega_{\Lambda *}$ should be the ``classical geometry emergence map'' as the reverse operation, except that it provides vectors not tangent to $M_\Lambda$. It is then necessary to introduce a projection. Let
\begin{equation}
  \pi_\Lambda : \begin{array}[t]{rcl} T\mathbb R^3_{|M_\Lambda} & \to & TM_\Lambda \\ u^i \partial_i & \mapsto & u^i \delta_{ij} \gamma^{ab} \frac{\partial x^j}{\partial u^b} \partial_a \end{array}
\end{equation}
be the orthogonal projection onto $TM_\Lambda$:
\begin{eqnarray}
  \pi_\Lambda(\partial_a) & = & \delta_{ij} \gamma^{cb}\frac{\partial x^i}{\partial u^b} \frac{\partial x^j}{\partial u^a}\partial_c \\
  & = & \gamma^{cb} \gamma_{ba} \partial_c \\
  & = & \partial_a
\end{eqnarray}
and
\begin{equation}
  \pi_\Lambda(\llangle \Lambda|\sigma^i|\Lambda\rrangle \partial_i) = \llangle \Lambda|\sigma_i|\Lambda\rrangle \frac{\partial x^i}{\partial u^a} \gamma^{ab} \partial_b = 0
\end{equation}
(we recall that $\llangle \Lambda|\sigma^i|\Lambda\rrangle \partial_i \in NM_\Lambda$).
Let
\begin{equation}
  \pi_\Lambda^* : \begin{array}[t]{rcl} \Omega^1M_\Lambda & \to & \Omega^1\mathbb R^3_{|M_\Lambda} \\ \eta_a du^a  & \mapsto & \eta_a \gamma^{ab} \delta_{ij} \frac{\partial x^j}{\partial u^b} dx^i \end{array}
\end{equation}
be the dual map of the projection: $\forall \eta \in \Omega^1M_\Lambda$, $\forall u \in T\mathbb R^3_{|M_\Lambda}$,
\begin{equation}
  \langle \eta, \pi_\Lambda u \rangle_{\mathbb R^3} = \langle \pi_\Lambda^* \eta, u \rangle_{\mathbb R^3}
\end{equation}
We have then $\pi_\Lambda \omega_{\Lambda *} : \Der \mathfrak X \to TM_\Lambda$ and $\omega^*_\Lambda \pi^*_\Lambda : \Omega^1M_\Lambda \to \Omega^1_\Der \mathfrak X \otimes \mathcal C^\infty(M_\Lambda)$ the maps relating the (co)tangent spaces of $\mathscr M$ and $M_\Lambda$.\\ 
We return now to the role of the Poisson structure. Let $\Theta \in \Der^2 \mathfrak X$ be the fundamental biderivative of $\mathfrak X$ defined by the commutator: $\Theta(Y,Z) = -\imath [Y,Z]$. By extension of $\omega_{\Lambda *}$ onto $\Der^2 \mathfrak X$, the pull-back of $\Theta$ defines the following bivector of $\mathbb R^3$, $\theta \in (T\mathbb R^3)^{\otimes 2}_{|M_\Lambda}$:
\begin{equation}
  \theta = \omega_{\Lambda *} \Theta = \tr(P_\Lambda \Theta(X^i,X^j)) \partial_i \otimes \partial_j = -\imath \tr(P_\Lambda[X^i,X^j]) \partial_i \otimes \partial_j
\end{equation}
Let $f:x \mapsto f_0+f_ix^i$ be a linear function, we have $f(X)=f_0 \id_{\mathcal H}+f_iX^i \in \mathfrak X$, $\omega_{\Lambda,x}(f(X)) = f(x)$ and $\omega_{\Lambda}(L(f(X))) = (\omega_{\Lambda *}L)f$. Let $f$ and $g$ be two linear functions, we have
\begin{equation}
  -\imath \omega_\Lambda([f(X),g(X)]) = \theta^{ij} \partial_i f \partial_j g = \{f,g\}_\theta
\end{equation}
$\omega_\Lambda$ transforms then the commutator $\Theta$ of $\mathfrak X$ into a Poisson bracket of $\mathbb R^3$ restricted onto linear functions of $M_\Lambda$ (note that this is not a Poisson bracket of $M_\Lambda$ since it includes derivatives in $\mathbb R^3$ and not only derivatives tangent to $M_\Lambda$)\footnote{Note that we cannot extend this correspondance to $\Env(\mathfrak X)$ and analytical functions of $M_\Lambda$. For example with $f(x)=f_0+f_ix^i+f_{ij}x^ix^j$:
  \begin{equation}
    \omega_{\Lambda,x}(f(X)) = f_0 + f_ix^i +f_{ij} \llangle \Lambda(x)|X^iX^j|\Lambda(x)\rrangle \not= f(x)
  \end{equation}
  and
  \begin{eqnarray} (\omega_{\Lambda *}L_Y)f(x) & = & f_i \llangle \Lambda(x)|[Y,X^i]|\Lambda(x)\rrangle + f_{ij} \llangle \Lambda(x)|[Y,X^i]|\Lambda(x)\rrangle x^j \nonumber \\
    & & + f_{ij}x^i \llangle \Lambda(x)|[Y,X^j]|\Lambda(x)\rrangle \\
    & \not= & f_i \llangle \Lambda(x)|[Y,X^i]|\Lambda(x)\rrangle + f_{ij} \llangle \Lambda(x)|[Y,X^i]X^j|\Lambda(x)\rrangle \nonumber \\
    & & + f_{ij} \llangle \Lambda(x)|X^i[Y,X^j]|\Lambda(x)\rrangle=\omega_{\Lambda,x}(L_Yf(X))
  \end{eqnarray}
  (with $L_Y(Z)=[Y,Z]$); and then
  \begin{equation}
    -\imath \omega_\Lambda([g(X),f(X)]) = -\imath g_i \omega_\Lambda(L_{X^i} f(X)) \not= \{g,f\}_\theta
  \end{equation}
  (with $g$ a linear function)}.
Note moreover that:
\begin{eqnarray}
  & & \frac{1}{2}\{\sigma_k,\slashed D_x^2\} = \sigma_k \otimes |X-x|^2 + \frac{\imath}{2} {\varepsilon^{ij}}_k [X^i,X^j] \\
  & \Rightarrow &  -\imath [X^i,X^j] = \varepsilon^{ijk} \sigma_k \otimes |X-x|^2-\frac{1}{2}\varepsilon^{ijk}\{\sigma_k,\slashed D_x^2\} \\
  & \Rightarrow & \theta^{ij}(x) = \varepsilon^{ijk} \llangle \Lambda(x)|\sigma_k \otimes |X-x|^2|\Lambda(x)\rrangle
\end{eqnarray}
$\pi_\Lambda \theta = \theta^{ij} \gamma^{ac} \gamma^{bd} \delta_{ik} \delta_{jl} \frac{\partial x^k}{\partial u^c} \frac{\partial x^l}{\partial u^d} \partial_a \otimes \partial_b$ is a bivector of $M_\Lambda$, but it does not define a Poisson structure on $M_\Lambda$ because it does not satisfy the Jacobi identity.\\
To summarise and compare with the semi-classical emergence approach we have:
\begin{center}
  \begin{tabular}{r|ccc}
    & \textit{quantum geometry} & & \textit{classical geometry} \\
    \hline
    \textit{semi-classical} & $(\Env(\mathfrak X),-\imath[\cdot,\cdot]) $ & $\overset{N \to +\infty}{\longrightarrow}$ & $(\mathcal C^\infty(M),\{\cdot,\cdot\})$ \\
    \textit{approach} & $(\Env(\mathfrak X),-\imath[\cdot,\cdot])$ & $\underset{\text{Moyal quantisation}}{\longleftarrow}$ &  $(\mathcal C^\infty(M),\{\cdot,\cdot\})$ \\
    \hline
    \textit{quasi-coherent} & $(\Env(\mathfrak X),\Theta) $ & $\overset{\omega_{\Lambda}}{\longrightarrow}$ & $(\mathcal P^1_{|M_\Lambda}(\mathbb R^3),\theta)$ \\
    \textit{approach} & $\Der \mathfrak X$ & $\overset{\pi_\Lambda \omega_{\Lambda *}}{\longrightarrow}$ & $TM_\Lambda$ \\
    & $\Omega^1_{\Der} \mathfrak X$ & $\underset{\omega_\Lambda^* \pi_\Lambda^*}{\longleftarrow}$ & $\Omega^1 M_\Lambda$
  \end{tabular}
\end{center}
$\mathcal P^1_{|M_\Lambda}(\mathbb R^3)$ denoting the linear functions of $\mathbb R^3$ restricted to $M_\Lambda$ and $N$ being the number of strings. The semi-classical approach focus mainly on the algebras of functions, but in the quasi-coherent approach we focus mainly on the tangent vectors fields and the differential forms. This is because a part of the emergent geometry is inherited from the geometric properties of the adiabatic bundle which are encoded in the differential forms $A$ and $F$. The discussion concerning the role of $A$ is in the next section.\\

It can be surprising to have two different structures, $\theta^{ij}$ associated with the commutator of $\mathfrak X$, and $\tilde \theta^{ab}$ associated with the Berry curvature, which seems to be not directly related. To understand this point, it is necessary to compare the present formalism with usual general relativity. In a synchronous frame, the space defined by $t=c^{st}$ is a curved three dimensional commutative manifold $M$. An ideal local Galilean observer see a flat three dimensional commutative space $\mathbb R^3$. The triads $(e^a_i)$ (spacial part of the tetrads) define a map $e:T\mathbb R^3 \to TM$ (with $e(\eta^i \partial_i) = \eta^i e^a_i \partial_a$) transforming the tangent vectors viewed by the ideal observer to true tangent vectors of the curved space. In particular the reduced Dirac equation in a curved space, eq. (\ref{reducedDiraceq}), can be rewritten as:
\begin{equation}
  \imath \partial_0 \psi - \imath \sigma^i e(\partial_i) \psi = 0
\end{equation}
and is then just the transformation of the Dirac equation in a flat space. Moreover, $(e^i_a)$ define a dual inverse map $e^{*-1}: \Omega^n \mathbb R^3 \to \Omega^n M$ (with $e^{*-1}(\eta_i dx^i) = \eta_i e^i_a du^a$) transforming infinitesimal variations viewed by the ideal observer to true infinitesimal variations of the curved space. In particular, the space metric can be rewritten as $\gamma_{ab} du^a du^b = e_a^i e_b^j \delta_{ij} du^a du^b = e^{*-1}(\delta_{ij}dx^i dx^j)$. The duality condition $\langle dx^i, \partial_j \rangle_{\mathbb R^3} = \delta^i_j$ induces by invariance that $\langle e^{*-1}(dx^i),e(\partial_j)\rangle_M = \delta^i_j \iff e^i_a e^a_j = \delta^i_j$, the dual triads $(e^a_i)$ are then the inverse matrix of the triads $(e^i_a)$.\\
In emergent geometry, $\mathfrak X$ defines a non-commutative manifold $\mathscr M$ and a target space $\mathbb R^3$; the emergent manifold $M_\Lambda$ is a curved two dimensional commutative manifold. The triads are defined by $\pi_\Lambda \omega_{\Lambda *} : \Der \mathfrak X \to TM_\Lambda$ the map transforming ``noncommutative tangent vectors'' of $\mathscr M$ to tangent vectors of the emergent manifold:
\begin{eqnarray}
& &  e^a_i \partial_a = -\imath \pi_\Lambda \omega_{\Lambda *} (L_{X^i}) \\
  & \iff & e^a_i = \delta_{il} \theta^{lj} \delta_{jk} \gamma^{ab} \frac{\partial x^k}{\partial u^b}
\end{eqnarray}
(where $L_\bullet:\mathfrak X \to \Der \mathfrak X$ denotes the Lie derivative, $L_Y(Z) = [Y,Z]$).
In particular, eq. (\ref{BFSSeq}):
\begin{equation}
  \imath \partial_0 \Psi - \sigma^iL_{\mathbf X^i} \Psi=0
\end{equation}
generates a reduced Dirac equation on $M_\Lambda$:
\begin{eqnarray}
  & & \imath \partial_0 \psi - \sigma^i \pi_{\Lambda} \omega_{\Lambda *}(L_{\mathbf X^i}) \psi = 0\\
  & \iff & \imath \partial_0 \psi - \imath \sigma^i e_i^a \partial_a \psi = 0
\end{eqnarray}
In contrast with the case of usual general relativity, the dual triads $(\tilde e^i_a)$ are not the inverse matrix of the triads $(e^a_i)$. Firstly, $\dim \mathfrak X = 3$ and $\dim M_\Lambda=2$, so $(\tilde e^i_a)$ and $(e^a_i)$ are not square matrices. Moreover, there is an important difference between inner derivatives as $L_{X^i} \in \Der \mathfrak X$ and outer derivatives as $\frac{\partial}{\partial x^i} \in T\mathbb R^3$, which can be expressed by the failure of the duality relation $\langle \mathbf dX^i, L_{X^j} \rangle_{\mathfrak X} \not= \delta^i_j$, indeed:
\begin{equation}
  \langle \mathbf dX^i, L_{X^j} \rangle_{\mathfrak X} = [X_j,X^i] \iff \omega_\Lambda(\langle \mathbf dX^i, L_{X^j} \rangle_{\mathfrak X}) = \imath \delta_{jk} \theta^{ki}
\end{equation}
So we should have $\tilde e^i_a e^a_j = \delta_{jk} \theta^{ki}$. Because of $\delta_{ip} \gamma^{cd} \frac{\partial x^p}{\partial u^d} \frac{\partial x^i}{\partial u^a} = \gamma^{cd} \gamma_{da} = \delta^c_a$ we have
\begin{eqnarray}
  & & \frac{\partial x^i}{\partial u^a} e^a_j = \delta_{jl} \theta^{li}  \\
  & \Rightarrow &  e^a_j = \delta_{jl} \theta^{li} \delta_{ip} \gamma^{ad} \frac{\partial x^p}{\partial u^d}
\end{eqnarray}
We have then $\langle \tilde e_\Lambda^* (\mathbf dX^i),\pi_\Lambda \omega_{\Lambda *}(-\imath L_{X^j}) \rangle_{M_\Lambda} = \omega_{\Lambda}(\langle \mathbf dX^i, -\imath L_{X^j} \rangle_{\mathfrak X})$, with $\tilde e_\Lambda^*(\mathbf dX^i) = \frac{\partial x^i}{\partial u^a} du^a$. It follows that the choice of defining the dual triads $(\tilde e^i_a = \frac{\partial x^i}{\partial u^a})$ to obtain the embedding metric of $M_\Lambda$ whereas the triads $(e^a_i)$ are defined with the commutation relations of $\mathfrak X$, ensures the consistency of the duality relations.

\subsection{Role of the geometric phase in matrix model}\label{geomphaseingravity}
Now we want to examine the role of the geometric phase (or of its generator $A$) in the emergent geometry, since until now we have examined only the aspects related to the quasi-coherent picture and not the aspects directly related to the adiabatic assumption. We start by a re-examination of the dynamical equations. The dynamics of the fermionic string in the strong adiabatic approximation is then $|\psi(t)\rrangle = e^{-\imath \int_0^t A_a \dot u^a dt} |\Lambda(x(u(t))) \rrangle$. By injecting this expression in eq.(\ref{NCDiraceq}) we find
\begin{equation}
  \imath \partial_0|\Lambda \rrangle - \slashed D_x |\Lambda \rrangle + A_a \dot u^a |\Lambda \rrangle = 0
\end{equation}
Consider the Dirac-Einstein equation (in the representation eliminating the Lorentz connection) with $e^0_i=0$, $e^0_0=1$ but with $e^a_0 \not=0$:
\begin{equation}
  \imath \partial_0 \psi - \imath \sigma^i e^a_i \partial_a \psi + \imath e^a_0 \partial_a \psi = 0
\end{equation}
In the WKB approximation $\psi \simeq \zeta \sqrt{\varrho} e^{\imath S}$ with $S$ the classical action, $\zeta \in \mathbb C^2$ ($\|\zeta\|=1$) and $\varrho = \psi^\dagger \psi$. If $\psi$ is strongly localised around a classical trajectory ($\partial_a \sqrt \varrho = \frac{u_a(t)-u_a}{\Delta u^2} \sqrt \varrho \simeq 0$) we have
\begin{equation}
  \imath \partial_0 \psi - \imath \sigma^i e^a_i \partial_a \psi - e^a_0 k_a \psi = 0
\end{equation}
with $k_a = \partial_a S$. By identifying $k^a$ with $\dot u^a$, the comparison of the equations provides
\begin{equation}
  e^a_0 = - A^a =  -\imath \gamma^{ab} \llangle \Lambda|\partial_b|\Lambda \rrangle
\end{equation}
The generator of the geometric phase is then the triad $(e^a_0)$. The analysis of the previous section provided only the emergent geometry for the space part ($e^a_i$). The analysis of the geometric phase provides then the emergent geometry for the time part $(e^a_0)$. To continue the analysis, it is necessary to find the associated dual tetrads and so to extend to spacetime the analysis of the (co)tangent vector fields.\\
The $C^*$-algebra of time-dependent observables of $\mathscr M$ (or of the Heisenberg representations of the observables of $\mathscr M$) is $\Env \mathfrak X \otimes \mathcal C^\infty(\mathbb R)$. Its algebra of derivatives is $\Der (\mathfrak X \otimes \mathcal C^\infty(\mathbb R)) = \Der \mathfrak X \oplus T\mathbb R$ and $\Omega^1_{\Der}(\mathfrak X \otimes \mathcal C^\infty(\mathbb R)) = \Omega^1_{\Der}\mathfrak X \oplus \Omega^1\mathbb R$. It follows that $\langle \dnc X^i,\partial_0 \rangle_{\mathfrak X \otimes \mathcal C^\infty(\mathbb R)} = \langle dx^0,L_{X^j} \rangle_{\mathfrak X \otimes \mathcal C^\infty(\mathbb R)} = 0$ and $\langle dx^0,\partial_0 \rangle_{\mathcal C^\infty(\mathbb R)} = 1$. It follows that the dual triads must satisfy
\begin{eqnarray}
  \tilde e^i_\mu e^\mu_j = \delta_{kj} \theta^{ki} & \Rightarrow & 0+ \tilde e^i_a e^a_j = \delta_{kj} \theta^{ki} \Rightarrow \tilde e^i_a = \frac{\partial x^i}{\partial u^a} \\
  \tilde e^i_\mu e^\mu_0 = 0 & \Rightarrow & \tilde e^i_0 - \tilde e^i_a A^a = 0 \Rightarrow \tilde e^i_0 = A^a \frac{\partial x^i}{\partial u^a} \\
  \tilde e^0_\mu e^\mu_j = 0 & \Rightarrow & 0 + \tilde e^0_a e^a_j = 0 \Rightarrow \tilde e^0_a = 0 \\
  \tilde e^0_\mu e^\mu_0 = 1 & \Rightarrow & \tilde e^0_0 - \tilde e^0_a A^a = 1 \Rightarrow \tilde e^0_0 = 1
\end{eqnarray}
Finally, the metric of the spacetime $\mathbb R \times M_\Lambda$ is
\begin{eqnarray}
  ds^2 & = & \tilde e^I_\mu \tilde e^J_\mu \eta_{IJ} du^\mu du^\nu \\
  & = & dt^2-(A^adt+du^a)(A^bdt+du^b) \gamma_{ab}\\
  & = & (1-A^aA^b \gamma_{ab}) dt^2 - 2 A^a \gamma_{ab} du^b dt - \gamma_{ab} du^a du^b \\
  & = & (1-A_a A_b \gamma^{ab}) dt^2 + 2 A_a du^a dt - \gamma_{ab} du^a du^b
\end{eqnarray}
which is the usual metric for a spacetime defined by a foliation of spacelike surfaces where $A^a \partial_a$ is the shift vector of the foliation ($dt A^a(u) \partial_a = \overrightarrow{P'P''}$ where $P$ is the point of coordinates $u$ on the leaf $M_\Lambda^{(t)}$, $P'$ is the point at the intersection of the normal vector at $P$ and $M_\Lambda^{(t+dt)}$; and $P''$ is the intersection of the line with constant spacial coordinates passing by $P$ and $M_\Lambda^{(t+dt)}$) \cite{Nakamura,Gourgoulhon}. $(A_a,\gamma_{ab})$ constitutes then the variables of the ADM formulation of the gravity (the lapse function is identically equal to $1$ here). This is in accordance with the fact that $A$ is the connection of the $U(1)$-principal bundle $\mathcal P_\Lambda$ describing the adiabatic transport and defines then the holonomy in this bundle. $|\Lambda \rrangle$ is a local section of the line bundle associated with $\mathcal P_\Lambda$ which is associated with the eigenmanifold $M_\Lambda$. The parallel transport of $|\Lambda \rrangle$ along an infinitesimal displacement $du$ (during the interval $dt$) is then $|\Lambda(u)\rrangle \mapsto e^{-A_adu^a} |\Lambda(u+du)\rrangle$.\\
The role of the geometric phase generator $A$ for the point of view of the non-commutative geometry is considered \ref{NCgeom}.

\subsection{Epistemological digression}\label{epistemo}
In comparison with other approaches of quantum gravity, emergent gravity presents a very special point of view. We cannot try to quantify the ``gravitational field'' ($(e^\mu_I,\omega^{IJ}_\mu)$ or gravitational waves field $h_{\mu \nu}$ in $g_{\mu \nu} = \eta_{\mu \nu}+h_{\mu\nu} +\mathcal O(h^2)$) nor the curved spacetime. In emergent gravity we start from the epistemological argument that the quantisation rules used in quantum mechanics are written from the point of view of an observer (they consist to describe the quantum \textbf{observables} as operators from the classical observables) and not from the point of view of the quantised physical system. In a classical general relativity context, this corresponds to the point of view of an ideal local Galilean observer. But such an observer seems to see a flat Minkowski spacetime (the spacetime of metric $\eta_{IJ}$). It is then epistemologically consistent to consider the quantisation of the observed flat Minkowski spacetime, with for example the rules used in the BFSS model: $x^i \leadsto X^i$ and $\partial_i \leadsto L_{X^i}$. And it can be consistent to consider, as in the BFSS ansatz, that time is not quantised since this is the one measured by the classical observer's clock and not the proper time of the test particle (the synchronous frame being defined in the neighbourhood of the observer). At a second step, micro gravitational effects result from the noncommutativity of the quantised spacetime, as viewed on the emergent curved manifold $M_\Lambda$ (``mean value of the space''): properties issued from the noncommutativity become micro gravitational effects in the sense of the Ehrenfest theorem. So gravity (at the Planck scale) is not a force, is not directly the manifestation of the spacetime curvature, but is the manifestation of spacetime noncommutative structure (the curvature emerging at the quantum averaging or at the semi-classical limit for the macroscopic scale). For this point of view, the quantisation consists well to the transformation of the Dirac operator (for massless fermions) in flat spacetime to the BFSS Dirac operator: $\imath \partial_0 - \imath \sigma^i \partial_i \leadsto \imath \partial_0 - \sigma^i L_{\mathbf X^i}$. A Dirac-Einstein operator of the emerging gravity is then obtained by application of the ``geometric emergence map'' $\pi_\Lambda \omega_{\Lambda *} : \Der \mathfrak X \to TM_\Lambda$ (which can be viewed as dual to the quantisation map $\omega_{\Lambda}^* : dx^i \mapsto \mathbf dX^i$). The emergent manifold $M_\Lambda$ being associated with an eigenvector of the BFSS Dirac operator $\slashed D_x$, it is consistent that the dynamics of the test particle (the probe D0 brane) revealing the emergent geometry be an adiabatic dynamics (it is only in the adiabatic regime that the quantum dynamics remains projected onto an instantaneous eigenvector). Adiabatic dynamics is characterised by a $U(1)$-principal bundle over $M_\Lambda$ with the connection defined by the gauge potential $A \in \Omega^1(M_\Lambda,\mathbb R)$. It is natural in these conditions to identify this $U(1)$-gauge theory with an emerging gravity gauge theory. More precisely, it is well known that quantum dynamics can be assimilated to the transport of a particle in a space (here $M_\Lambda$) where the field $F$ (usually assimilated to a virtual magnetic field) lives. The singularities of $F$ (corresponding to the level crossings) are usually assimilated to virtual Dirac magnetic monopoles. In this analogy, the classical analogue of the virtual particle is submitted to a ``pseudo Laplace force'' associated with $F$ which can be viewed here as an emergent gravity force. From the principle of the Einstein's elevator, the free falling ideal local Galilean observer does not see the gravity force, and then this one defines its frame. This can epistemologically explain that the local Galilean frame in the emergent geometry is such that $\tilde e^i_0 = A^a \frac{\partial x^i}{\partial u^a}$. The role of the gauge structure associated with the Berry connection which is unclear in \cite{Berenstein} is then explained in the present context. We will see section \ref{examples} that the interpretation of $F$ as a pseudo magnetic field in $M_\Lambda$ is also consistent in matrix model.\\
Nevertheless we can see that the adiabatic approximation does not seem to make emerge a Lorentz connection. One could imagine that this one is totally determined by the knowledge of $g_{\mu\nu}$ and $\tilde e^i_\mu/e^\mu_i$ by supposing that the geometry is torsion free (Levi-Civita connection). If it is the case at the macroscopic scale for the usual general relativity theory, nothing force that it is the case at the microscopic scale. It is usual to consider a non zero torsion in string theory \cite{Hammond, Li, Hammond2, Sarkar} for example. It can be then not impossible that the emergent geometry be not torsion free. We need then to see how a Lorentz connection can emerge from the adiabatic approach.

\section{Weak adiabatic transport and emergent geometry}
In this section we want to show that it is possible to define an emergent Lorentz connection with an adiabatic transport. In strong adiabatic transport we have a $\mathfrak u(1)$-connection defined by the generator of the geometric phase $A$. We have viewed that this one generates the shift vector of the emergent geometry. The idea of this section consists to find a $\mathfrak{gl}(2,\mathbb C)$-connection (sum of the $\mathfrak u(1)$-connection of the previous section and of a $\mathfrak{sl}(2,\mathbb C)$-connection playing the role the Lorentz connection) generalising the geometric phases. The representation of the group $GL(2,\mathbb C) \subset \mathfrak M_{2\times 2}(\mathbb C)$ is a set of spin operators. We search then an operator-valued geometric phase. In previous works \cite{Viennot1,Viennot2,Viennot3}, we have prove that such an operator-valued geometric phase arises in weak adiabatic regimes. Translated to the present context, such a regime is adiabatic with respect to the quantum space $\mathscr M$ but not with respect to spin degree of freedom (the operator valued geometric phase modifying the spin state with respect to the eigenstate); whereas the strong adiabatic regime used in the previous section is adiabatic with respect both $\mathscr M$ and the spin.

\subsection{Emergent metric and emergent Lorentz connection}
Now we want to consider a weak adiabatic regime as in \cite{Viennot1,Viennot2,Viennot3}. To this, it is necessary to change the point of view concerning the mathematical structure of the state space. Firstly, we consider $\mathbb C^2 \otimes \mathcal H$ not as an Hilbert space, but as a left Hilbert $C^*$-module over the $C^*$-algebra $\mathfrak a = \mathcal L(\mathbb C^2) = \mathfrak M_{2 \times 2}(\mathbb C)$ (spin observable algebra). The representation of $\mathfrak a$ onto $\mathbb C^2 \otimes \mathcal H$ being defined by:
\begin{equation}
  S|\psi\rrangle = S \otimes \id_{\mathcal H} |\psi \rrangle, \quad \forall S \in \mathfrak a$, $\forall |\psi \rrangle \in \mathbb C^2 \otimes \mathcal H
\end{equation}
  The inner product of the $C^*$-module ($\langle \cdot | \cdot \rangle_* : (\mathbb C^2\otimes \mathcal H)^2 \to \mathfrak a$) is defined by $\forall |\psi\rrangle,|\phi\rrangle \in \mathbb C^2 \otimes \mathcal H$:
\begin{equation}
  \langle \phi|\psi \rangle_* = \tr_{\mathcal H} |\psi \rrangle \llangle \phi|
\end{equation}
where $\tr_{\mathcal H}$ denotes the partial trace over $\mathcal H$. Eq.(\ref{NCEVeq}) can be viewed as a noncommutative eigenequation in the sense of this $C^*$-module, $|\psi\rrangle$ is eigenvector of $\sigma_i \otimes X^i$ with eigenvalue $E(x)=\sigma_ix^i \in \mathfrak a$. Note that the square $C^*$-module norm $\|\psi\|_*^2 = \tr_{\mathcal H} |\psi \rrangle \llangle \psi|$ is the density matrix (mixed state) of the fermionic string's spin (if $\|\psi\|_*^2$ is not pure, it represents the spin state entangled with states of $\mathscr M$, the noncommutative manifold playing the role of a quantum environment for the spin of the fermionic string).\\
For a transport $t \mapsto u(t)$ on $M_\Lambda$ in a weak adiabatic regime (see \ref{adiabth} and \cite{Viennot1,Viennot2,Viennot3}) the solution of eq.(\ref{NCDiraceq}) is
\begin{equation}
  |\psi(t)\rrangle = \Ted^{-\imath \int_0^t \mathfrak A_a(u(t'))\dot u^a(t')dt'} |\Lambda(x(u(t)))\rrangle
\end{equation}
where $\Ted$ denotes the time counter-ordered exponential and $\mathfrak A \in \Omega^1(M_\Lambda,\mathfrak a)$ is the operator-valued geometric phase generator defined by
\begin{equation}
  \mathfrak A \rho_\Lambda = -\imath \langle \Lambda |\partial_i|\Lambda\rangle_* \frac{\partial x^i}{\partial u^a} du^a
\end{equation}
where $\rho_\Lambda = \langle \Lambda|\Lambda\rangle_* = \tr_{\mathcal H}|\Lambda \rrangle \llangle \Lambda|$ is the density matrix of the spin of the fermionic string in the quasi-coherent state. The operator-valued geometric phase $\Ted^{-\imath \int_0^t \mathfrak A_a\dot u^adt'} \in \mathfrak a$ represents changes of spin orientation occurring during the adiabatic transport of $|\Lambda \rrangle$ viewed, in accordance with the interpretation of section \ref{emergeom}, as the transport of a strongly localised particle in the curved spacetime $M_\Lambda \times \mathbb R$ (this one does not affect the localisation shape of the particle but can induce changes of the spin state by Thomas or de Sitter precession phenomenon for example). The adiabatic solutions are local sections of a vector bundle associated with a non-abelian bundle gerbe (a categorical principal bundle) $\mathscr P_\Lambda$ over $M_\Lambda$ endowed with a 2-connection with gauge 2-potential (a gauge potential specific to bundle gerbes) $\mathfrak A$ (see \cite{Viennot1} for the details). Note that $\tr (\rho_\Lambda \mathfrak A) = A$ and then the quantum statistical mean value of $\mathfrak A$ corresponds to the discussion about the strong adiabatic regime. It is then interesting to split the gauge potential onto abelian and purely non-abelian parts: $\mathfrak A = \mathfrak A^{off} + \frac{1}{2} \tilde A$ with $\tr \mathfrak A^{off} = 0$ and $\tilde A = \tr \mathfrak A$. This splitting is associated with the central extension of groups $1 \to \mathbb C^* \to GL(2,\mathbb C) \to GL(2,\mathbb C)/\mathbb C^* = SL(2,\mathbb C) \to 1$, where $\mathfrak A^{off}$ is a gauge potential of a $SL(2,\mathbb C)$-principal bundle over $M_\Lambda$ supporting the non-abelian bundle gerbe $\mathscr P_\Lambda$ (see \cite{Mackaay,Murray}).\\

We have then
\begin{equation}
  |\psi(t)\rrangle = e^{-\frac{\imath}{2}\int_0^t \tilde A_a \dot u^a dt'} \Ted^{-\imath \int_0^t \mathfrak A_a^{off} \dot u^a dt'} |\Lambda(x(u(t)))\rrangle
\end{equation}
Along a specific path $t\mapsto u^a$, eq.(\ref{Uomega}) becomes $\dot U_\omega = -\omega_a \dot u^a U_\omega$ and then $U_\omega = \Teg^{-\int_0^t \omega_a \dot u^a dt'}$. But following eq.(\ref{reducedDiraceq}), $|\psi \rrangle= U_\omega^{-1} |\tilde \psi \rrangle$. By identification, it follows that
\begin{equation}
  U_\omega = \Teg^{-\int_0^t \omega_a \dot u^a dt'} = \left(\Ted^{-\imath \int_0^t \mathfrak A_a^{off} \dot u^a dt'}\right)^{-1} = \Teg^{\imath \int_0^t \mathfrak A_a^{off} \dot u^a dt'}
\end{equation}
and then that $\omega = -\imath \mathfrak A^{off}$. The non-abelian gauge potential of the weak adiabatic regime generates a Lorentz connection.

Finally the emergent geometry is described by a metric associated with the abelian gauge potential $\frac{1}{2}\tilde A \in \Omega^1(M_\Lambda,\mathbb R)$ and a Lorentz connection defined by a the non-abelian gauge potential $\mathfrak A^{off} \in \Omega^1(M_\Lambda,\mathfrak{sl}(2,\mathbb C))$. In \cite{Miao}, it is proved that gravity in noncommutative space needs to extend the gauge symmetry from $SL(2,\mathbb C)$ to $GL(2,\mathbb C)$, adding a $U(1)$-gauge potential to the Lorentz connection. This one is in the present model the Berry phase generator $\tilde A$. This gauge theory in $M_\Lambda$ provides then from the Lorentz connection during the central extension of the symmetry group needed by the noncommutative origin of the gravity.\\
The curvature tensor ${R^{IJ}}_{ab}$ is obtained by
\begin{equation}
  R = \frac{1}{2} {R^{IJ}}_{ab} \mathscr D(L_{IJ}) du^a \wedge du^b = d\omega + \omega \wedge \omega
\end{equation}
  and is related to the curving of the 2-connection of $\mathscr P_\Lambda$ (see \cite{Viennot1,Mackaay,Murray}): $B = d\mathfrak A - \mathfrak A \wedge \mathfrak A = -R+\frac{1}{2} \tilde F$ (where the Berry curvature $\tilde F=d\tilde A$ is the fake curvature of the 2-connection).\\ 

Since $\llangle \Lambda(x)|X^i|\Lambda \rrangle = x^i$ (with minimal dispersion $\sum_i (\Delta_\Lambda X^i)^2$), $|\Lambda(x)\rrangle$ can be viewed as a fermion state strongly localised at $x$. In this strongly localised state, $U_\omega$ appears as the evolution operator of the fermion's spin. The dynamics of the spin is then governed by the Hamiltonian $-\imath \omega_a \dot u^a$. This is in accordance with previous results concerning the dynamics of localised qubit in curved spacetimes \cite{Viennot5} (a detailed study of this spin dynamics can be found in \cite{Viennot5}). The localised qubit model comes from a WKB analysis. We find the heuristic argument of the introduction, the (weak) adiabatic approximation at the spacetime microscopic scale (i.e. in a quasi-coherent state) is similar to the semi-classical limit at the spacetime macroscopic scale.\\

At this stage, the details of the emergent geometry depends on the separable or entangled nature of $|\Lambda \rrangle$.

\subsubsection{Separable quasi-coherent state case:}
In this section we suppose that the quasi-coherent state is separable: $|\Lambda(x)\rrangle = |0_x\rangle \otimes |\lambda_0(x)\rangle \in \mathbb C^2 \otimes \mathcal H$. In that case $\rho_\Lambda = |0_x \rangle \langle 0_x|$ is not invertible, and is its own pseudo-inverse. We have then
\begin{equation}
  \mathfrak A = -\imath |d0_x\rangle \langle 0_x| - \imath \langle \lambda_0|d|\lambda_0 \rangle |0_x \rangle \langle 0_x|
\end{equation}
(due to the non-invertible character of $\rho_\Lambda$ this expression is a specific gauge choice of $\mathfrak A$, see \cite{Viennot1} for a discussion about the gauge changes of gauge 2-potentials). It follows that
\begin{eqnarray}
  \tilde A = \tr \mathfrak A & = & -\imath \langle 0_x|d|0_x\rangle - \imath \langle \lambda_0|d|\lambda_0\rangle \\
  & = & A
\end{eqnarray}
  which defines then the same metric than the strong adiabatic regime.\\

Let $|1_x\rangle \in \mathbb C^2$ be a state orthogonal to $|0_x\rangle$ (forming both a basis of $\mathbb C^2$).
\begin{eqnarray}
  \mathfrak A^{off} & = &  \mathfrak A - \frac{1}{2}A \\
  & = & \frac{\imath}{2} \llangle \Lambda|d|\Lambda\rrangle(|1_x\rangle\langle 1_x|-|0_x\rangle\langle 0_x|) -\imath \langle 1_x|d|0_x\rangle|1_x\rangle\langle 0_x|
\end{eqnarray}
Let $|a_x\rangle = a_x^0|0\rangle + a_x^a|1\rangle$ be the decomposition of the $x$-dependent basis into the canonical basis of $\mathbb C^2$. We choose $|1_x\rangle = \bar 0_x^1|0\rangle - \bar 0_x^0|1\rangle$. In this canonical basis we have
\begin{eqnarray}
  \mathfrak A^{off} & = & -\frac{\imath}{2} \left(\begin{array}{cc} \bar 0_x^0 d0_x^0-\bar 0_x^1d0_x^1 & 2\bar 0_x^1 d0_x^0 \\ 2\bar 0_x^0 d0_x^1 & -\bar 0_x^0d0_x^0+\bar 0_x^1d0_x^1 \end{array} \right) \nonumber \\
  & & \quad -\frac{\imath}{2}\langle \lambda_0|d|\lambda_0\rangle \left(\begin{array}{cc} |0_x^0|^2-|0_x^1|^2 & 20_x^0\bar 0_x^1 \\ 2\bar 0_x^00_x^1 & |0_x^1|^2-|0_x^0|^2 \end{array} \right) \\
  & = & -\frac{\imath}{2}\left(\begin{array}{cc} \llangle \Lambda|\sigma^3d|\Lambda\rrangle & \llangle \Lambda|(\sigma^1-\imath \sigma^2)d|\Lambda \rrangle \\ \llangle \Lambda|(\sigma^1+\imath \sigma^2)d|\Lambda \rrangle & -\llangle \Lambda|\sigma^3d|\Lambda\rrangle \end{array} \right)
\end{eqnarray}
It follows that the Lorentz connection is
\begin{eqnarray}
  \omega^{ij}_a & = & - {\varepsilon^{ij}}_k \Re \llangle \Lambda|\sigma^k\partial_a|\Lambda \rrangle \\
  \omega^{0i}_a & = & \Im \llangle \Lambda|\sigma^i\partial_a|\Lambda \rrangle
\end{eqnarray}

\subsubsection{Entangled quasi-coherent state case:}
Now we suppose that the quasi-coherent state is entangled and we write $|\Lambda \rrangle = |0\rangle \otimes |\Lambda^0 \rangle + |1\rangle \otimes |\Lambda^1 \rangle = (|\Lambda^0 \rangle\ |\Lambda^1\rangle)$ in the canonical basis of $\mathbb C^2$. We have then
\begin{eqnarray}
  \rho_\Lambda & = & \left(\begin{array}{cc} \langle \Lambda^0|\Lambda^0\rangle & \langle \Lambda^1|\Lambda^0 \rangle \\ \langle \Lambda^0|\Lambda^1 \rangle & \langle \Lambda^1|\Lambda^1 \rangle \end{array} \right) \\
  \rho_\Lambda^{-1} & = & \frac{1}{|\Lambda|} \left(\begin{array}{cc} \langle \Lambda^1|\Lambda^1\rangle & -\langle \Lambda^1|\Lambda^0 \rangle \\ -\langle \Lambda^0|\Lambda^1 \rangle & \langle \Lambda^0|\Lambda^0 \rangle \end{array} \right)
\end{eqnarray}
with $|\Lambda|=\det \rho_\Lambda \not=0$ because $|\Lambda\rrangle$ is entangled. It follows that
\begin{eqnarray} \label{Aentangled}
  \mathfrak A & = & -\imath \tr_{\mathcal H}|d\Lambda\rrangle \llangle \Lambda| \rho_\Lambda^{-1} \\
  & = & -\imath \left(\begin{array}{cc} \langle \Lambda^0_*|d|\Lambda^0\rangle & \langle \Lambda_*^1|d|\Lambda^0 \rangle \\ \langle \Lambda_*^0|d|\Lambda^1 \rangle & \langle \Lambda_*^1|d|\Lambda^1 \rangle \end{array} \right)
\end{eqnarray}
where the dual quasi-energy state $\llangle \Lambda_*|$ is defined by
\begin{eqnarray}
  \langle \Lambda_*^0| & = & \frac{1}{|\Lambda|} (\langle \Lambda^1|\Lambda^1\rangle \langle \Lambda^0| - \langle \Lambda^0|\Lambda^1\rangle \langle \Lambda^1|)\\
  \langle \Lambda_*^1| & = & \frac{1}{|\Lambda|} (-\langle \Lambda^1|\Lambda^0\rangle \langle \Lambda^0| + \langle \Lambda^0|\Lambda^0\rangle \langle \Lambda^1|)
\end{eqnarray}
Note that $(\langle \Lambda_*^0|,\langle \Lambda_*^1|)$ is bi-orthorgonal to $(|\Lambda^0\rangle, |\Lambda^1\rangle)$, i.e. $\langle \Lambda_*^\alpha|\Lambda^\beta\rangle = \delta^{\alpha \beta}$. Eq.(\ref{Aentangled}) is then similar to a non-abelian geometric phase generator associated with a non-hermitian Hamiltonian (see for example \cite{Leclerc}, and see \cite{Viennot5} for an application to a localised spin adiabatically transported in a curved spacetime). The difference is that here the dissipative process onto the spin is not induced by the non-selfadjointness of the spin Hamiltonian but by the entanglement between the spin with the noncommutative manifold.\\

It follows that the metric is defined with $\frac{1}{2} \tilde A = \frac{1}{2} \tr \mathfrak A = -\frac{\imath}{2} \llangle \Lambda_*|d|\Lambda \rrangle$. When the density matrix is the microcanonical distribution $\rho_\Lambda = \frac{1}{2} \id$ ($|\Lambda\rrangle$ is maximally entangled) then $\frac{1}{2} \tilde A = A$ and the metric is the same than in the strong adiabatic regime.
$\mathfrak A^{off} = \mathfrak A - \frac{1}{2} \tilde A$ and by comparison with eq.(\ref{matrixomega}) the Lorentz connection is
\begin{eqnarray}
  \omega^{ij}_a & = & - {\varepsilon^{ij}}_k \Re\llangle \Lambda_*|\sigma^k\partial_a|\Lambda \rrangle \\
  \omega^{0i}_a & = & \Im \llangle \Lambda_*|\sigma^i \partial_a|\Lambda \rrangle
\end{eqnarray}

The expressions in the two cases are totally similar, with in the separable case $\llangle \Lambda_*|\equiv\llangle \Lambda|$.  

\subsection{Emergent geometry: Christoffel symbols and torsion}\label{torsion}
The tetrads of the emergent geometry have be found in section \ref{geomphaseingravity} and the Lorentz connection have be found in the previous section. We are now able to compute the Christoffel symbols of the emergent geometry:
\begin{eqnarray}
  \Gamma^\mu_{0\nu} & = & 0 \\
  \Gamma^0_{b0} & = & \omega^0_{bj} \tilde e^j_0 \\
   \Gamma^a_{b0} & = & \Xi^a_e\left(e^e_i \partial_b \tilde e^i_0 + e^e_i \omega^i_{b0} + e^e_i \omega^i_{bj} \tilde e^j_0\right) \\
  \Gamma^0_{bc} & = & \omega^0_{bj} \tilde e^j_c \\
  \Gamma^a_{bc} & = & \Xi^a_e \left(e^e_i \partial_b \tilde e^i_c + e^e_i \omega^i_{bj} \tilde e^j_c \right)
\end{eqnarray}
where $\Xi$ is the inverse matrix of $(e^a_i \tilde e^i_b)$ (since $e^a_i \tilde e^i_b \not= \delta^a_b$, it is necessary to renormalise the product between triads and dual triads) and with
\begin{equation}
  \begin{array}{ll}
    \tilde e^i_a = \frac{\partial x^i}{\partial u^a} & e^a_i = {\varepsilon_{ij}}^l \llangle \Lambda|\sigma_l \otimes |X-x|^2|\Lambda \rrangle \gamma^{ab} \frac{\partial x^j}{\partial u^b} \\
    e^a_0 = \frac{\imath}{2} \gamma^{ab} \llangle \Lambda_*|\partial_b|\Lambda \rrangle & \tilde e^i_0 = \frac{\imath}{2} \gamma^{ab} \llangle \Lambda_*|\partial_b|\Lambda \rrangle \frac{\partial x^i}{\partial u^a}
  \end{array}
\end{equation}
We see that the emergent geometry is in general not torsion free: $T^\alpha_{\beta \gamma} = \Gamma^\alpha_{\beta \gamma} - \Gamma^\alpha_{\gamma \beta} \not=0$. More precisely, since ${\varepsilon^{ij}}_k \sigma^k = \frac{1}{2 \imath}[\sigma^i,\sigma^j]$ we have
\begin{eqnarray}
  T^i_{bc} & = & \frac{1}{2}\mathrm{Im} \tr \left([\sigma^i,\tau_c] \mathfrak A_b - [\sigma^i,\tau_b] \mathfrak A_c \right) \label{torsion1}\\
  & = & \frac{1}{2} \mathrm{Im} \tr\left(\sigma^i([\tau_c,\mathfrak A_b]-[\tau_b,\mathfrak A_c])\right) \label{torsion2}\\
  T^0_{bc} & = & \mathrm{Im} \tr\left(\tau_b \mathfrak A_c - \tau_c \mathfrak A_b \right) \label{torsion3}\\
  T^0_{b0} & = & -\frac{1}{2} \mathrm{Im}\tr (\underline A \mathfrak A_b) \label{torsion4}\\
  T^i_{b0} & = & \frac{1}{2} \partial_b A^i - \mathrm{Im} \tr(\sigma^i \mathfrak A_b) + \frac{1}{4} \mathrm{Im} \tr\left([\sigma^i,\underline A]\mathfrak A_b\right) \label{torsion5}
\end{eqnarray}

with $\tau_a \equiv \sigma_i \frac{\partial x^i}{\partial u^a}$, $\underline A \equiv \gamma^{ab} A_a \tau_b$, $A^i \equiv \gamma^{ab} A_a \frac{\partial x^i}{\partial u^b}$ and $T^a_{\beta \gamma} = \Xi^a_e e^e_i T^i_{\beta \gamma}$.\\
Since these formulae are strongly dependent on the spin degree of freedom of the fermionic string, we can think that the origin of this non-zero torsion is the same than in Einstein-Cartan theory. By considering $\Gamma^a_{b0} = T^a_{b0}$, we see that this part of the torsion essentially depends on two elements: the Lorentz connection (which is by construction related to the effects of the spin of the fermionic string) and derivatives of $\tilde A$ (which permits to think that the torsion is related to $\tilde F$). In section \ref{examples}, we will see that the effects of the torsion in some examples are consistent with the usual interpretation of the Berry curvature as a pseudo magnetic field in $M_\Lambda$. In \cite{Steinacker2}, a torsion on a quantum spacetime described by the IKKT matrix model is considered at the semi-classical limit. A comparison between this torsion and the one of the present work can found \ref{NCtorsion}.\\

The extrinsic curvature of $M_\Lambda$ (see \cite{Nakamura,Gourgoulhon}) is $K_{ab} = \frac{1}{2}(\partial_a A_b - \Gamma^c_{ba} A_c + \partial_b A_a - \Gamma^c_{ab} A_c$). And in a same way, we can compute the Riemann tensor with:
\begin{eqnarray}
  {\mathcal R^0}_{0cd} & = & {R^0}_{jcd} \tilde e^j_0 + {R^0}_{0cd}\\
  {\mathcal R^0}_{bcd} & = & {R^0}_{jcd} \tilde e^j_b\\
  {\mathcal R^a}_{0cd} & = & e^a_0{R^0}_{jcd}\tilde e^j_0+ \Xi^a_e(e^e_i{R^i}_{0cd}+e^e_i{R^i}_{jcd}\tilde e^j_0)+e^a_0{R^0}_{0cd})\\
  {\mathcal R^a}_{bcd} & = & e^a_0{R^0}_{jcd}\tilde e^j_b+\Xi^a_ee^e_i{R^i}_{jcd}\tilde e^j_b
\end{eqnarray}
with ${R^I}_{Jcd} = {\left[\partial_c \omega_d - \partial_d \omega_c + \frac{1}{2}[\omega_c,\omega_d] \right]^I}_J$.

\subsection{Generalisation}
Until now we have considered an algebra $\mathfrak X$ of three dimensions, involving an eigenmanifold $M_\Lambda$ of two dimensions. To obtain a three dimensional eigenmanifold it is necessary to consider an algebra with four generators. Moreover we have considered only time independent generators $\{X^i\}$, but in some cases these operators can be time dependent with fast evolutions hampering the use of an adiabatic limit. In fact these two questions are related. To apply an adiabatic limit with fast evolutions, we use the Schr\"odinger-Koopman approach \cite{Viennot6} in which the degrees of freedom of the fast evolution becomes new quantum variables associated with a new observable, which is here a fourth coordinate operator $X^4$. With this one, we can have a three dimensional eigenmanifold. It is natural than $X^4$ emerges from the dynamics of $\{X^i\}_{i=1,2,3}$ without introducing another operator. We start with a quantum space $\mathscr M$ of three dimensions (generated by $\{X^i\}_{i=1,2,3}$), which defines a three dimensional classical space $M_\Lambda$, the fourth operator $X^4$ resulting from the inner dynamics of $\mathscr M$. In this approach, the time dimension is split into a time of the slow evolution which remains classical, and a time of the fast evolution which is quantised as $X^4$. So the signature of the new embedding spacetime is $(+,-,-,-,+)$. As interpreted in section \ref{epistemo}, $t$ the time of slow evolutions can be viewed as the time indicated by the clock of the ideal Galilean observer. $X^4$ as the manifestation of the fast inner evolution of $\mathscr M$ can be interpreted as a quantum time observable.\\

We have considered only massless fermions in the previous sections. With massive fermions, in the Weyl representation considered here, the evolution of the state $|\psi\rrangle$ is submitted to fast chiral oscillations. We can also used the same approach to treat this fast oscillations by introducing a quantum variable $\theta$ on the oscillation phase circle $\mathbb S^1$ and an associated fourth operator $-\imath \omega \partial_\theta$ (where $\omega/2$ is the mass) which can be interpreted as the coordinate operator on a compact dimension.\\

These two generalisations are treated in details in \ref{higherdim}. An interesting fact with the generalisation to massive fermions, is that the approach deals not only with the ground quasi-energy state, solution of $\slashed D_x |\Lambda_0\rrangle = 0$, as in the previous sections, but also with generalised excited states, solutions of $\slashed D_x |\Lambda_n \rrangle = n\omega |\Lambda_n \rrangle$, where $n \in \mathbb N$ defines a mode of chiral oscillations. As $|\Lambda_0\rrangle$ these states define classical geometries $(M_{\Lambda,n},g_n,\omega_n)$ (eigenmanifold, spacetime metric, Lorentz connection) in the same manner that in the previous sections. We can interpret $(M_{\Lambda,n},g_n,\omega_n)$ as the geometry ``warped'' by the mass $\omega$ of the test particle when this one is in the mode $n$ of chiral oscillations. The justification of the use of the excited states can be found in \ref{chiral}.

\section{Examples}\label{examples}
Now we apply the formalism developed in the previous sections to specific examples of Lie algebras $\mathfrak X$. The structure of $\mathfrak X$ is defined by the contents of the spacetime in the neighbourhood of the local Galilean observer, but this is not the subject of this paper, where we suppose that $\mathfrak X$ is previously known.\\

The noncommutative manifold $\mathscr M$ defines a quantum spacetime (with no quantisation of time). To compare it with the usual theory of gravity (and then with classical geometries), the quasi-coherent state, as being the quantum state of $\mathscr M$ closest to a classical state (since it minimises the quantum dispersion), provides the classical geometry closest to $\mathscr M$. This geometry is described by three classical entities:
\begin{itemize}
\item the eigenmanifold $M_\Lambda$, described by the set of noncommutative eigenvalues $E(x) = x^i \sigma_i$: $M_\Lambda = \{x \in \mathbb R^3, \det(\sigma_i \otimes X^i - E(x))=0\}$;
\item the metric $g_{\mu \nu}$ on $M_\Lambda$, provided by the derivatives of the associated eigenvector (the quasi-coherent state):
  \begin{eqnarray}
    g_{00} & = & 1-\hat A_a\hat A_b \gamma^{ab} \qquad \text{with } \hat A_a = \llangle \Lambda_*|\partial_a\Lambda \rrangle \\
    g_{0a} & = & \hat A_a \\
    g_{ab} & = & \gamma_{ab} = \llangle \partial_a \Lambda|\slashed D_x^2|\partial_b \Lambda \rrangle
  \end{eqnarray}
\item the Lorentz connection $\omega_a^{IJ}$ (or equivalently the Christoffel symbols $\Gamma^\mu_{\rho \nu}$), defined with $\llangle \Lambda_*|\sigma^i| \partial_a \Lambda \rrangle$.  
\end{itemize}
Since the Lorentz connection does not correspond to the Levi-Civita connection associated with $g_{\mu \nu}$, the geometry is not torsion free ($\Gamma^\mu_{\rho \nu} \not= \Gamma^\mu_{\nu \rho}$). This is the geometry defined by the classical quantities $(M_\Lambda,g_{\mu \nu},\omega^{IJ}_a)$ that we call emergent geometry, since it emerges from the purely quantum eigenequation $\slashed D_x |\Lambda \rrangle = 0$. In the following examples, we solve this equation for different models of $\mathscr M$ (different algebras $\mathfrak X$) to compute these classical geometric entities. In the spirit of the meaning of the adiabatic transport of the probe brane, the geometry can be revealed by the movement of a test classical particle. A useful manner to illustrate the emergent geometry $(M_\Lambda,g_{\mu \nu},\omega^{IJ}_a)$ consists then to draw the involved geodesics. Due to the non vanishing torsion $T^\mu_{\rho \nu} = \Gamma^\mu_{\rho \nu} - \Gamma^\mu_{\nu \rho}$, there are two notions of geodesics. The minimising geodesics are defined as curves on $M_\Lambda \times \mathbb R$ which are of minimal length (with respect to $g_{\mu \nu}$) between two any closed points. Since $\gamma_{ab}$ is the metric induced by the embedding of $M_\Lambda$, these ones are obvious with regard to the shape of $M_\Lambda$ in $\mathbb R^3$. The auto-parallel geodesics are defined as curves on $M_\Lambda \times \mathbb R$ such that the tangent vectors at any two infinitesimally closed points be parallel. They are computed with the non symmetric Christoffel symbols $\Gamma^\mu_{\nu \rho}$. In torsion free geometry, the two notions are the same. The comparison of the auto-parallel geodesics with the minimising geodesics provide then the direct effect of the torsion which is the main souvenir of the quantum nature of $\mathscr M$ in the emergent classical geometry of the quasi-coherent picture. The deviation of the geodesics by the torsion can be then interpreted as the irreducible spacetime quantum effect, since this one remains in the state closest to a classical one.\\

The methodology to treat a concrete example is the following. After the definition of the algebra $\mathfrak X$ defining the quantum spacetime $\mathbb R \times \mathscr M$, we compute the quantum entities $(|\Lambda\rrangle, \hat A, \mathfrak A)$ with the quasi-coherent state solution of $\slashed D_x|\Lambda \rrangle=0$ (or its generalisation for excited states in the case a massive fermion). $\hat A$ and $\mathfrak A$ are respectively geometric phase generators of the strong and weak adiabatic transports. The three kinds of data $(|\Lambda\rrangle, \hat A, \mathfrak A)$ encode the behaviour of the quantum spacetime $\mathbb R \times \mathscr M$ at the adiabatic limit. Since this behaviour is the quantum regime closest to a classical one, we can compute the associated classical entities $(M_\Lambda,g_{\mu \nu},\omega^{IJ}_a)$ and study with the auto-parallel geodesics the effects of the torsion, signature of the quantum nature of $\mathscr M$ surviving in the more classical regime.

\subsection{The noncommutative plane}\label{NCplane}
We consider the case where $\mathscr M$ is defined by $X^1 = \frac{a+a^+}{2}$, $X^2=\frac{a-a^+}{2\imath}$ and $X^3=0$ where $a$ and $a^+$ are the harmonic oscillator annihilation and creation operators ($\dim \mathcal H = +\infty$).
\subsubsection{Ground state:}
The solutions of eq.(\ref{NCEVeq}) are (see \ref{CCR})
\begin{equation}
  M_\Lambda = \mathbb R^2 = \{(x^1,x^2,0), x^1=\Re(\alpha), x^2=\Im(\alpha)\}_{\alpha \in \mathbb C}
\end{equation}
\begin{equation}
  |\Lambda(\alpha)\rrangle = |0\rangle \otimes |\alpha \rangle
\end{equation}
where $|\alpha \rangle = e^{-|\alpha|^2/2} \sum_{n=0}^{+\infty} \frac{\alpha^n}{\sqrt{n!}} |n\rangle$ is an harmonic oscillator coherent state \cite{Perelomov,Puri}. It follows that $A = -\imath \langle \alpha|\partial|\alpha \rangle - \imath \langle \alpha |\bar \partial|\alpha \rangle$ (where $\partial$ and $\bar \partial$ denote the Dolbeault derivative operators).
\begin{eqnarray}
  \frac{\partial}{\partial \bar \alpha} |\alpha \rangle & = & - \frac{\alpha}{2} |\alpha \rangle \\
  \frac{\partial}{\partial \alpha}|\alpha \rangle & = & -\frac{\bar \alpha}{2}|\alpha \rangle + e^{-|\alpha|^2/2} \sum_{n=1}^{+\infty} \frac{\alpha^{n-1}}{\sqrt{(n-1)!}} |n\rangle \\
  & = & -\frac{\bar \alpha}{2}|\alpha \rangle + a^+|\alpha \rangle
\end{eqnarray}
We have then
\begin{eqnarray}
  \langle \alpha|\bar \partial|\alpha \rangle & = & - \frac{\alpha}{2} d\bar \alpha\\
  \langle \alpha|\partial|\alpha \rangle & = & (-\frac{\bar \alpha}{2} + \langle \alpha|a^+|\alpha\rangle)d\alpha =  \frac{\bar \alpha}{2} d\alpha
\end{eqnarray}
  and finally
\begin{equation}
  A = -\imath \frac{\bar \alpha d\alpha - \alpha d\bar \alpha}{2} = x^1dx^2-x^2dx^1
\end{equation}
\begin{equation}
  F = \partial A + \bar \partial A = \frac{\imath}{2} d\alpha \wedge d\bar \alpha = dx^1 \wedge dx^2
\end{equation}
The metric of $\mathbb R \times M_\Lambda$ is then
\begin{equation}
  ds^2 = (1-(x^1)^2-(x^2)^2)dt^2+2x^1dx^2dt-2x^2dx^1dt - d(x^1)^2-d(x^2)^2
\end{equation}
Moreover we have simply $\mathfrak A = A|0\rangle \langle 0|$ and $\mathfrak A^{off} = \frac{A}{2} \sigma^3$, and the unique non-zero component of the Lorentz connection is $\omega^{03}_a = A_a$.\\
The triads and the dual triads are 
\begin{equation}
  \begin{array}{lllll}
    e^0_0=1 & e^0_i=0 & e^1_0=-x^2 & e^2_0=x^1 & e^a_i = \frac{1}{2} {\varepsilon_i}^a \\
    \tilde e^0_0=1 & \tilde e^0_a=0 & \tilde e^1_0 = x^2 & \tilde e^2_0=-x^1 & \tilde e^i_a = \delta^i_a
  \end{array}
  \end{equation}
    The non-zero Christoffel symbols are only $\Gamma^1_{20}= -\Gamma^2_{10}=1$. It follows that the (auto-parallel) geodesic equations are
\begin{equation}
  \left\{ \begin{array}{l} \ddot t=0 \\ \ddot x^1+\dot x^2 \dot t=0 \\ \ddot x^2-\dot x^1 \dot t=0 \end{array} \right.
\end{equation}
(where the dots denote here the derivative with respect to the proper time $s$). It follows that
\begin{equation}
  \left\{ \begin{array}{l} 
    t(s)=\beta s \\
    x^1(s)=\frac{v_0}{\beta}\sin(\beta s+\varphi)+x^1(0)-\frac{\dot x^2(0)}{\beta} \\
    x^2(s)=-\frac{v_0}{\beta}\cos(\beta s+\varphi)+x^2(0)-\frac{\dot x^1(0)}{\beta}
  \end{array} \right.
  \end{equation}
with $v_0=\sqrt{\dot x^1(0)+\dot x^2(0)}$ and $\tan \varphi= \frac{\dot x^2(0)}{\dot x^1(0)}$. The geodesics on $M_\Lambda$ are then circles: the effect of the non-zero torsion $T^a_{b0} = \Gamma^a_{b0}$ twists the geodesics.\\
The geodesic equations can be rewritten as $\ddot x^a= \beta {F^a}_b \dot x^b$ which are the equations of a classical particle of charge $-\beta$ moving on the plane $M_\Lambda$ where lives a normal magnetic field $F$, in accordance with the usual interpretation of the Berry curvature. This example is interesting since because of the zero curvature ($M_\Lambda$ is flat) it exhibits a pure effect of torsion. In the general case, the torsion has complicated expressions eq.(\ref{torsion1}-\ref{torsion5}) due to the non-trivial geometry and the non-trivial operator-valued Berry phase generator. But in this pure torsion example, the torsion reduced to be $T^a_{b0}={F^a}_b$ the Berry curvature, providing an interpretation of this one in this matrix model.

\subsubsection{Excited states:} We consider now a fermion of mass $\frac{\omega}{2}$ and then the excited states of $\slashed D_x$ (see \ref{chiral}). The solutions of $\slashed D_x|\Lambda_{p,n}\rrangle = p\omega |\Lambda_{p,n} \rrangle$ with $p \in \mathbb Z^*$ are (see \ref{CCR2}):
\begin{equation}
  M_{\Lambda,p,n} = \mathbb R^2 = \{(x^1,x^2,\sqrt{p^2\omega^2-n}), x^1=\Re(\alpha), x^2=\Im(\alpha)\}_{\alpha \in \mathbb C}
\end{equation}
with $n \in \{0,1,...,\lfloor p^2\omega^2 \rfloor\}$ which is the degeneracy index of the Floquet value $p\omega$.
\begin{equation}
  |\Lambda_n \rrangle = \frac{1}{\sqrt 2} (|0\rangle \otimes |n\rangle_\alpha + |1\rangle \otimes |n-1\rangle_\alpha)
\end{equation}
(the excited quasi-coherent states do not depend of $p$), where $|n\rangle_\alpha = \frac{(a^+-\bar \alpha)^n}{\sqrt{n!}}|\alpha \rangle$ ($\alpha = x^1 + \imath x^2$) [in the case $n=0$, we have $|\Lambda_0(w=1)\rrangle = |0\rangle \otimes |\alpha \rangle$]. The excited quasi-coherent states are maximally entangled and the density matrices of the spin are the microcanonical distribution $\rho_{\Lambda,n} = \frac{1}{2} \id$ ($n \not=0$).
\begin{eqnarray}
  \frac{\partial}{\partial \alpha}|n\rangle_\alpha & = & -\frac{\bar \alpha}{2}|n\rangle_\alpha + a^+|n\rangle_\alpha \\
  & = & \frac{\bar \alpha}{2}|n\rangle_\alpha + b^+_\alpha|n\rangle_\alpha \\
  & = & \frac{\bar \alpha}{2}|n\rangle_\alpha + \sqrt{n+1}|n+1\rangle_\alpha
\end{eqnarray}
\begin{eqnarray}
  \frac{\partial}{\partial \bar \alpha}|n\rangle_\alpha & = & -\frac{\sqrt{n}(a^+-\bar \alpha)^{n-1}}{\sqrt{(n-1)!}}|\alpha\rangle + \frac{(a^+-\bar \alpha)^n}{\sqrt{n!}} \frac{\partial}{\partial \bar \alpha}|\alpha \rangle \\
  & = & -\sqrt{n}|n-1\rangle_\alpha - \frac{\alpha}{2}|n\rangle_\alpha
\end{eqnarray}
It follows that the Berry phase generator is
\begin{eqnarray}
  A_n & =& -\imath \llangle \Lambda_n|d|\Lambda_n\rrangle \\
  & =&  -\imath \frac{\bar \alpha d\alpha - \alpha d\bar \alpha}{2} \\
  & = & x^1dx^2-x^2dx^1
\end{eqnarray}
  and then the metric of $M_{\Lambda,n,p} \times \mathbb R$ are the same than for the ground state.\\
  $\llangle \Lambda_{n*}| = \sqrt{2}(\langle 0|\otimes{_\alpha}\langle n|+\langle 1|\otimes{_\alpha}\langle n-1|)$, and then
\begin{eqnarray}
  \llangle \Lambda_{n*}|\sigma^i \frac{\partial}{\partial \alpha}|\Lambda_n\rrangle & = & \sqrt n \langle 0|\sigma^i|1\rangle\\
  \llangle \Lambda_{n*}|\sigma^i \frac{\partial}{\partial \bar \alpha}|\Lambda_n\rrangle & = & -\sqrt n \langle 1|\sigma^i|0\rangle
\end{eqnarray}
It follows that
\begin{eqnarray}
  \llangle \Lambda_{n*}|\sigma^i \partial_1|\Lambda_n\rrangle & = & \frac{\sqrt n}{2} (\langle 0|\sigma^i|1\rangle - \langle 1|\sigma^i|0\rangle) \\
  \llangle \Lambda_{n*}|\sigma^i \partial_2|\Lambda_n\rrangle & = & \imath \frac{\sqrt n}{2} (\langle 0|\sigma^i|1\rangle + \langle 1|\sigma^i|0\rangle)
\end{eqnarray}
  The Lorentz connection is then $\omega^{ij}_n=0$, $\omega^{01}_n = \sqrt n dx^2$, $\omega^{02}_n = -\sqrt n dx^1$ and $\omega^{03}_n=0$. Finally the non-zero Christoffel symbols are
\begin{equation}
  \begin{array}{ll}
    \Gamma^0_{10} = \sqrt n x^1 & \Gamma^0_{20} = \sqrt n x^2 \\ \Gamma^0_{12} = -\sqrt n & \Gamma^0_{21} = \sqrt n \\
    \Gamma^2_{10} = -1+\sqrt n & \Gamma^1_{20} = 1-\sqrt n \\
  \end{array}
\end{equation}
The (auto-parallel) geodesics are then solutions of
\begin{equation}
  \left\{\begin{array}{c} \ddot t + \frac{\sqrt{n}}{2}\frac{d|x|^2}{ds} \dot t = 0 \\ \ddot x^1+(1-\sqrt n) \dot x^2 \dot t = 0 \\ \ddot x^2-(1-\sqrt n) \dot x^1 \dot t = 0 \end{array} \right.
\end{equation}
The geodesics are straight lines on the plane $M_\Lambda$ for $n=1$. In that case, the fact to find usual geodesics is accidental, resulting from the killing of the contribution to the torsion of $A$ by the contribution of $\mathfrak A^{off}$. We recall that $n$ is a label defining an excited state, which is, in accordance with the interpretation of \ref{chiral}, a quasi-coherent state in a mode of chiral oscillations of the massive fermion.\\
We have $\dot t = \beta e^{-\frac{\sqrt n}{2}|x(s)|^2}$ and the geodesics in the plane are drawn figure \ref{geoplan}.
\begin{figure}
  \includegraphics[width=5cm]{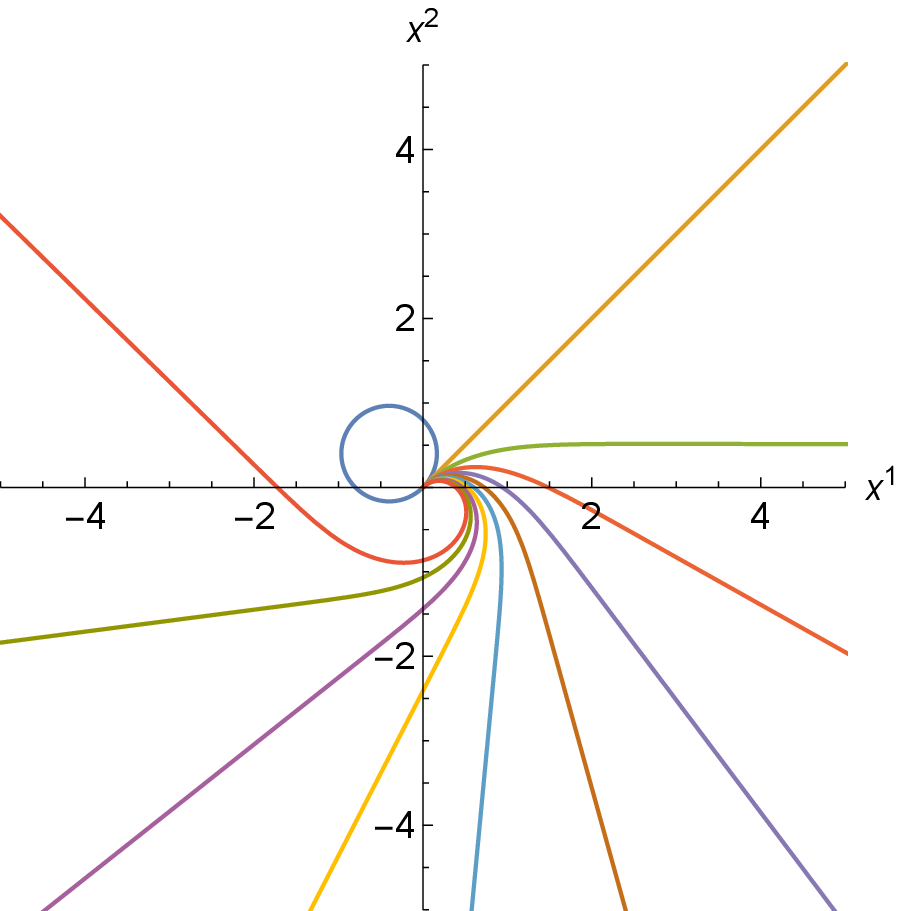} \includegraphics[width=5cm]{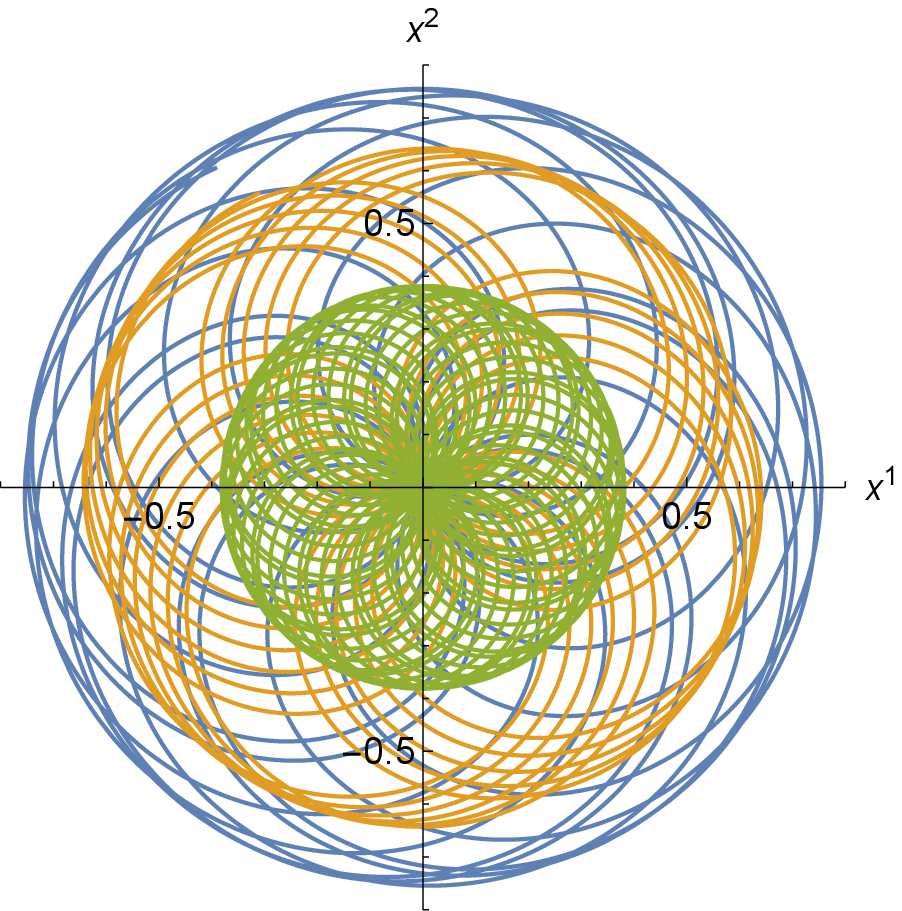}\\
  \caption{\label{geoplan} Geodesics in the emerging gravity for the noncommutative plane for $n=0$ to $n=10$ (left) and for $n=11, 12$ and $20$ (right) with $x^1(0)=x^2(0)=0$, $\dot x^1(0)=\dot x^2(0)=0.4$ and $\beta=1$.}
\end{figure}
For small values of $|x|^2$, the effect of the torsion is anew similar to that of a magnetic field normal to $M_\Lambda$. For large values of $|x|^2$ (with $n\not=0$), $\ddot x^a \simeq 0$ and the geodesics become straight lines (the effect of the torsion becomes negligible). $M_\Lambda \times \mathbb R$ is the emergent spacetime at the microscopic scale. But its space part is non-compact, it is then consistent than at large distance on $M_\Lambda$ (reaching the macroscopic scale) we find the behaviour of a (torsion free) classical spacetime. This is for large values of $|x|^2$ that we find the purely classical behaviour. We can interpret this by the structure of the coherent state $|\alpha \rangle = e^{-|\alpha|^2/2} \sum_{m=0}^{+\infty} \frac{\alpha^m}{\sqrt{m!}} |m\rangle$. For small values of $|\alpha|$, a few number of modes $|m\rangle$ are significantly populated. We can roughly  identify each mode to a string linking two D0-branes of the stack, in the sense than $a = \left(\begin{array}{cccc} \ddots & & &  \\ & 0 & \sqrt m & \\ & 0 & 0 & \\ & & & \ddots \end{array}\right)$ can be interpreted as representing a string of oscillation radius $\sqrt m$ linking two D0-branes at 0 in the complex plane. For $|\alpha|$ small, the number of strings contributing to the coherent geometry is then small. Conversely, for $|\alpha|$ large the number of modes significantly populated in $|\alpha\rangle$ is large and the number of strings contributing to the coherent geometry is large. In this sense, the macroscopic limit is identified with $|x|\to \infty$. 

\subsection{The noncommutative Minkowski space}
We consider anew the noncommutative plane but with time dependent operators: $X^1(t)=w(t)\frac{a+a^+}{2}$, $X^2(t)=w(t)\frac{a-a^+}{2\imath}$ and $X^3=0$. The dynamics of $\mathscr M$ is described by
\begin{eqnarray}
    \ddot X^i - [X_j,[X^i,X^j]]=0 & \iff & \ddot w=0 \\
    & \iff & \left\{\begin{array}{c} \dot w = p \\ \dot p = 0 \end{array} \right.
\end{eqnarray}
$p$ being a constant of motion, we can consider it as a parameter. Because we want to consider the case of a fast evolution of $\mathscr M$, $p$ is large. We promote $w$ as a new quantum variable associated with a fourth coordinate observable defined as being the Koopman generator (see \ref{fastdyn}). This one is $X^4 = -\imath p \frac{\partial}{\partial w}$, with $\Sp(X^4) = \mathbb R$ ($X^4 e^{-\imath x^4 \frac{w}{p}} = x^4 e^{-\imath x^4 \frac{w}{p}}$), in the Hilbert space $L^2(\mathbb R,\varrho(w)dw)$. Since the phase space is not compact, we do not have a natural choice for the density $\varrho$. In order to the Koopman eigenfunctions $e^{-\imath x^4 \frac{w}{p}}$ be normalisable, it needs that $\int_{-\infty}^{+\infty} \varrho(w)dw = 1$. Physically, the emergent geometry is valid in the neighbourhood measured by the Galilean observer in which the synchronous coordinates are defined. We can then consider that $\varrho$ is uniform on a domain corresponding to this neighbourhood and outside it decreases fastly.\\
We consider the total Dirac operator $\slashed D_x^K = \sigma_I \otimes (X^I-x^I)$ (with $I \in \{1,2,3,4\}$ and $\sigma_4=\id_2$). The quasi-coherent states are then such that
\begin{eqnarray}
  \slashed D_x^K |\Lambda_n \rrrangle = 0 & \iff & \imath p \frac{\partial}{\partial w} |\Lambda_n(w)\rrangle = \slashed D_x(w) |\Lambda_n(w) \rrangle \\
  & \iff & \Teg^{-\frac{\imath}{p} \int_1^w \slashed D_x(w) dw} |\Lambda_n(w=1)\rrangle
\end{eqnarray}
with
\begin{equation}
  \slashed D_x(w) = \left(\begin{array}{cc} -x^3-x^4 & wa^+-\bar \alpha \\ wa - \alpha & x^3-x^4 \end{array} \right)
\end{equation}
and with by continuity for $p \to 0$:
\begin{equation}
  \slashed D_x(w=1)|\Lambda_n(w=1)\rrangle = 0
\end{equation}
The solutions of this equation are (see \ref{CCR2}):
\begin{equation}
  M_{\Lambda,n} = \{x \in \mathbb R^4, x^4 = \pm \sqrt{(x^3)^2+n} \} \qquad n\in \mathbb N
\end{equation}
\begin{equation}
  |\Lambda_n(w=1)\rrangle = \frac{1}{\sqrt 2} (|0\rangle \otimes |n\rangle_\alpha + |1\rangle \otimes |n-1\rangle_\alpha)
\end{equation}
where $|n\rangle_\alpha = \frac{(a^+-\bar \alpha)^n}{\sqrt{n!}}|\alpha \rangle$ ($\alpha = x^1 + \imath x^2$) [in the case $n=0$, we have $|\Lambda_0(w=1)\rrangle = |0\rangle \otimes |\alpha \rangle$]. The quasi-coherent eigenspace is degenerate, and it follows that we have a collection of disconnected three dimensional eigenmanifolds $\{M_{\Lambda,n}\}_{n\in \mathbb N}$ in $\mathbb R^4$. The hypersurfaces $M_{\Lambda,n}$ can be viewed as ``parallel'' brane worlds in the ``bulk'' $\mathbb R^4$ with matter confined onto them.\\
$M_{\Lambda,n}$ is parametrised by $\mathbb R^3 \ni u \mapsto x(u) = (u^1,u^2,u^3,\pm \sqrt{(u^3)^2+n}) \in \mathbb R^4$. The metric of $M_\Lambda$ is then
\begin{eqnarray}
  d\ell^2_n & = & - \frac{\partial x^I}{\partial u^a}\frac{\partial x^J}{\partial u^b} \eta_{IJ} du^a du^b \\
  & = & d(u^1)^2 + d(u^2)^2 + \frac{d(u^3)^2}{1+(u^3)^2/n}
\end{eqnarray}
For $n=0$ we have $d\ell^2_0 =  d(u^1)^2 + d(u^2)^2$, and $M_{\Lambda,0}$ can be considered as being two dimensional and is the eigenmanifold of the static noncommutative plane treated at section \ref{NCplane}. $\lim_{n \to +\infty} M_{\Lambda,n}$ is the flat space, $M_{\Lambda,n}$ is also the flat space with the coordinates change $\tilde u^3 = \sqrt{n} \mathrm{ar}\!\sinh(u^3/\sqrt{n})$.\\

The Berry phase generator is
\begin{eqnarray}
  A_n & = & -\imath \lllangle \Lambda_n|d|\Lambda_n \rrrangle \\
  & = & -\imath \int_{-\infty}^{+\infty} \llangle \Lambda_n|U_x(w)^{-1}dU_x(w)|\Lambda_n\rrangle \varrho(w)dw - \imath \llangle \Lambda_n|d|\Lambda_n \rrangle
\end{eqnarray}
with $U_x(w) = \Teg^{-\frac{\imath}{p} \int_1^w \slashed D_x(w) dw}$ and by denoting $|\Lambda_n(w=1)\rrangle$ simply by $|\Lambda_n\rrangle$.
\begin{equation}
  -\imath \llangle \Lambda_n|d|\Lambda_n\rrangle = -\imath \frac{\bar \alpha d\alpha - \alpha d\bar \alpha}{2} = u^1du^2-u^2du^1
\end{equation}
This part is the Berry phase generator of the static noncommutative plane.
\begin{equation}
  \slashed D_x(w) = w\left(\begin{array}{cc} 0 & a^+ \\ a & 0\end{array}\right) - \sigma_Ix^I \Rightarrow \frac{\partial \slashed D_x(w)}{\partial x^I} = -\sigma_I
\end{equation}
We have then
  \begin{equation}
    U_x(w)^{-1}dU_x(w) = \frac{\imath}{p} (w-1) \sigma_I \frac{\partial x^I}{\partial u^a}du^a + \mathcal O(1/p^2)
  \end{equation}
  $\llangle \Lambda_n|\sigma_I|\Lambda_n\rrangle = \tr \sigma_I$, it follows that
  \begin{eqnarray}
    & & -\imath \int_{-\infty}^{+\infty} \llangle \Lambda_n|U_x(w)^{-1}dU_x(w)|\Lambda_n\rrangle \varrho(w)dw \nonumber \\
    & & \qquad = \pm \frac{C}{p} \frac{u^3}{\sqrt{(u^3)^2+n}} du^3 + \mathcal O(1/p^2)
  \end{eqnarray}
  where $C = \int_{-\infty}^{+\infty} (w-1)\varrho(w)dw$. Finally:
\begin{equation}
  A_n = u^1du^2-u^2du^1\pm\frac{C}{p}\sinh(\tilde u^3/\sqrt n) d\tilde u^3 + \mathcal O(1/p^2)
\end{equation}
  The Berry curvature is only $F=du^1\wedge du^2$ (independent of $n$). Finally the metric of the spacetime $M_{\Lambda,n} \times \mathbb R$ is
\begin{eqnarray}
  ds^2_n & = & (1-(u^1)^2-(u^2)^2)dt^2 \nonumber \\
  & & \quad +2u^1du^2dt-2u^2du^1dt \mp \frac{2C}{p}\sinh(\tilde u^3/\sqrt n) d\tilde u^3 dt \nonumber \\
  & & \quad - d(u^1)^2 - d(u^2)^2 - d(\tilde u^3)^2 + \mathcal O(1/p^2)
\end{eqnarray}

\begin{eqnarray}
  \lllangle \Lambda_{n*}|\sigma^i \partial_a|\Lambda_n\rrrangle & = & \int_{-\infty}^{+\infty} \llangle \Lambda_{n*}|U_x^{-1}(w)\sigma^i \frac{\partial U_x(w)}{\partial x^J}|\Lambda_n\rrangle \varrho(w)dw \frac{\partial x^J}{\partial u^a} \nonumber \\
  & & \qquad + \llangle \Lambda_{n*}|\sigma^i \partial_a|\Lambda_n \rrangle \\
  & = & \frac{\imath C}{p} \tr(\sigma^i \sigma_J) \frac{\partial x^J}{\partial u^a} + \llangle \Lambda_{n*}|\sigma^i \partial_a|\Lambda_n \rrangle + \mathcal O(1/p^2)\\
  & = & \frac{\imath C}{p} \frac{\partial x^i}{\partial u^a} + \llangle \Lambda_{n*}|\sigma^i \partial_a|\Lambda_n \rrangle + \mathcal O(1/p^2)\\
  & = & \frac{\imath C}{p} \delta^i_a + \llangle \Lambda_{n*}|\sigma^i \partial_a|\Lambda_n \rrangle + \mathcal O(1/p^2)
\end{eqnarray}  

The Lorentz connection is then (at order $1/p^2$)
\begin{eqnarray}
  \omega^{ij}_n  =  0 & \qquad & \omega^{03}_n  =  \frac{C}{p} du^3 \\
  \omega^{01}_n  =  \frac{C}{p} du^1 + \sqrt n du^2 & \qquad & \omega^{02}_n  =  -\sqrt n du^1 +  \frac{C}{p} du^2
 \end{eqnarray}

The nonzero Christoffel symbols are then (at order $1/p^2$)
\begin{equation}
  \begin{array}{llll}
    \Gamma^0_{10} = \sqrt n u^1+\frac{C}{p}u^2 & \Gamma^0_{20} = \sqrt n u^2 - \frac{C}{p}u^1 \\
    \Gamma^1_{20} = 1-\sqrt n  & \Gamma^2_{10} = -1+\sqrt n \\
    \Gamma^0_{21} = \sqrt n & \Gamma^0_{12} = -\sqrt n \\
    \Gamma^1_{10} = -\frac{C}{p} & \Gamma^2_{20} = -\frac{C}{p} \\
    \Gamma^0_{11} = \frac{C}{p}  & \Gamma^0_{22} = \frac{C}{p} \\
    \multicolumn{2}{l}{\Gamma^{\tilde 3}_{\tilde 30} = -\frac{C}{p} \cosh(\tilde u^3/\sqrt n)(1/\sqrt n +\cosh(\tilde u^3/\sqrt n))} \\
    \multicolumn{2}{l}{\Gamma^0_{\tilde 3 \tilde 3}= \frac{C}{p} \cosh^2(\tilde u^3/\sqrt n)}
  \end{array}
  \end{equation}

Some (auto-parallel) geodesics are drawn fig. \ref{geominkowski}
\begin{figure}
  \includegraphics[width=8cm]{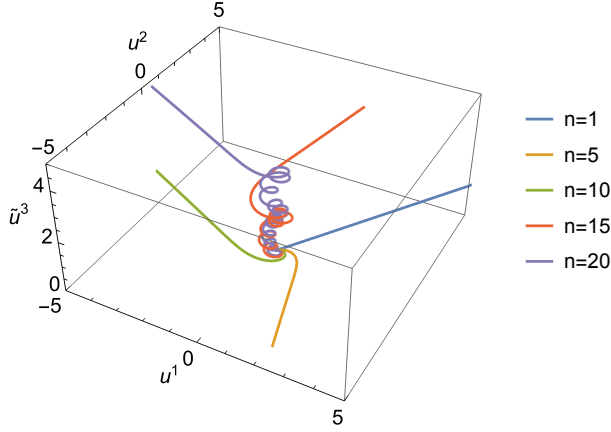}
  \caption{\label{geominkowski} Geodesics in emergent gravity for a noncommutative Minkowski space (with $p=100$ ans $C=1$) for different values of $n$ with $u^1(0)=u^2(0)=\tilde u^3(0)=0$, $\dot u^1(0)=\dot u^2(0)=0.4$, $\dot {\tilde u^3}(0) = 0.1$, $t(0)=0$ and $\dot t(0)=1$.}
\end{figure}
Anew the effect of the torsion is similar to the effect of a magnetic field in the direction $\partial_{\tilde 3}$, the same comments that those of the previous example occur.

\subsection{The Fuzzy sphere}
\subsubsection{Ground state:} 
We consider the case where $\mathscr M$ is defined by $X^i = rJ^i$ with $r\in \mathbb R^{+*}$ and $(J^i)$ the angular momentum operators in an irreducible representation ($\dim \mathcal H=2j+1$). The solution of eq.(\ref{NCEVeq}) are (see \ref{su2})
\begin{equation}
  M_\Lambda = \{x \in \mathbb R^3, |x|=rj\}
\end{equation}
\begin{equation}
  |\Lambda(x) \rrangle = |\zeta\rangle_{1/2} \otimes |\zeta \rangle_j
\end{equation}
with $\zeta = e^{\imath \varphi} \tan \frac{\theta}{2}$, the embedding of the sphere $M_\Lambda$ into $\mathbb R^3$ being
\begin{equation}
  x(\theta,\varphi) = rj(\sin \theta \cos \varphi, \sin \theta \sin \varphi, \cos \theta)
\end{equation}
and $|\zeta\rangle_j$ is a Perelomov $\mathfrak{su}(2)$ coherent state \cite{Perelomov,Puri}. The abelian gauge potential is then
\begin{equation}
  A =  -\imath ({_{1/2}}\langle \zeta|d|\zeta\rangle_{1/2} + {_j}\langle \zeta|d|\zeta\rangle_j) \\
\end{equation}
with $d=\partial + \bar \partial$.
\begin{eqnarray}
  & & d|\zeta\rangle_j = -j\frac{\bar \zeta d\zeta + \zeta d\bar \zeta}{1+|\zeta|^2} |\zeta \rangle_j \nonumber \\
  & & + \frac{1}{(1+|\zeta|^2)^j} \sum_{m=-j}^{j-1} \sqrt{\frac{(2j)!}{(j+m)!(j-m)!}}(j-m)\zeta^{j-m-1}|jm\rangle d\zeta
\end{eqnarray}
\begin{eqnarray}
  & & {_j}\langle \zeta|d|\zeta\rangle_j = -j \frac{\bar \zeta d\zeta + \zeta d\bar \zeta}{1+|\zeta|^2} \nonumber \\
    & & + \frac{\bar \zeta d\zeta}{1+|\zeta|^2} \sum_{m=-j}^{j-1} \frac{(2j)!}{(j+m)!(j-m-1)!}|\zeta|^{2(j-m-1)}
\end{eqnarray}
But by the binomial theorem:
\begin{eqnarray}
  & & \sum_{m=-j}^{j-1} \frac{(2j)!}{(j+m)!(j-m-1)!}|\zeta|^{2(j-m-1)} \nonumber \\
  & & = \sum_{m=0}^{2j-1} \frac{(2j)!}{m!(2j-m-1)!}|\zeta|^{2(2j-1-m)} \\
  & & = 2j(1+|\zeta|^2)^{2j-1}
\end{eqnarray}
  It follows that
\begin{equation}
  {_j}\langle \zeta|d|\zeta\rangle_j = -j \frac{\bar \zeta d\zeta + \zeta d\bar \zeta}{1+|\zeta|^2}+2j\frac{\bar \zeta d\zeta}{1+|\zeta|^2} = j \frac{\bar \zeta d\zeta - \zeta d\bar \zeta}{1+|\zeta|^2}
\end{equation}
Finally
\begin{equation}
  A = -\imath(\frac{1}{2}+j)\frac{\bar \zeta d\zeta-\zeta d\bar \zeta}{1+|\zeta|^2} = (2j+1) \sin^2\frac{\theta}{2} d\varphi
\end{equation}
The Berry curvature is then
\begin{equation}
  F_{\theta \varphi} = \partial_\theta A_\varphi = \frac{2j+1}{2} \sin \theta
\end{equation}
$\gamma_{ab} du^a du^b = (rj)^2(d\theta^2 + \sin^2\theta d\varphi^2)$ is the usual metric of a sphere of radius $rj$. $\mathbb R \times M_\Lambda$ is endowed with the metric
\begin{eqnarray}
  ds^2 & = & \left(1-\left(\frac{2j+1}{2rj}\right)^2 \tan^2 \frac{\theta}{2}\right)dt^2 \nonumber \\
  & & + 2(2j+1) \sin^2\frac{\theta}{2} d\varphi dt \nonumber \\
  & & - (rj)^2(d\theta^2+\sin^2\theta d\varphi^2)
\end{eqnarray}

The triads are then
\begin{equation}
  \begin{array}{lll}
    \tilde e^1_\theta = rj \cos \theta \cos \varphi & \tilde e^2_\theta = rj \cos \theta \sin \varphi & \tilde e^3_\theta = -rj \sin \theta \\
    \tilde e^1_\varphi = -rj \sin \theta \sin \varphi & \tilde e^2_\varphi = rj \sin \theta \cos \varphi & \tilde e^3_\varphi = 0
  \end{array}
\end{equation}
and $e^\theta_i = \theta_{ij} g^{\theta \theta} \frac{\partial x^j}{\partial \theta}$ and $e^\varphi_i = \theta_{ij} g^{\varphi \varphi} \frac{\partial x^j}{\partial \varphi}$ are then
\begin{equation}
  \begin{array}{lll}
    e^\theta_1 = r \sin \varphi & e^\theta_2 = -r \cos \varphi & e^\theta_3 =0 \\
    e^\varphi_1 = r\cos \varphi\, \mathrm{cotan}\, \theta & e^\varphi_2 = r \sin \varphi\, \mathrm{cotan}\, \theta & e^\varphi_3 = -r
  \end{array}
\end{equation}
and 
\begin{equation}
  \begin{array}{lll}
    e^\varphi_0 = \frac{2j+1}{4(rj)^2} (\tan^2 \frac{\theta}{2}+1) & e^\theta_0 = 0 \\
    \tilde e^1_0 = \frac{2j+1}{2rj} \tan \frac{\theta}{2} \sin \varphi & \tilde e^1_0 = - \frac{2j+1}{2rj} \tan \frac{\theta}{2} \cos \varphi & \tilde e^3_0=0
  \end{array}
\end{equation}

$\llangle \Lambda|\sigma^id|\Lambda \rrangle = {_{1/2}}\langle \zeta|\sigma^id|\zeta\rangle_{1/2} + {_{1/2}}\langle\zeta|\sigma^i|\zeta\rangle_{1/2} {_j}\langle \zeta |d|\zeta\rangle_j$. ${_j}\langle \zeta |d|\zeta\rangle_j = 2\imath j \sin^2 \frac{\theta}{2} d\varphi$, ${_{1/2}}\langle \zeta|\vec \sigma|\zeta\rangle_{1/2} = \vec n$ and
\begin{eqnarray}
  {_{1/2}}\langle \zeta|\sigma^1d|\zeta\rangle_{1/2} & = & (\cos \varphi \cos \theta + \imath \sin \varphi)\frac{d\theta}{2} + \imath e^{\imath \varphi} \sin \theta \frac{d\varphi}{2} \\
  {_{1/2}}\langle \zeta|\sigma^2d|\zeta\rangle_{1/2} & = & (\sin \varphi \cos \theta - \imath \cos \varphi)\frac{d\theta}{2} + e^{\imath \varphi} \sin \theta \frac{d\varphi}{2} \\
  {_{1/2}}\langle \zeta|\sigma^3d|\zeta\rangle_{1/2} & = & -\sin \theta \frac{d\theta}{2} - \imath \sin^2\frac{\theta}{2} d\varphi
\end{eqnarray}
The Lorentz connection is then
\begin{eqnarray}
  \omega^{12} & = & \sin \theta d\theta \\
  \omega^{23} & = & -\cos\varphi \cos \theta d\theta + \sin\varphi \sin \theta d\varphi \\
  \omega^{31} & = & -\sin \varphi \cos \theta d\theta - \cos\varphi \sin \theta d\varphi \\
  \omega^{01} & = & \sin \varphi d\theta + 2\cos\varphi \sin\theta (\frac{1}{2}+2j\sin^2\frac{\theta}{2}) d\varphi \\
  \omega^{02} & = & -\cos \varphi d\theta + 2\sin\varphi \sin\theta (\frac{1}{2}+2j\sin^2\frac{\theta}{2}) d\varphi \\
  \omega^{03} & = & 2\sin^2\frac{\theta}{2}(-1+2j\cos\theta)d\varphi
\end{eqnarray}
$\Xi^\theta_\varphi = -\frac{1}{r^2j} \sin \theta$, $\Xi^\varphi_\theta = \frac{1}{r^2j \sin \theta}$ and $\Xi^\theta_\theta=\Xi^\varphi_\varphi=0$, the non zero Christoffel symbols are then
\begin{equation}
  \begin{array}{ll}
    \Gamma^0_{\theta 0} = \frac{2j+1}{2rj} \tan \frac{\theta}{2} & \Gamma^0_{\theta \varphi} = - \Gamma^0_{\varphi \theta} = -rj \sin \theta \\
    \Gamma^\theta_{\varphi 0} = -\frac{\sin \theta}{rj} + \frac{2j+1}{2(rj)^2}\tan\frac{\theta}{2}\cos\theta & \Gamma^\varphi_{\theta 0} = \frac{1}{rj \sin \theta} - \frac{2j+1}{4(rj)^2}\frac{1+\tan^2 \frac{\theta}{2}}{\sin \theta} \\
    \Gamma^\theta_{\varphi \varphi} = - \cos \theta \sin \theta & \Gamma^\varphi_{\theta \varphi} =  \Gamma^\varphi_{\varphi \theta} =  \mathrm{cotan}\, \theta
  \end{array}
\end{equation}
The restriction to the space components corresponds to the Levi-Civita connection on the sphere. Induced (auto-parallel) geodesics are drawn fig. \ref{geosphere} and \ref{geosphereproj}.
\begin{figure}
  \includegraphics[width=6cm]{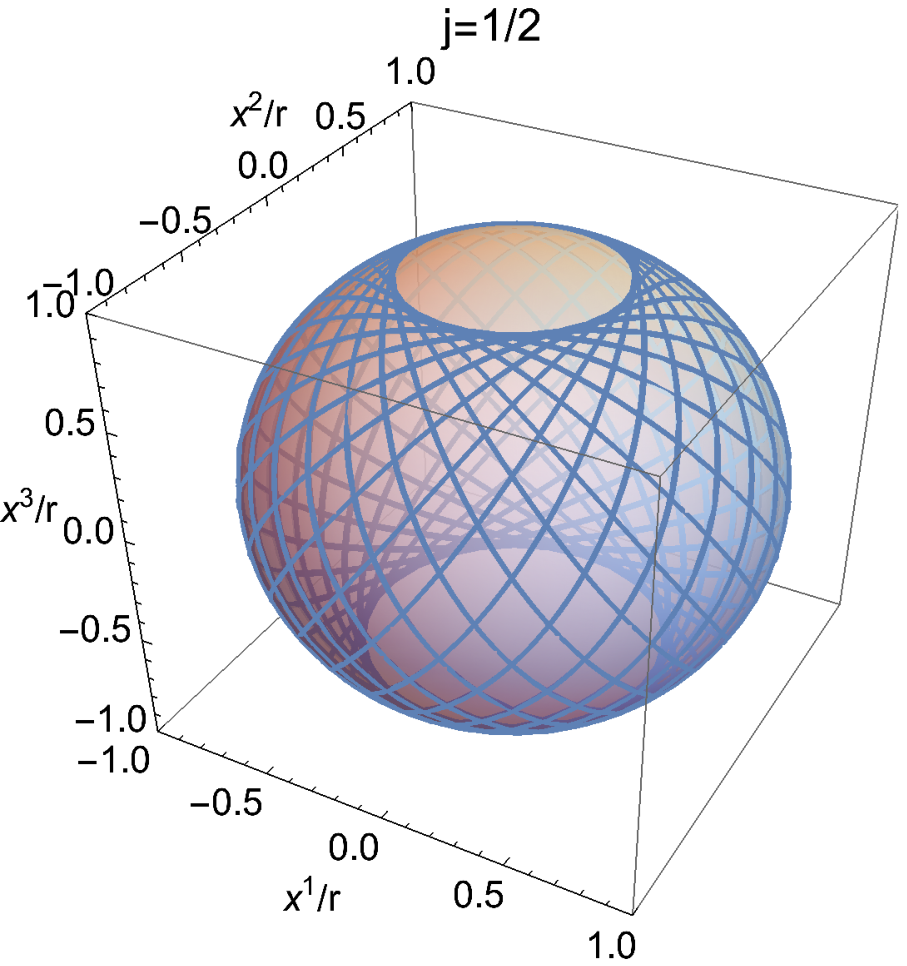} \includegraphics[width=6cm]{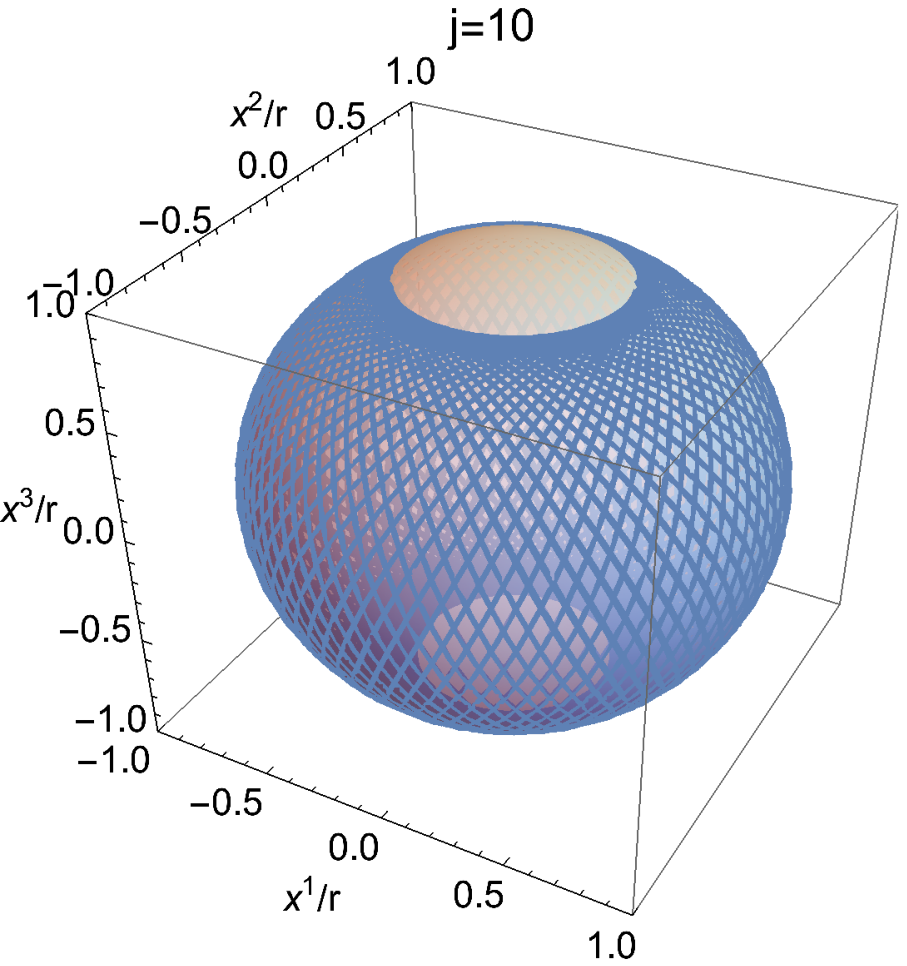} \\ \includegraphics[width=6cm]{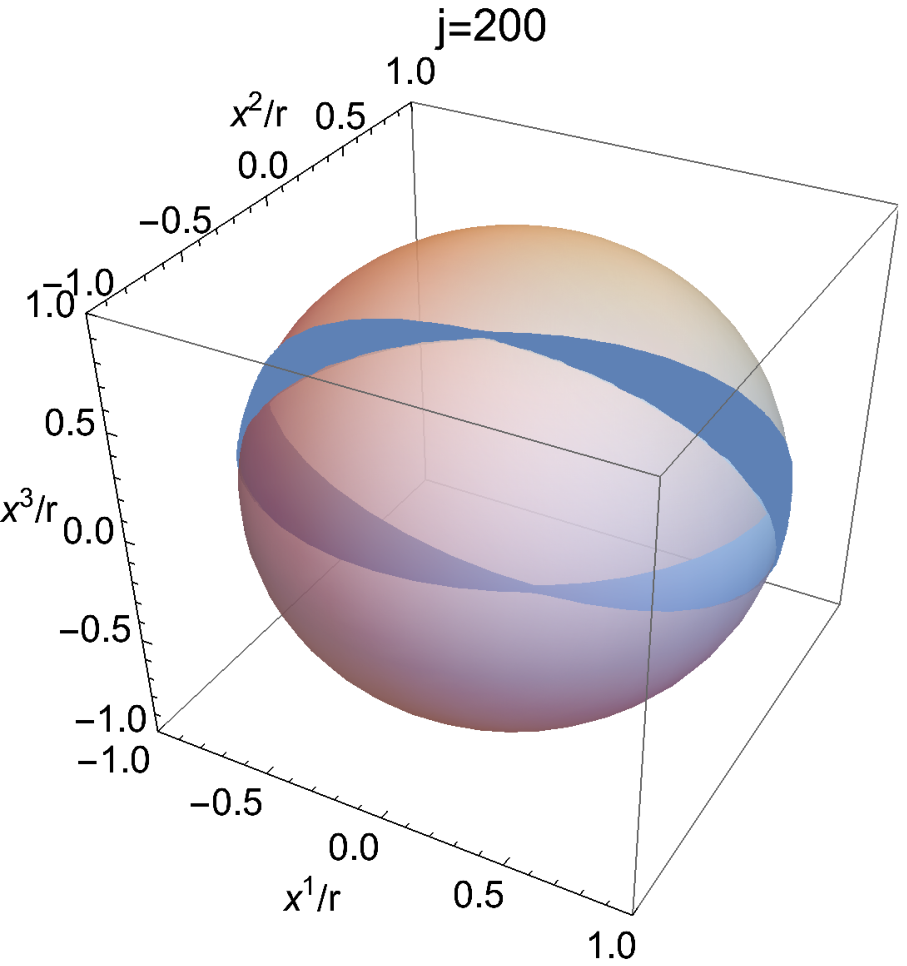} \includegraphics[width=6cm]{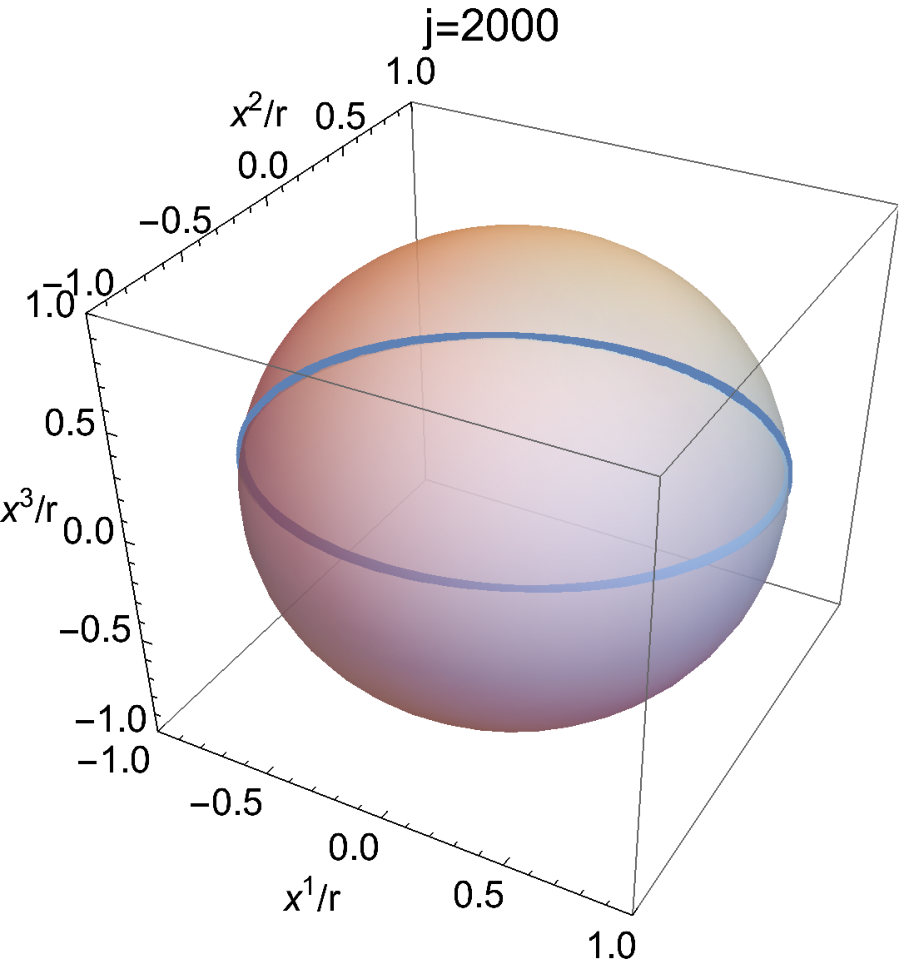} \\
  \caption{\label{geosphere} Geodesics on the sphere $M_\Lambda$ in emergent gravity of a fuzzy sphere with $\theta(0)=\frac{\pi}{4}$, $\phi(0)=0$, $\dot \theta(0)=0.8$, $\dot \phi(0)=0.8$, $t(0)=0$ and $\dot t(0)=1$ for different values of $j$.}
\end{figure}
\begin{figure}
  \includegraphics[width=4.5cm]{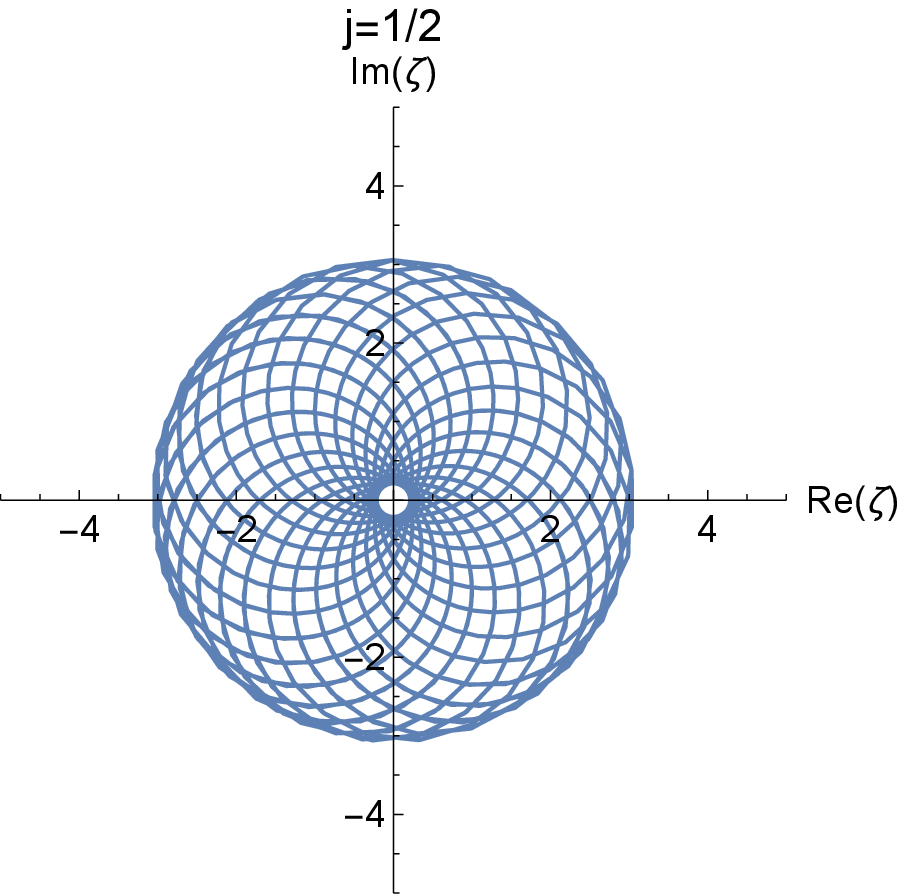} \includegraphics[width=4.5cm]{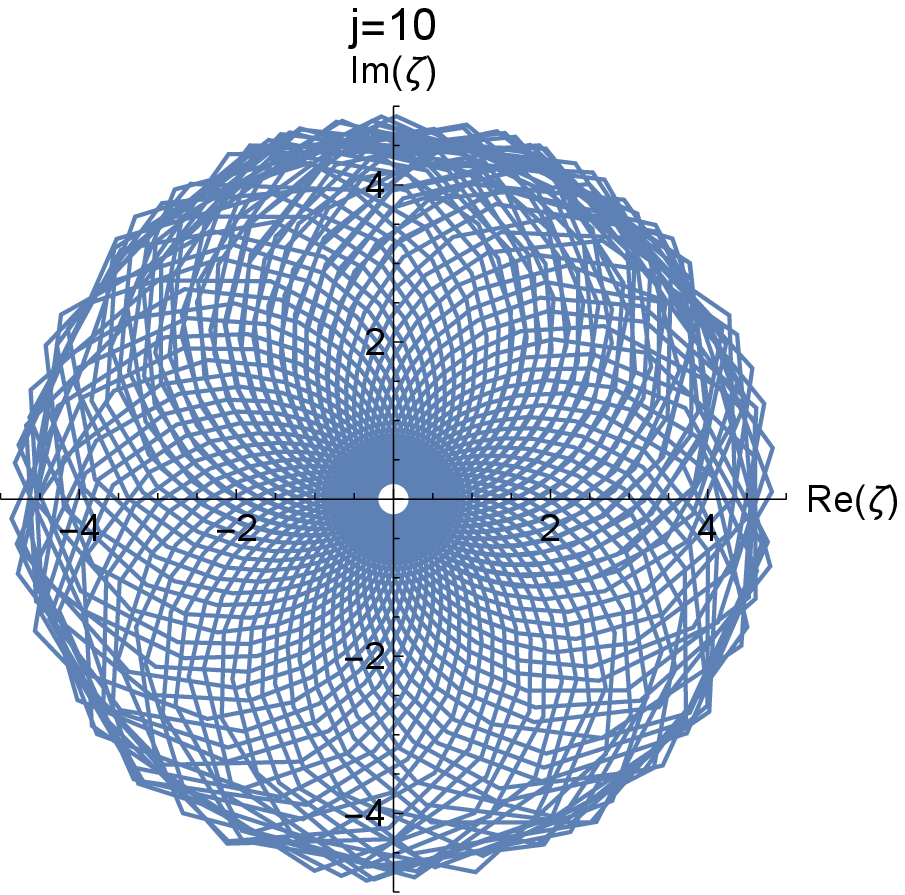}\includegraphics[width=4.5cm]{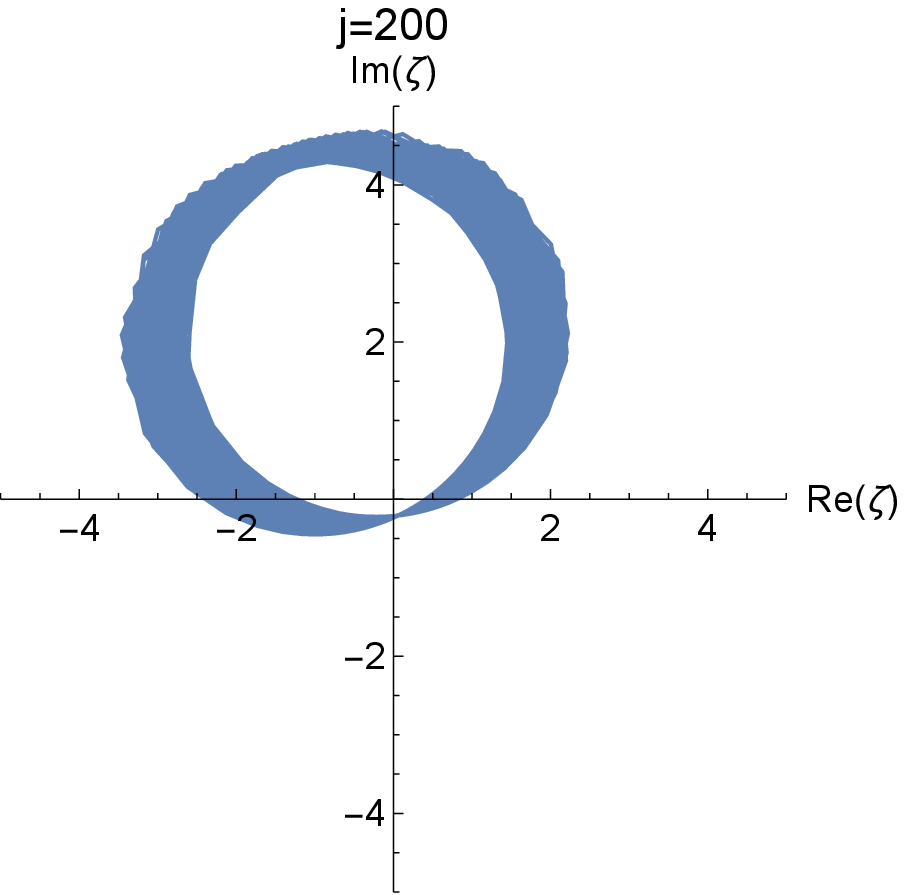}\\
  \caption{\label{geosphereproj} Same as fig. \ref{geosphere} but with a representation in the complex plane of $\zeta=e^{\imath \varphi} \tan \frac{\theta}{2}$ (stereographic projection of the sphere).}
\end{figure}
We find usual geodesics on a sphere for the thermodynamical limit $j \to +\infty$ (the number of D0-branes in the stack is $2j+1$) where the effects of the nonzero torsion are negligible. The torsion generates a precession of the classical geodesics (the speed of this precession decreasing with increasing values of $j$). In this case, the Berry curvature is equivalent to a radial magnetic field emitted by a magnetic monopole at the center of the sphere, in accordance with the observed precession (a Laplace force normal to the particle trajectory and tangent to the surface of the sphere).

\subsubsection{Excited states:} We consider now a fermion of mass $\frac{\omega}{2}$. The solutions of $\slashed D_x |\Lambda_{p,m}\rrangle = p\omega |\Lambda_{p,m} \rrangle$ with $p \in \mathbb Z^*$ are (see \ref{su2}):
$$ M_{\Lambda,p,m} = \left\{x \in \mathbb R^3, |x|=\left(2m+1\pm\sqrt{(2m+1)^2-(2j+1)^2+1+\frac{2p\omega}{r}}\right) \frac{r}{2} \right\} $$
with $m \in \{-j,...,j\}$ and $|2m+1|\geq \sqrt{2j(j+1)-\frac{p\omega}{r}}$ (the case $-\sqrt{}$ exists only if $\frac{p\omega}{2r} \leq j(j+1)$). 
\begin{equation}
  |\Lambda_{m,p} \rrangle = \Dis_{1/2\otimes j}(\vec n)(k^0_{m,p}|0\rangle \otimes |j,m\rangle + k^1_{m,p}|1\rangle \otimes |j,m+1\rangle)
\end{equation}
with
\begin{eqnarray}
  k^0_{m,p} & = & \kappa \left(\frac{p\omega}{r}+m+1-\frac{|x|}{r}\right) \\
  k^1_{m,p} & = & \kappa \sqrt{j(j+1)-m(m+1)}
  \end{eqnarray}
    ($\kappa$ being just a normalisation factor such that $\|k_{pm}\|^2 = 1$) and
\begin{equation}
  \Dis_j(\vec n) = e^{\imath \theta (\sin \varphi J^1 - \cos \varphi J^2)}
\end{equation}
$\vec n = (\sin \theta \cos \varphi, \sin \theta \sin \varphi, \cos \theta)$. $M_{\Lambda,n,p}$ is then the sphere of $\mathbb R^3$ of radius:
\begin{equation}
  r_{j,m,p} = \left(2m+1\pm\sqrt{(2m+1)^2-(2j+1)^2+1+\frac{2p\omega}{r}}\right) \frac{r}{2}
\end{equation}
 and then $d\ell_{m,p}^2 = r^2_{j,m,p}(d\theta^2+\sin^2\theta d\varphi^2)$. The density matrix of the spin is then
\begin{equation}
  \rho_{\Lambda,m,p} = \Dis_{1/2}(\vec n)\left(\begin{array}{cc} |k^0_{m,p}|^2 & 0  \\ 0 & |k^1_{m,p}|^2 \end{array} \right) \Dis_{1/2}(\vec n)^\dagger
\end{equation}
with
\begin{equation}
\Dis_{1/2}(\vec n) = \left(\begin{array}{cc} \cos \frac{\theta}{2} & -e^{-\imath \varphi} \sin \frac{\theta}{2} \\ e^{\imath \varphi} \sin \frac{\theta}{2} & \cos \frac{\theta}{2} \end{array} \right)
\end{equation}
It follows that
\begin{eqnarray}
  \llangle \Lambda_{p,m*}| & = &\llangle \Lambda_{p,m}|\rho_{\Lambda,p,m}^{-1} \\
  & = & \left(\frac{1}{k^0_{m,p}}\langle 0|\otimes \langle j,m|+\frac{1}{k^1_{m,p}}\langle 1|\otimes \langle j,m+1|\right)\Dis_{1/2\otimes j}^\dagger
\end{eqnarray}
  The Berry phase generator is then
\begin{eqnarray}
  \imath \tilde A_{p,m} & = & \llangle \Lambda_{p,m*}|d|\Lambda_{p,m}\rrangle \\
  & = & \langle j,m|\Dis_j^\dagger d\Dis_j|j,m\rangle+\langle j,m+1|\Dis_j^\dagger d\Dis_j|j,m+1\rangle \nonumber \\
  & & +\langle 0|\Dis_{1/2}^\dagger d\Dis_{1/2}|0\rangle+\langle 1|\Dis_{1/2}^\dagger d\Dis_{1/2}|1\rangle
\end{eqnarray}
\begin{equation}
  \Dis_j = e^{-\bar \zeta J^+}e^{\ln(1+|\zeta|^2)J^3}e^{\zeta J^-} = e^{\zeta J^-} e^{-\ln(1+|\zeta|^2)J^3}e^{-\bar \zeta J^+}
\end{equation}
  with $\zeta=e^{\imath \varphi} \tan \frac{\theta}{2}$ \cite{Perelomov,Puri}. It follows that
\begin{eqnarray}
  \Dis_j^\dagger \frac{\partial}{\partial \zeta} \Dis_j & = & \frac{\bar \zeta}{1+|\zeta|^2} e^{-\zeta J^-}J^3e^{\zeta J^-} + J^- \\
  & = & \frac{\bar \zeta}{1+|\zeta|^2}(J^3-\zeta J^-)+J^- \\
  \Dis_j^\dagger \frac{\partial}{\partial \bar \zeta} \Dis_j & = & -\frac{\zeta}{1+|\zeta|^2} e^{\bar \zeta J^+}J^3e^{-\bar\zeta J^+} - J^+ \\
  & = & -\frac{\zeta}{1+|\zeta|^2}(J^3-\bar \zeta J^+)-J^+
\end{eqnarray}
(because $\frac{\partial}{\partial \zeta}(e^{-\zeta J^-}J^3e^{\zeta J^-}) = e^{-\zeta J^-}[J^3,J^-]e^{\zeta J^-}=-J^-$). Finally we have
\begin{equation}
  \langle j,m|\Dis_j^\dagger d\Dis_j|j,m\rangle = m \frac{\bar \zeta d\zeta -\zeta d\bar \zeta}{1+|\zeta|^2}
\end{equation}
and
\begin{equation}
  \frac{\tilde A_{p,m}}{2} = -\frac{\imath}{2}(2m+1)\frac{\bar \zeta d\zeta -\zeta d\bar \zeta}{1+|\zeta|^2} = (2m+1)\sin^2\frac{\theta}{2} d\varphi
\end{equation}
The spacetime metric is then
\begin{eqnarray}
  ds^2 & = & \left(1-\left(\frac{2m+1}{2r_{jmp}}\right)^2 \tan^2 \frac{\theta}{2}\right)dt^2 \nonumber \\
  & & + 2(2m+1) \sin^2\frac{\theta}{2} d\varphi dt \nonumber \\
  & & - r_{jmp}^2(d\theta^2+\sin^2\theta d\varphi^2)
\end{eqnarray}

\begin{eqnarray}
  \llangle \Lambda_*|\sigma^id|\Lambda \rrangle & = & \llangle \Lambda_*(0)|\sigma^i \Dis_j^\dagger d\Dis_j|\Lambda(0)\rrangle \nonumber \\
  & & + \langle 0|\Dis_{1/2}^\dagger \sigma^i d\Dis_{1/2}|0\rangle + \langle 1|\Dis_{1/2}^\dagger \sigma^i d\Dis_{1/2}|1\rangle
\end{eqnarray}
  Let $s^i_a=\llangle \Lambda_*(0)|\sigma^i\Dis_j^\dagger\partial_a\Dis_j|\Lambda(0)\rrangle$.
\begin{eqnarray}
  s^1_a & = & \frac{k^0_{mp}}{k^1_{mp}} \langle j,m+1|\Dis^\dagger_j\partial_a\Dis_j|jm\rangle + \frac{k^1_{mp}}{k^0_{mp}}\langle jm|\Dis_j^\dagger \partial_a \Dis_j|j,m+1\rangle \\
  s^2_a & = & \imath \frac{k^0_{mp}}{k^1_{mp}} \langle j,m+1|\Dis^\dagger_j\partial_a\Dis_j|jm\rangle -\imath \frac{k^1_{mp}}{k^0_{mp}}\langle jm|\Dis_j^\dagger \partial_a \Dis_j|j,m+1\rangle \\
  s^3_a & = & \langle jm|\Dis^\dagger_j\partial_a\Dis_j|jm\rangle - \langle j,m+1|\Dis_j^\dagger \partial_a \Dis_j|j,m+1\rangle
\end{eqnarray}
with
\begin{eqnarray}
  \langle j,m+1|\Dis_j^\dagger d\Dis_j|jm\rangle & = & \frac{\zeta(\bar \zeta+1)}{1+|\zeta|^2} \sqrt{j(j+1)-m(m+1)} d\bar \zeta \\
  \langle jm|\Dis_j^\dagger d\Dis_j|j,m+1\rangle & = & \frac{\bar \zeta(1-\zeta)}{1+|\zeta|^2} \sqrt{j(j+1)-m(m+1)} d\zeta
\end{eqnarray}
and finally
\small \begin{eqnarray}
  s^1_\theta & = & \frac{\tan \frac{\theta}{2}}{2\kappa k^0_{mp}}(1+\Delta k^2_{mp} \cos \varphi \tan\frac{\theta}{2}-\imath \sin \varphi \tan \frac{\theta}{2}) \\
  s^1_\varphi & = & \frac{\imath \sin^2\frac{\theta}{2}}{\kappa k^0_{mp}}(-\Delta k^2_{mp} -\cos \varphi \tan\frac{\theta}{2}+\imath \Delta k^2_{mp} \sin \varphi \tan \frac{\theta}{2}) \\
  s^2_\theta & = & \frac{\imath \tan \frac{\theta}{2}}{2\kappa k^0_{mp}}(\Delta k^2_{mp} + \cos\varphi \tan\frac{\theta}{2}-\imath \Delta k^2_{mp} \sin \varphi \tan \frac{\theta}{2}) \\
  s^2_\varphi & = & \frac{\sin^2 \frac{\theta}{2}}{\kappa k^0_{mp}}(1+\Delta k^2_{mp} \cos \varphi \tan \frac{\theta}{2}-\imath \sin \varphi \tan \frac{\theta}{2}) \\
  s^3_\theta & = & 0 \\
  s^3_\varphi & = & -2\imath \sin^2 \frac{\theta}{2}
\end{eqnarray} \normalsize
with $\Delta k^2_{mp} = (k^0_{mp})^2-(k^1_{mp})^2$ the difference of populations. The Lorentz connection is then
\small \begin{eqnarray}
  \omega^{12} & = & 0 \\
  \omega^{23} & = & -\frac{\tan \frac{\theta}{2}}{2\kappa k^0_{mp}}(1+\Delta k^2_{mp} \cos \varphi \tan \frac{\theta}{2}) d\theta + \frac{\sin^2\frac{\theta}{2}}{\kappa k^0_{mp}} \sin \varphi \tan \frac{\theta}{2} d\varphi \\
  \omega^{31} & = & -\frac{\Delta k^2_{mp}}{2\kappa k^0_{mp}} \sin\varphi \tan^2\frac{\theta}{2} d\theta - \frac{\sin^2\frac{\theta}{2}}{\kappa k^0_{mp}}(1+\Delta k^2_{mp} \cos\varphi \tan \frac{\theta}{2}) d\varphi \\
  \omega^{01} & = & -\frac{1}{2\kappa k^0_{mp}}\sin \varphi \tan^2\frac{\theta}{2} d\theta - \frac{\sin^2\frac{\theta}{2}}{\kappa k^0_{mp}}(\Delta k^2_{mp}+\cos \varphi \tan\frac{\theta}{2}) d\varphi \nonumber \\
  & & + \sin \varphi d\theta + \cos \varphi \sin \theta d\varphi \\
  \omega^{02} & = & \frac{\tan \frac{\theta}{2}}{2\kappa k^0_{mp}}(\Delta k^2_{mp} +\cos \varphi \tan \frac{\theta}{2})d\theta -\frac{\sin^2\frac{\theta}{2}}{\kappa k^0_{mp}} \sin \varphi \tan \frac{\theta}{2} d\varphi \nonumber \\
  & &  -\cos \varphi d\theta + \sin \varphi \sin \theta d\varphi \\
  \omega^{03} & = & -4 \sin^2 \frac{\theta}{2} d\varphi
  \end{eqnarray} \normalsize

The Lorentz connection of the excited states is then very different from the one of the ground state. The usual space components of the sphere are not present and very complicated expressions depending on the population difference $\Delta k^2_{pm}$ appear and are difficult to interpret.

\section{Conclusion}
The mean geometry of a quantum spacetime can be revealed by transporting adiabatically a probe D0-brane (a test particle). The emerging geometry at the microscopic scale is defined by the push-forward of the quantum averaging map $\omega_\Lambda$ of space quantum observables in the quasi-coherent state. Its dual map, the pull-back of $\omega_\Lambda$ appears as the quantisation map. The Berry connection of the adiabatic transport is the shift vector of the emergent geometry (which appears as a foliation by spacelike surfaces). The non-abelian part of the operator-valued Berry connection of the weak adiabatic regime defines the Lorentz connection of the emergent geometry, which is not torsion free. The effects of this torsion seems to be consistent with the interpretation of the Berry curvature as a pseudo magnetic field. Generalised quasi-coherent states can be introduced as excited states of the quantum spacetime which corresponds to the Fourier modes of chiral oscillations (implying that massless particles reveal only the emergent geometry of the spacetime ground state). This paper is focused on the emergent geometry at the microscopic scale by transport of fermions, which is related to their spins as in Einstein-Cartan theory.\\
It would be interesting to compare with the emergent geometry revealed by transport of bosons. Indeed, in the case of scalar bosons the quasi-coherent state is eigenvector of the non-commutative Laplacian $\Box_x = |X-x|^2$ (see \cite{Schneiderbauer}). But we have:
\begin{equation}
  \slashed D_x^2 = \Box_x + \frac{\imath}{2} {\varepsilon_{ij}}^k \sigma_k \otimes [X^i,X^j]
\end{equation}
The analysis in \cite{Schneiderbauer} suggests that the geometries described by the eigenvector minimising the displacement energy is the same in the two cases. At least when the ground quasi-coherent state of $\slashed D_x$ is separable $|\Lambda(x) \rrangle = |0_x\rangle \otimes |\lambda_0(x) \rangle$, we can imagine that $|\lambda_0(x) \rangle$ is close to the quasi-coherent state of $\Box_x$, maybe up to a gauge change. But for excited states or for an entangled ground state, differences must arise because of the lack of spin degree of freedom in the bosonic case. In the fermionic case, this one especially ``feels'' the non-commutative character of $\mathscr M$ via $\frac{\imath}{2} {\varepsilon_{ij}}^k \sigma_k \otimes [X^i,X^j]$.\\
The emergent geometry at the macroscopic level is obtained by the semi-classical limit (which is equivalent to the thermodynamical limit of infinite number of strings) and has been studied in a lot of previous papers. We have seen that the two limits are consistent with each other for the case of the strict adiabatic limit. Is there a decoherence process inducing the transition from the emergent geometry at the microscopic level to the one at the macroscopic level? For the weak adiabatic regime, the answer to this question depends on the behaviour of $\mathfrak A$ at the semi-classical limit. This one is not obvious and need more analyses. In a previous work \cite{Viennot3} we have proved that if we consider the weak adiabatic limit of some bipartite quantum systems both with a perturbative expansion, at the second order of perturbation $\mathfrak A$ must be accompanied by several operators involving that the density matrix obeys to an effective Lindblad equation (encoding decoherence effects). Maybe such corrections occur in the semi-classical limit of the weak adiabatic transport in the present framework at order $1/N^2$ (representing the intermediate regime between the purely quantum and classical ones). But such a study needs to extend the results of \cite{Viennot3} (the structure of $\slashed D_x$ is not compatible with the assumptions used in \cite{Viennot3}).\\
In this paper we have considered only academic models for the algebra $\mathfrak X$ which are totally analytical. Future works could be dedicated to apply the present adiabatic formalism to more realistic operators $\{X^i\}_i$ issued from the integration of the equations governing them. For example in \cite{Hanada} it is shown that such operators exhibit behaviours of higher derivative corrections to the gravity. These ones can maybe then be incorporated in the non-commutative manifold $\mathscr M$ and finally in the emergent geometry defined by the quasi-coherent state. But such an analysis needs the development of specific numerical methods to compute the adiabatic entities from a numerical simulation of $\mathscr M$.\\
A last question concerns the possibility to observe the consequences of the torsion associated with the emergent geometry. At the macroscopic scale this question is related to the one concerning the thermodynamical limit of $\mathfrak A$. With another matrix model, in \cite{Steinacker2} it is shown that a torsion arises at the cosmic scale and it can be a ``source'' of dark matter. This torsion has similarities with the one introduced in this paper (see \ref{NCtorsion}). Nevertheless, in the examples treated section \ref{examples}, we have seen that the effects of the torsion emerging from the quasi-coherent state of the quantum spacetime seem disappear at large scale. For the noncommutative plane or for the noncommutative Minkoswki space, the effects of the torsion exponentially decrease with the distance to the starting point. For the Fuzzy sphere, the effects of the torsion decrease with the number of strings (at the thermodynamical limit, the geodesics become the usual geodesics of a sphere). No argument proves that this behaviour is universal, maybe the examples treated here are too simples. Because the emergent geometry considered in this paper is at the microscopic level, in the sense of being the classical geometry closest to the quantum one (by using quasi-coherent states), it might be more relevant to test the effects of the torsion at the quantum level. Unfortunately, in the treated examples, the extinction of its effects seems very fast with the increase of the scale. Anew, maybe the examples are too simples and other systems maybe can exhibit some effects continuing on a larger scale, since the existence of a torsion at large scale in a matrix model has been shown in \cite{Steinacker2}.

\appendix
\section{Weak adiabatic theorem}\label{adiabth}
In the context of the present paper, the weak adiabatic regime physically means that the transport of the probe D0-brane onto $M_\Lambda$ can modify the spin of the fermionic string but not its space degree of freedom. This is consistent with the interpretation of the quasi-coherent state $|\Lambda(x)\rrangle$ which represents a state of a strongly localised fermion at the point $x$ onto the classical manifold (corresponding to the state of the noncommutative manifold $\mathscr M$ closest to a classical one). During a localised transport, the spin of the particle can rotate (Thomas and de Sitter precessions) but the mean value of the particle position is supposed to follow the path of the transport. This assumption means that $|\psi(t) \rrangle = g(t)|\Lambda(x(t))\rrangle$ with $g \in GL(2,\mathbb C)$ (and not in $GL(\mathbb C^2 \otimes \mathcal H)$) and $x(t) \in M_\Lambda$. For the point of view of $\slashed D_x = \sigma_i \otimes (X^i -x^i) \in \mathfrak a \otimes \Env(\mathfrak X)$, this means that the transport is adiabatic with respect to the noncommutative manifold $\mathscr M$ (of observables $\Env(\mathfrak X)$) and not with respect to the spin system (of observables $\mathfrak a$). In contrast, the strong adiabatic regime consists to a transport adiabatic with respect to both the spin and the noncommutative manifold.\\
This definition of the weak adiabaticity is consistent with the one used in \cite{Viennot1,Viennot2,Viennot3} where the adiabatic transport of a system interacting with an environment is studied. Due to the interaction with the environment, the system is not sustain on an instantaneous eigenstate (as in strong adiabatic transport). A transformation acting on the system eigenstate occurs due to the effects of the environment. The weak adiabatic regime ensures that the degree of entanglement between the system and the environment is sustained. In the present context, the system is the spin, the environment is the noncommutative manifold and the transformation onto the system is $g$. The instantaneous entanglement is completely defined by the one of $|\Lambda\rrangle$.\\

We can justify the weak adiabatic transport formula by the following result:
\begin{theo}[Weak adiabatic theorem]\label{adiabtheo}
  Let $\{\lambda_n\}_n$ be the spectrum of $\slashed D_x$: $\slashed D_x|\Lambda_n \rrangle = \lambda_n|\Lambda_n \rrangle$, with $\lambda_0=0$. The eigenvalues are supposed to be non-degenerate. Let $T$ be the duration of the transport described by the Schr\"odinger-like equation $\imath |\dot \psi \rrangle = \slashed D_{x(t)} |\psi \rrangle$, with $|\psi(0)\rrangle = |\Lambda_0(0)\rrangle$. $t \mapsto \lambda_n(t)$ and $t \mapsto |\Lambda_n(t)\rrangle$ are supposed to be respectively $\mathcal C^1$ and $\mathcal C^2$. Let $g(t) = \Ted^{-\imath \int_0^t \mathfrak A dt}$ be the operator-valued geometric phase ($\mathfrak A \rho_{\Lambda 0} = -\imath \tr_{\mathcal H} |\dot \Lambda_0\rrangle \llangle \Lambda_0|$, with $\rho_{\Lambda 0} = \tr_{\mathcal H} |\Lambda_0\rrangle \llangle \Lambda_0|$) and $\tilde \lambda_0 = \llangle \Lambda_0|g^{-1} \slashed D_x g|\Lambda_0\rrangle$ be the effective ground eigenvalue.\\
If the following assumptions are satisfied:
  \begin{enumerate}
  \item $\max_{n\not=0} \sup_{t \in [0,T]} \left|\frac{\llangle \Lambda_0|(\partial_t-\imath \mathfrak A)|\Lambda_n\rrangle}{\lambda_n-\tilde \lambda_0} \right| = \mathcal O(1/T)$.
  \item $\max_{n\not=0} \sup_{t \in [0,T]} \left|\frac{\llangle \Lambda_0|g^{-1} \slashed D_x g|\Lambda_n\rrangle}{\lambda_n-\tilde \lambda_0} \right| = \mathcal O(\epsilon)$
  \end{enumerate}
  then
  \begin{equation}
    |\psi(t)\rrangle = e^{-\imath \int_0^t \tilde \lambda_0(t)dt} g(t)|\Lambda_0(t)\rrangle + \mathcal O(\max(\epsilon,1/T))
  \end{equation}
\end{theo}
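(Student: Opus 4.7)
The strategy is the standard adiabatic theorem strategy, adapted to accommodate the operator-valued dressing $g(t)$. I would write the solution in the form
\begin{equation*}
  |\psi(t)\rrangle = g(t) \sum_n c_n(t)\, e^{-\imath \int_0^t \mu_n(t')dt'} |\Lambda_n(t)\rrangle,
\end{equation*}
where $\mu_0 = \tilde\lambda_0$ and, for $n\neq 0$, $\mu_n$ is chosen to absorb the diagonal matrix element, namely $\mu_n = \llangle \Lambda_n|g^{-1}\slashed D_x g|\Lambda_n\rrangle -\imath \llangle\Lambda_n|(g^{-1}\dot g + \partial_t)|\Lambda_n\rrangle$. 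The initial condition $|\psi(0)\rrangle = |\Lambda_0(0)\rrangle$ then reads $c_0(0)=1$, $c_{n\neq 0}(0)=0$. The task is to show $c_0(t) = 1 + \mathcal O(\max(\epsilon,1/T))$ and $c_{n\neq 0}(t) = \mathcal O(\max(\epsilon,1/T))$ uniformly on $[0,T]$.

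First I would substitute this ansatz into $\imath|\dot\psi\rrangle = \slashed D_x|\psi\rrangle$, multiply from the left by $g^{-1}$, and project onto $\llangle\Lambda_m|$. The resulting ODE system is
\begin{equation*}
  \imath \dot c_m = \sum_{n\neq m} c_n\, e^{\imath\int_0^t(\mu_m-\mu_n)dt'} K_{mn}(t), \qquad K_{mn} = \llangle\Lambda_m|g^{-1}\slashed D_x g|\Lambda_n\rrangle - \imath\llangle\Lambda_m|(g^{-1}\dot g + \partial_t)|\Lambda_n\rrangle,
\end{equation*}
the choice of $\mu_m$ having killed the diagonal term. The point of introducing $g$ is precisely that for the row $m=0$ the diagonal contribution $\llangle\Lambda_0|(g^{-1}\dot g + \partial_t)|\Lambda_0\rrangle$ vanishes identically: using $g^{-1}\dot g = -\imath \mathfrak A$ and the defining relation $\mathfrak A \rho_{\Lambda 0} = -\imath \tr_{\mathcal H}|\dot\Lambda_0\rrangle\llangle\Lambda_0|$, together with $\llangle\Lambda_0|(\bullet\otimes\id_{\mathcal H})|\Lambda_0\rrangle = \tr_{\mathbb C^2}(\bullet\,\rho_{\Lambda 0})$, one gets $\llangle\Lambda_0|g^{-1}\dot g|\Lambda_0\rrangle = -\llangle\Lambda_0|\dot\Lambda_0\rrangle$. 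Hence the effective phase for the reference channel is exactly $\tilde\lambda_0$, consistent with the statement.

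Next I would apply the standard adiabatic integration-by-parts trick to each off-diagonal term, using the oscillating factor to generate a denominator. Writing $c_n e^{\imath\int(\mu_m - \mu_n)} = \frac{d}{dt}\!\left(\frac{c_n e^{\imath\int(\mu_m - \mu_n)}}{\imath(\mu_m-\mu_n)}\right) - \frac{\dot c_n e^{\imath\int(\mu_m - \mu_n)}}{\imath(\mu_m-\mu_n)} + \frac{c_n e^{\imath\int(\mu_m - \mu_n)}}{\imath(\mu_m-\mu_n)^2}(\dot\mu_m-\dot\mu_n)$ and integrating, the resulting expressions feature exactly the quotients $K_{0n}/(\mu_n-\tilde\lambda_0)$. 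One then verifies $\mu_n - \tilde\lambda_0 = \lambda_n - \tilde\lambda_0 + \mathcal O(\epsilon, 1/T)$ via $g^{-1}\slashed D_x g = \slashed D_x + g^{-1}[\slashed D_x,g]$ and hypothesis (ii), so that the denominators in (i)–(ii) are the correct ones. A Grönwall argument on the resulting integral inequalities then closes the bootstrap: writing $c_0 = 1+\delta_0$ and bounding $|\delta_0|, |c_{n\neq 0}|$ simultaneously gives a uniform estimate of size $\mathcal O(\max(\epsilon,1/T))$, after which the ansatz yields the claim.

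The main technical obstacle, I expect, will be twofold. First, one has to handle the fact that $g$ and $\mathfrak A$ act only on the $\mathbb C^2$ factor while $|\Lambda_n\rrangle$ lives in $\mathbb C^2\otimes\mathcal H$, so matrix elements of $g^{-1}\dot g$ reduce to partial traces against the spin density matrices rather than to scalars; keeping bookkeeping consistent with the $C^*$-module structure invoked in the paper is delicate. Second, since $\mathfrak A\rho_{\Lambda 0}$ only determines $\mathfrak A$ up to a term annihilating $\rho_{\Lambda 0}$, one has to check that the decisive identity $\llangle\Lambda_0|(g^{-1}\dot g + \partial_t)|\Lambda_0\rrangle = 0$ holds independently of this gauge freedom, and that the summation over $n$ (potentially infinite, since $\dim\mathcal H$ may be infinite) converges — which in practice requires either truncating to a finite gap-separated band of eigenvalues or adding a trace-class assumption compatible with (i) and (ii).
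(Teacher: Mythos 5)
Your strategy is the right one (expand in the dressed basis $g|\Lambda_n\rrangle$, project, use the vanishing of the diagonal coupling on the reference channel, integrate by parts, close with a bootstrap), and it is close in spirit to the paper's argument, but there is a concrete gap in the way you set up the phases. You choose $\mu_n$ for $n\neq 0$ to absorb the \emph{full} diagonal
$\llangle\Lambda_n|g^{-1}\slashed D_x g|\Lambda_n\rrangle-\imath\llangle\Lambda_n|(g^{-1}\dot g+\partial_t)|\Lambda_n\rrangle$,
so that the denominator produced by integration by parts on the crucial $m=0$ row is $\tilde\lambda_0-\mu_n$. To link this to the hypotheses, whose denominators are $\lambda_n-\tilde\lambda_0$, you assert $\mu_n-\tilde\lambda_0=\lambda_n-\tilde\lambda_0+\mathcal O(\epsilon,1/T)$ ``via hypothesis (ii).'' But hypothesis (ii) controls only off-diagonal matrix elements involving $\Lambda_0$, namely $\llangle\Lambda_0|g^{-1}\slashed D_x g|\Lambda_n\rrangle$ for $n\neq 0$; it says nothing about the diagonal elements $\llangle\Lambda_n|g^{-1}\slashed D_x g|\Lambda_n\rrangle$ or $\llangle\Lambda_n|(g^{-1}\dot g+\partial_t)|\Lambda_n\rrangle$, so the dressed diagonal could drift arbitrarily far from $\lambda_n$ without violating (i)–(ii). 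The paper avoids this pitfall entirely by \emph{not} absorbing the $n\neq 0$ diagonals: it takes $\varphi_n=\int_0^t\lambda_n$ for $n\neq 0$ (bare eigenvalue) and $\varphi_0=\int_0^t\tilde\lambda_0$, so that the $m=0$ row carries oscillating phases $\exp\imath\int(\tilde\lambda_0-\lambda_p)$ and the integration-by-parts denominators are directly $\lambda_p-\tilde\lambda_0$, exactly matching (i)–(ii). The residual non-oscillating diagonal $\llangle\Lambda_n|g^{-1}\slashed D_x g|\Lambda_n\rrangle-\lambda_n$ for $n\neq 0$ is then simply never estimated, because only the $n=0$ row is treated directly.

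There is a second, smaller difference. The paper deduces $c_p(t)=\mathcal O(\max(\epsilon,1/T))$ for $p\neq 0$ from $c_0(t)=1+\mathcal O(\max(\epsilon,1/T))$ together with conservation of a norm in which $g$ is unitary (the co-moving observer's inner product, citing [Viennot5]); it does not run a Grönwall bootstrap. Your Grönwall route is viable and arguably more self-contained, but to close it you need a priori bounds on $c_p$ and $\dot c_p$ entering the boundary and remainder terms of the integration by parts, and since $g\in GL(2,\mathbb C)$ may not be unitary in the ambient inner product, you cannot get those bounds for free from $\|\psi\|=1$; you would have to reintroduce the modified inner product anyway. Your closing concerns about the gauge ambiguity of $\mathfrak A$ when $\rho_{\Lambda_0}$ is singular and about convergence of the sum over $n$ are legitimate; the paper disposes of the first by verifying $\llangle\Lambda_0|(\partial_t-\imath\mathfrak A)|\Lambda_0\rrangle=0$ separately in the separable and entangled cases, and leaves the second at the same level of informality you flag.
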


\begin{proof}
  $(|\Lambda_n\rrangle)_n$ forming a basis of $\mathbb C^2 \otimes \mathcal H$ and $g$ being invertible, let $c_n \in \mathbb C$ be such that
\begin{equation}
  |\psi \rrangle = \sum_n c_n e^{-\imath \varphi_n} g|\Lambda_n\rrangle
\end{equation}
  with $\varphi_n = \int_0^t \lambda_n(t)dt$ for $n\not=0$ and $\varphi_0=\int_0^t \tilde \lambda_0(t)dt$. By injecting this expression in the Schr\"odinger-like equation and by projecting onto $\llangle \Lambda_n|g^{-1} e^{\imath \varphi_n}$ we find
  \begin{eqnarray}
    & & \imath \dot c_n + \lambda_n c_n + \imath \sum_p c_p e^{\imath(\varphi_n-\varphi_p)} \llangle \Lambda_n|g^{-1}\dot g|\Lambda_p \rrangle \nonumber \\
    & & \qquad +\imath \sum_p c_p e^{\imath (\varphi_n-\varphi_p)} \llangle \Lambda_n|\dot \Lambda_p\rrangle \nonumber \\
    & & \qquad \quad = \sum_p c_p  e^{\imath (\varphi_n-\varphi_p)} \llangle \Lambda_n|g^{-1}\slashed D_x g|\Lambda_p\rrangle
  \end{eqnarray}
  By integrating this expression, we have
  \begin{eqnarray}\label{eqC}
    c_n(t) & = & c_n(0) - \imath \sum_p \int_0^t e^{\imath (\varphi_n-\varphi_p)} \llangle \Lambda_n|(\partial_t-\imath \mathfrak A)|\Lambda_p\rrangle c_p(t)dt \nonumber \\
    & & - \imath \sum_p \int_0^t e^{\imath (\varphi_n-\varphi_p)} (\llangle \Lambda_n|g^{-1}\slashed D_x g|\Lambda_p\rrangle - \lambda_n \delta_{np}) c_p(t)dt
  \end{eqnarray}
  \begin{itemize}
  \item If $|\Lambda_0\rrangle = |0_x\rangle \otimes |\lambda_0\rangle$ is separable:
    \begin{equation}
      \mathfrak A = -\imath |\dot 0_x\rangle\langle 0_x|-\imath \langle \lambda_0|\dot \lambda_0\rangle|0_x\rangle\langle 0_x|
    \end{equation}
    and then $\llangle \Lambda_0|(\partial_t-\imath \mathfrak A)|\Lambda_0\rrangle = 0$.
 \item If $|\Lambda_0\rrangle = |0\rangle \otimes |\Lambda^0\rangle+|1\rangle \otimes |\Lambda^1\rangle$ is entangled:
\begin{eqnarray}
  \mathfrak A & = & -\imath(\langle \Lambda^0_*|\dot \Lambda^0_0\rangle|0\rangle \langle 0|+\langle \Lambda^1_*|\dot \Lambda^0_0\rangle|0\rangle \langle 1|\nonumber \\
  & & +\langle \Lambda^0_*|\dot \Lambda^1_0\rangle |1\rangle \langle 0|+\langle \Lambda^1_*|\dot \Lambda^1_0\rangle |1\rangle \langle 1| \label{structA}
\end{eqnarray}
with $\llangle \Lambda_*|=\llangle \Lambda_0|\rho_{\Lambda 0}^{-1}$; and then $\llangle \Lambda_0|(\partial_t-\imath \mathfrak A)|\Lambda_0\rrangle = 0$.
  \end{itemize}

  The summations in eq.(\ref{eqC}) for the case $n=0$ are then restricted on $p\not=0$. By integration by parts, we have for $p\not=0$
  \begin{eqnarray}
    & & \int_0^t e^{- \imath(\varphi_p-\varphi_0)} \llangle \Lambda_0|(\partial_t-\imath \mathfrak A)|\Lambda_p\rrangle c_p(t)dt \nonumber \\
    & & \quad = \mathcal O\left(\sup_t \frac{\llangle \Lambda_0|(\partial_t-\imath \mathfrak A)|\Lambda_p\rrangle}{\lambda_p-\tilde \lambda_0} \right)
  \end{eqnarray}
  \begin{eqnarray}
    & & \int_0^t e^{-\imath(\varphi_p-\varphi_0)} \llangle \Lambda_0|g^{-1}\slashed D_x g|\Lambda_p\rrangle c_p(t)dt \nonumber \\
    & & \quad = \mathcal O\left(\sup_t \frac{\llangle \Lambda_0|g^{-1}\slashed D_x g|\Lambda_p\rrangle}{\lambda_p-\tilde \lambda_0} \right)
  \end{eqnarray}
  Finally we have $c_0(t) = 1 + \mathcal O(\max(\epsilon,1/T))$. By conservation of the norm defined by the inner product of an observer comoving with the test particle (for which $g$ is unitary (see \cite{Viennot5})), this implies that $c_p(t) = \mathcal O(\max(\epsilon,1/T))$ $\forall p\not=0$.
\end{proof}

In comparison with the strong adiabatic assumption:
\begin{equation}
  \max_{n\not=0} \sup_{t \in [0,T]} \left|\frac{\llangle \Lambda_0|\partial_t|\Lambda_n\rrangle}{\lambda_n-\tilde \lambda_0} \right| = \mathcal O(1/T)
\end{equation}
the weak adiabatic assumption consists to replace in the definition of the non-adiabatic couplings, the time derivative by the covariant derivative $\partial_t-\imath \mathfrak A$. The interpretation of this fact is obvious with the structure of $\mathfrak A$ eq.(\ref{structA}), it represents transitions of spin states without transition from the ground state of $\mathfrak X$ represented by the couple $(|\Lambda_0^0\rangle,|\Lambda_0^1\rangle)$ (a transition of noncommutative space states being $(|\Lambda_0^0\rangle,|\Lambda_0^1\rangle) \to (|\Lambda_n^0\rangle,|\Lambda_n^1\rangle)$ with $n\not= 0$). The weak adiabatic assumption states then that the small quantities are the non-adiabatic couplings minus the part associated with only spin state transitions (without space state transition); and then only space state transitions are forbidden by this assumption in accordance with the discussion of the beginning of this appendix.\\
As for the strong adiabatic approximation, crossings of the ground eigenvalue with other eigenvalue are forbidden, and so we must have a gap condition $\inf_t |\lambda_n-\tilde \lambda_0|>\gamma>0$ ($\forall n\not=0$).\\
Note that the effective ground eigenvalue $\tilde \lambda_0$ is zero in the case where the quasi-coherent state is separable. Indeed
\begin{equation}
  \llangle \Lambda_0|g^{-1}\slashed D_xg|\Lambda_0\rrangle = \langle 0_x|g^{-1}\sigma_ig|0_x\rangle \langle \lambda_0|(X^i-x^i)|\lambda_0\rangle = 0
\end{equation}
since
\begin{equation}
  \llangle \Lambda_0|X^i|\Lambda_0\rrangle = x^i \Rightarrow \langle \lambda_0|X^i|\lambda_0\rangle  = x^i
\end{equation}
  So the dynamical phase is reduced to be equal to $1$.\\

  Since $\pi_\Lambda \omega_{\Lambda *}(L_{X^i}) = \imath e_i^a \partial_a$, $\slashed D_x$ plays the role of $\imath \sigma^i e_i^a\partial_a$.
\begin{equation}
  g = e^{-\frac{\imath}{2} \int_0^t \tilde A dt} \Ted^{\int_0^t \omega dt} \Rightarrow g^{-1}\slashed D_x g = \Teg^{- \int_0^t \omega dt} \slashed D_x \Ted^{\int_0^t \omega dt}
  \end{equation}
  $g^{-1}\slashed D_x g$ plays the role of $\imath U_\omega \sigma^i e_i^a \partial_a U_\omega^{-1} = \imath U_\omega \sigma^i U_\omega^{-1} e_i^a \nabla_a$ (with $\nabla_a = \partial_a + \omega_a$). So the condition $\max_{n\not=0} \sup_{t \in [0,T]} \left|\frac{\llangle \Lambda_0|g^{-1} \slashed D_x g|\Lambda_n\rrangle}{\lambda_n-\tilde \lambda_0} \right| = \mathcal O(\epsilon)$ is the noncommutative version of negligible non-adiabatic couplings defined by space covariant derivatives. The condition (ii) is then the space counterpart of the condition (i) (dealing with a time covariant derivative), ensuring an equal treatment between time and space in the adiabatic assumptions and ensuring the covariance of the adiabatic approximation in the present context.

The weak adiabatic transport formula includes specific geometric non-adiabatic transitions:
\begin{prop}
  Let $\{|\Lambda_n\rrangle\}_n$ be the eigenvectors of $\slashed D_x$. Let $\mathbb P = \{1,...,2\dim \mathcal H\}$, and $A,\mathscr A \in \mathcal L(\ell^2(\mathbb P))$ be such that $A_{np} = -\imath \llangle \Lambda_n|\dot \Lambda_p \rrangle$ and $\mathscr A_{np} = \llangle \Lambda_n|\mathfrak A|\Lambda_p \rrangle$. We have
  \begin{equation}
    \Ted^{-\imath \int_0^t\mathfrak A dt} = \sum_{n,p} \left[\Teg^{-\imath \int_0^t Adt} \Ted^{-\imath \int_0^t(\mathscr A-A)dt} \right]_{np} |\Lambda_n \rrangle \llangle \Lambda_p|
  \end{equation}
\end{prop}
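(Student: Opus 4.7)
The plan is to show that both sides of the identity, regarded as operator-valued functions on $\mathbb C^2 \otimes \mathcal H$, satisfy the same linear Cauchy problem. Set $g(t) = \Ted^{-\imath \int_0^t \mathfrak A\,dt'}$ (implicitly tensored with $\id_{\mathcal H}$); the convention for the counter-ordered exponential, together with the relation $g^{-1}\dot g = -\imath \mathfrak A$ already used in the proof of Theorem~\ref{adiabtheo}, gives $\dot g = -\imath g\,\mathfrak A$ with $g(0) = \id$. First I would expand $g(t)$ in the instantaneous eigenbasis of $\slashed D_{x(t)}$, defining $M_{np}(t) = \llangle \Lambda_n(t)|g(t)|\Lambda_p(t)\rrangle$ so that $g(t) = \sum_{n,p} M_{np}(t)\,|\Lambda_n(t)\rrangle \llangle \Lambda_p(t)|$ and $M(0) = \id$.

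Next I would derive the matrix ODE obeyed by $M$. Since $\slashed D_x$ is self-adjoint with non-degenerate spectrum, the $|\Lambda_n\rrangle$ can be chosen orthonormal, and the definition $A_{np} = -\imath \llangle \Lambda_n|\dot \Lambda_p\rrangle$ yields
\begin{equation}
|\dot \Lambda_n\rrangle = \imath \sum_m A_{mn}\,|\Lambda_m\rrangle, \qquad \llangle \dot \Lambda_n| = -\imath \sum_m A_{nm}\,\llangle \Lambda_m|.
\end{equation}
Differentiating $M_{np}$ by the product rule and substituting these identities together with $\llangle \Lambda_n|\dot g|\Lambda_p\rrangle = -\imath (M\mathscr A)_{np}$ produces the linear system
\begin{equation}
\dot M = -\imath A M - \imath M(\mathscr A - A), \qquad M(0) = \id.
\end{equation}

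Finally I would verify that the factorised ansatz $M(t) = V(t)W(t)$, with $V(t) = \Teg^{-\imath \int_0^t A\,dt'}$ and $W(t) = \Ted^{-\imath \int_0^t(\mathscr A-A)\,dt'}$, solves this Cauchy problem: the two conventions give $\dot V = -\imath A V$ and $\dot W = -\imath W (\mathscr A - A)$, whence $(VW)^{\cdot} = -\imath A (VW) - \imath (VW)(\mathscr A - A)$ and $(VW)(0) = \id$. Uniqueness for linear ODEs on $\mathcal L(\ell^2(\mathbb P))$ then identifies $M_{np}(t)$ with the $(n,p)$ entry of $\Teg^{-\imath \int_0^t A\,dt'}\, \Ted^{-\imath \int_0^t(\mathscr A - A)\,dt'}$, which is the claim.

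The main obstacle is essentially bookkeeping rather than substance: one has to keep straight that $\Teg$ is left-acting (chronological, $\dot V = -\imath A V$) while $\Ted$ is right-acting (anti-chronological, $\dot W = -\imath W B$), and to recognise that the factorisation $M = VW$ matches these conventions precisely because $\mathscr A$ sits on the right of $g$ in $\dot g = -\imath g \mathfrak A$. Once those conventions are pinned down, the argument reduces to the standard trick of splitting a first-order linear matrix ODE driven on the left by $A$ and on the right by $\mathscr A - A$ into an ordered product of two evolution operators of opposite chirality.
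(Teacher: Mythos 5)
Your proposal is correct and takes essentially the same route as the paper: you differentiate the matrix elements $M_{np}=\llangle\Lambda_n|g|\Lambda_p\rrangle$ of the operator-valued phase and obtain the first-order system $\dot M = -\imath A M - \imath M(\mathscr A - A)$, which the factorised product $\Teg^{-\imath\int A}\,\Ted^{-\imath\int(\mathscr A-A)}$ also solves, so uniqueness does the rest. The paper packages the same computation slightly differently, writing $g$ as $\sum[\Teg^{-\imath\int X}]_{np}|\Lambda_n\rrangle\llangle\Lambda_p|$ and solving for the generator $X = A + \Teg^{-\imath\int X}(\mathscr A - A)\Teg^{\imath\int X}$ before recognising the factorisation; your version avoids the auxiliary $X$ and is, if anything, a bit more direct.
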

Then we have
\begin{equation}
  \Ted^{-\imath \int_0^t\mathfrak A dt}|\Lambda_0\rrangle = \sum_{n} \left[\Teg^{-\imath \int_0^t Adt} \Ted^{-\imath \int_0^t(\mathscr A-A)dt} \right]_{n0} |\Lambda_n \rrangle
\end{equation}
Remark: if $\dim \mathcal H<\infty$, $\ell^2(\mathbb P)=\mathbb C^{2\dim \mathcal H}$, otherwise $\ell^2(\mathbb P) = \ell^2(\mathbb N^*)$ (space of square summable series).

\begin{proof}
  Let $X \in \mathcal L(\ell^2(\mathbb P))$ be such that $\Ted^{-\imath \int_0^t \mathfrak Adt} = \sum_{n,p} \left[\Teg^{-\imath \int_0^t Xdt}\right]_{np}|\Lambda_n\rrangle \llangle \Lambda_p|$. After the derivation with respect to $t$ of this equation and projection on the right onto $|\Lambda_p\rrangle$ and on the left onto $\llangle \Lambda_n|$, we find
  \begin{eqnarray}
    -\imath \sum_k \left[\Teg^{-\imath \int_0^t Xdt}\right]_{nk} \mathscr A_{kp} & = & - \imath \sum_k X_{nk} \left[\Teg^{-\imath \int_0^t Xdt}\right]_{kp} \nonumber \\
    & & + \sum_k \left[\Teg^{-\imath \int_0^t Xdt}\right]_{kp} \llangle \Lambda_n|\dot \Lambda_k \rrangle \nonumber \\
    & & + \sum_k \left[\Teg^{-\imath \int_0^t Xdt}\right]_{nk} \llangle \dot \Lambda_k|\Lambda_p \rrangle
  \end{eqnarray}
  $\llangle \Lambda_k|\Lambda_p \rrangle = \delta_{kp} \Rightarrow \llangle \dot \Lambda_k|\Lambda_p \rrangle = -\llangle \Lambda_k|\dot \Lambda_p \rrangle$, it follows that
  \begin{equation}
    \Teg^{-\imath \int_0^t Xdt} \mathscr A = X \Teg^{-\imath \int_0^t Xdt} + A \Teg^{-\imath \int_0^t Xdt} - \Teg^{-\imath \int_0^t Xdt} A
  \end{equation}
  and then
  \begin{equation}
    X = A + \Teg^{-\imath \int_0^t Xdt} (\mathscr A - A) (\Teg^{-\imath \int_0^t Xdt})^{-1}
  \end{equation}
  and then $\Teg^{-\imath \int_0^t Xdt} = \Teg^{-\imath \int_0^t Adt} \Ted^{-\imath \int_0^t(\mathscr A-A)dt}$.
\end{proof}
The transitions associated with the spin state changes (Thomas and de Sitter precessions) are then essentially governed by the ``matrix'' $\mathscr A$, representation of $\mathfrak A$ the operator of spin state transitions without space state transition. 

\section{About the noncommutative geometry, diffeomorphism gauge changes and torsion} \label{NCgeom}
\subsection{Noncommutative gauge potential}
Suppose that $\exists \Lambda^0(x),\Lambda^1(x) \in \Env(\mathfrak X)$ and $\exists |\Omega \rangle \in \mathcal H$ such that
\begin{equation}
  |\Lambda(x) \rrangle = |0\rangle \otimes \Lambda^0(x)|\Omega \rangle + |1\rangle \otimes \Lambda^1(x)|\Omega \rangle
\end{equation}
(this is the case for examples treated section \ref{examples}). $|0\rangle \otimes \Lambda^0 + |1\rangle \otimes \Lambda^1 \in \mathbb C^2 \otimes \Env(\mathfrak X)$ appears as a section of a noncommutative spinorial bundle over $\mathscr M$ which can be viewed as a Dirac field onto $\mathscr M$. The pull-back of the geometric phase generator:
\begin{equation}
  \omega_\Lambda^* \pi_\Lambda^* A = \langle \Omega|\Lambda_\alpha^\dagger \partial_a \Lambda^\alpha |\Omega \rangle \gamma^{ab} \delta_{ji} \frac{\partial x^j}{\partial u^b} \dnc X^i \in \Omega^1_{\Der}\mathfrak X \otimes \mathcal C^{\infty}(M_\Lambda)
\end{equation}
  can be viewed as a connection onto an associated noncommutative $U(1)$-bundle on $\mathscr M$. In accordance with the fact that $\mathscr M$ is a quantisation of a flat space, this $U(1)$-bundle is flat, indeed by using the Koszul formula we have
\begin{eqnarray}
  \dnc \omega_\Lambda^* \pi_\Lambda^* A(L_{X^i},L_{X^j}) & = & L_{X^i} \omega_\Lambda^* \pi_\Lambda^* A(L_{X^j}) - L_{X^j} \omega_\Lambda^* \pi_\Lambda^* A(L_{X^i}) \nonumber \\
  & & \quad - \omega_\Lambda^* \pi_\Lambda^* A([L_{X^i},L_{X^j}]) \\
  & = & (\omega_\Lambda^* \pi_\Lambda^* A)_k ([X^i,[X^j,X^k]]+[X^j,[X^k,X^i]] \nonumber \\
  & & \quad +[X^k,[X^i,X^j]]) \\
  & = & 0
\end{eqnarray}
by the Jacobi identity in the Lie algebra $\mathfrak X$, and so $\omega_\Lambda^* \pi_\Lambda^* A$ is $\dnc$-closed. In the same manner we have
\begin{eqnarray}
  \omega_\Lambda^* \pi_\Lambda^*(\gamma_{ab}du^adu^b) & = & \gamma_{ab} \gamma^{ac} \gamma^{bd} \delta_{ik}\delta_{jl} \frac{\partial x^k}{\partial u^c} \frac{\partial x^l}{\partial u^d} \dnc X^i \otimes \dnc X^j \\
  & = & \delta_{ij} \dnc X^i \otimes \dnc X^j
\end{eqnarray}
in accordance with the fact that $\mathscr M$ is the quantisation of a flat space.\\

\subsection{Diffeomorphism gauge changes}
Let $\varphi \in \Diff(M_\Lambda)$ be a diffeomorphism of the eigenmanifold (we consider here only diffeomorphisms leaving invariant the foliation). As expected, the dual triads transform under diffeomorphism gauge changes as $(\varphi^* \tilde e)^i_a = \tilde e^i_b(\varphi(u)) \frac{\partial \varphi^b}{\partial u^a}$ whereas $\tilde e^i_0$ is invariant. It is interesting to consider the behaviour of the shift vector (generator of the Berry phase) under a diffeomorphism gauge change. Basically, we have $\varphi^* A = A_a(\varphi(u)) \frac{\partial \varphi^a}{\partial u^b} du^b$. But suppose that their exists a displacement operator defined by:
\begin{equation}
  \forall u,v \in M_\Lambda, \Dis(v,u) \in U(\mathcal H), \text{ such that }\Dis(v,u)|\Lambda(u)\rrangle = |\Lambda(v) \rrangle
\end{equation}
(this is the case for examples treated section \ref{examples} when the quasi-coherent state is separable). The possibility of the existence of a displacement operator for quasi-coherent states is stated by analogy with the case of coherent states \cite{Perelomov,Puri}. In that case:
\begin{equation}
  A_a(v)dv^a =  A_a(u)du^a -\imath \llangle \Lambda(u)|(\Dis^{-1}d_{(2)} \Dis)|\Lambda(u) \rrangle
\end{equation}
(where $d_{(2)}$ is the exterior derivative of $M_\Lambda^2$). It follows that $\varphi^* A = A + \underline \eta$ with
\begin{equation}
  \underline \eta(u) = - \imath \llangle \Lambda(u)|(\Dis(\varphi(u),u)^\dagger d\Dis(\varphi(u),u))|\Lambda(u) \rrangle
\end{equation}
The gauge potential-transformation $\underline \eta$ is an element of the 2-connection of a categorical $U(1)$-bundle \cite{Viennot8} built over the base category $\mathcal M_\Lambda$ where the objects $\mathrm{Obj}(\mathcal M_\Lambda) = M_\Lambda$ are the points of the eigenmanifold and where the arrows are $\mathrm{Morph}(\mathcal M_\Lambda) = M_\Lambda \times \Diff(M_\Lambda)$; the source, target and identity maps being defined by:
\begin{equation}
  s(u,\varphi) = u, \quad t(u,\varphi)=\varphi(u), \quad \id_u = (u,\id_{M_\Lambda})
\end{equation}
and the composition of arrows being defined by
\begin{equation}
  (\varphi_1(u),\varphi_2) \circ (u,\varphi_1) = (u,\varphi_2 \circ \varphi_1)
\end{equation}
$\Diff(M_\Lambda)$ can be then viewed as the arrow field space over $M_\Lambda$. Since $\slashed D_y = \slashed D_x + \sigma_i(x^i-y^i)$, we have
\begin{equation}
  \llangle \Lambda(\varphi(u))|\sigma_i|\Lambda(u)\rrangle(x^i(\varphi(u))-x^i(u))= 0
\end{equation}
 $\llangle \Lambda(\varphi(u))|\sigma^i|\Lambda(u)\rrangle \partial_i$ is then a vector normal to the arrow $(u,\varphi)$ (generalising the fact that $\llangle \Lambda(u)|\sigma^i|\Lambda(u)\rrangle \partial_i$ is a vector normal to $M_\Lambda$ at $u$). The categorical structure of $\mathcal M_\Lambda$ is not external to the noncommutativity of $\mathscr M$, indeed
\begin{eqnarray}
  x^i(\varphi(u))-x^i & = & \llangle \Lambda(u)|(\Dis^\dagger X^i \Dis - X^i)|\Lambda(u) \rrangle \\
  & = & \omega_\Lambda (\Dis^\dagger[X^i,\Dis]) \\
  & = & \imath \omega_\Lambda (\mathbf{A}_{\Dis}(L_{X^i}))
\end{eqnarray}
where $\mathbf{A}_{\Dis} = -\imath \Dis^\dagger \dnc \Dis \in \Omega^1_\Der \mathfrak X$ is a Berry-like gauge potential of a noncommutative bundle over $\mathscr M$ ($\Dis \in \Env(\mathfrak X)$, $\dnc$ is anew the Koszul noncommutative derivative). This can be generalised to a noncommutative spinor $\Dis \in \mathbb C^2 \otimes \Env(\mathfrak X)$ with $\Dis = |0\rangle\langle 0|\otimes \Dis^0 + |1\rangle\langle 1|\otimes \Dis^1$ and $\Dis^a(v,u)|\Lambda^a(u)\rangle = |\Lambda^a(v)\rangle$ ($a \in \{0,1\}$) (this is the case for examples treated section \ref{examples} when the quasi-coherent state is entangled), and $\mathbf{A}_{\Dis} = -\imath \Dis_a^\dagger \dnc \Dis^a$.

\subsection{Noncommutative torsion} \label{NCtorsion}
By construction, we have
\begin{equation}
  \langle du^a, \pi_\Lambda \omega_{\Lambda *}(L_{X^i}) \rangle_M = \omega_\Lambda\left(\langle \pi^*_\Lambda \omega^*_\Lambda(du^a),L_{X^i} \rangle_{\mathfrak X} \right)
\end{equation}
\begin{equation}
  \iff e^a_i = \omega_\Lambda(E^a_i) \text{ with } E^a_i = \delta_{il} \Theta^{lj} \delta_{jk} \gamma^{ab} \frac{\partial u^k}{\partial u^b}
\end{equation}
with $\Theta^{lj} = \Theta(X^l,X^j) = -\imath[X^l,X^j]$. $\{E^a_i\}_{a,i}$ can be viewed as the ``noncommutative frame'' in $\mathscr M$ associated with the frame $\{e^a_i\}_{a,i}$ of the eigenmanifold $M_\Lambda$. $\mathscr M$ is the quantisation of a flat space, but it is not necessary torsion free. We introduce the following analogue of the Weitzenb\"ock connection \cite{Aldrovandi} by:
\begin{equation}
  \mathring{\Gamma}^a_{ij} = -\imath L_{X^j}(E^a_i)
\end{equation}
which defines an analogue of the  Weitzenb\"ock torsion:
\begin{eqnarray}
  \mathring T^a_{ij} & = & -\imath L_{X^j}(E^a_i) +\imath L_{X^i}(E^a_j) \\
  & = & -\imath ([X_j,\Theta_{im}]-[X_i,\Theta_{jm}]) \gamma^{ab} \frac{\partial x^k}{\partial u^b} \\
  & = & -\imath [X^m,\Theta_{ij}] \delta_{mk} \gamma^{ab} \frac{\partial x^k}{\partial u^b}
\end{eqnarray}
where we have used the Jacobi identity $[X_j,[X_i,X_m]]+[X_i,[X_m,X_j]]=-[X_m,[X_j,X_i]]$. $\mathring T^a_{ij} \partial_a = \pi_\Lambda(\mathring T^k_{ij} \partial_k)$ with $\mathring T^k_{ij} = -\imath [X^k,\Theta_{ij}] = -\imath [X^k,[X_i,X_j]]$ the torsion of $\mathscr M$.\\
Note the difference with the torsion introduced section \ref{torsion}, $\omega_{\Lambda}(\mathring T^k_{ij})$ is the torsion of the embedding space $\mathbb R^3$ whereas $T^a_{bc}$ is the torsion of the eigenmanifold $M_\Lambda$. We can view $T$ has a torsion intrinsic to $M_\Lambda$ (link to $\tilde \theta=F^{-1}$ and then to the Berry curvature) whereas $\mathring T$ is an extrinsic torsion (linked to $\Theta$ and then to the noncommutative character of $\mathscr M$).\\

In \cite{Steinacker} a torsion in a IKKT matrix model is introduced. A direct comparison of the two models is difficult because of two reasons. First, the time is quantised in the IKKT model and not in the BFSS model. The comparison must therefore be limited to the space part. The model presented in \cite{Steinacker2} is based on an irreducible representation of $\mathfrak{so}(4,2)$ on a Hilbert space $\mathcal H_N$ (with $N \in \mathbb N$) where $\mathcal L(\mathcal H_N)$ is interpreted as a quantised version of $\mathcal C^\infty(\mathbb CP^{1,2})$, $\mathbb CP^{1,2}$ being the total space of a fibre bundle on a classical spacetime with fibre $\mathbb S^2$ (the sphere). $\mathfrak a \otimes \Env(\mathfrak X)$ plays in the present model the same role than the quantisation of $\mathcal C^\infty(\mathbb CP^{1,2})$, but with a single spin sector (corresponding to $s=\frac{1}{2}$) whereas in the model of \cite{Steinacker2} the quantisation of $\mathcal C^\infty(\mathbb CP^{1,2})$ is a graded algebra with an infinite number of spin sectors. At the semi-classical limit $N \to \infty$, the Weitzenb\"ok torsion found in \cite{Steinacker2} is (in the notations and the conventions of the present paper) $\hat T^m_{ij} = \{x^m,\hat \Theta_{ij}\}$ where $-\imath[\cdot,\cdot] \to \{\cdot,\cdot\}$ is the Poisson bracket obtained at the semi-classical limit. This expression is very similar to our Weitzenb\"ock torsion of $\mathscr M$, except that $\hat \Theta_{ij} = -\imath [Z_i,Z_j]$ where $Z_i$ is in the quantisation of $\mathcal C^\infty(\mathbb CP^{1,2})$ (analogous of $\mathfrak a \otimes \Env(\mathfrak X)$) and is not $X_i \in \mathfrak X$. More precisely, by choosing $Z_a = \hat \tau_a + \hat {\mathfrak A}_a$, where $\hat \tau_a$ is the coordinates of the fibres $\mathbb S^2$ and $\hat {\mathfrak A}_a$ is viewed as a perturbation of the cosmic background, we have \cite{Steinacker2}
\begin{equation}
  \hat \Theta_{ab} = \kappa \theta_{ab} + \{\hat \tau_b,\hat {\mathfrak A}_a\} - \{\hat \tau_a,\hat {\mathfrak A}_b\} +\{\hat {\mathfrak A}_b,\hat {\mathfrak A}_a\}
\end{equation}
where $\kappa$ is a constant parameter of the model and $\theta_{ab} = \{x_a,x_b\}$ is the semi-classical limit of $-\imath[X_a,X_b]$. The first part of the torsion $\kappa \{x^m,\theta_{ij}\}$ is completely analogous to our Weitzenb\"ock torsion of $\mathscr M$ $\mathring T^m_{ij}$. This part can be viewed as the torsion of the cosmic background in the interpretation of \cite{Steinacker2}. In our model, the embedding space $\mathbb R^3$ (or its quantisation $\mathscr M$) are well a background in which the eigenmanifold lives. $\mathring T$ has then well a similar interpretation in our framework. The second part of the torsion $\{x^m, \{\hat \tau_b,\hat {\mathfrak A}_a\} - \{\hat \tau_a,\hat {\mathfrak A}_b\}\}$ has a structure similar to our torsion of $M_\Lambda$:
\begin{equation}
T^m_{ab} = \frac{1}{2} \mathrm{Im} \tr\left(\sigma^m([\tau_b,\mathfrak A_a]-[\tau_a,\mathfrak A_b])\right)
\end{equation}
We can think that it plays the same role, but with $T^m_{ab}$ limited to a single spin sector and with commutators not directly affected by the semi-classical limit ($\tau_a$ and $\mathfrak A_a$ are matrices of $\mathfrak a = \mathfrak M_{2 \times 2}(\mathbb C)$, their sizes do not change with the increase of $N$). We cannot see $\mathfrak A$ as a perturbation. As the geometric phase generator it characterises the geometry of the adiabatic bundle with connection in which the transport is defined. However we can note that perturbation theory and adiabatic transport have strong similarities. For example, under a perturbation $\epsilon V$ an eigenvalue becomes $\lambda_0 + \epsilon \llangle \Lambda_0|V|\Lambda_0 \rrangle + \mathcal O(\epsilon^2)$ while in adiabatic transport the phase is generated by $\lambda_0 + \frac{\imath}{T} \langle \Lambda_0 | \frac{d}{dt} |\Lambda_0\rangle_*$ (with $\mathfrak A_a \dot u^a = -\imath \langle \Lambda_0 | \frac{d}{dt} |\Lambda_0\rangle_*$). Moreover the non-adiabatic couplings $\frac{\llangle \Lambda_0|\partial_t|\Lambda_n\rrangle}{\lambda_n - \lambda_0}$ (or those for the point (ii) theorem \ref{adiabtheo}) are similar to the coefficients of a first order perturbative expansion of the eigenvector $\frac{\llangle \Lambda_n|V|\Lambda_0\rrangle}{\lambda_0-\lambda_n}$. The nonlinear last part of the torsion $\{x^m, \{\hat {\mathfrak A}_b,\hat {\mathfrak A}_a\}\}$ seems to have no analogue in our formalism. The reason is that it is a term of the second order of perturbation. But the adiabatic approximation is of first order in $1/T$. To try to find an analogue of $\{x^m, \{\hat {\mathfrak A}_b,\hat {\mathfrak A}_a\}\}$, it would be necessary to compute the first non-adiabatic corrections. The torsion in \cite{Steinacker2} seems be an analogue of the sum of our intrinsic and extrinsic torsions (a sum which cannot be directly realised without projection in our model since $\mathscr M$ and $M_\Lambda$ have not the same dimension).

\section{Higher dimension and the Koopman approach} \label{higherdim}
\subsection{Fast dynamics of $\mathscr M$}\label{fastdyn}
$\dim \mathfrak X = 3$ but the eigenmanifold $M_\Lambda$ of the noncommutative manifold $\mathscr M$ (generated by the Lie algebra $\mathfrak X$) is only 2 dimensional. We lost one dimension in the emergent geometry with respect to the dimension of the quantised flat spacetime viewed by the ideal Galilean observer. To solve this problem, we can start with an higher dimensional quantised flat spacetime as in \cite{Karczmarek}. But we have considered in this paper a dimension reduction as in \cite{Berenstein}. It seems more natural that a fifth quantum dimension (after quantisation) emerges from the analysis of $\mathscr M$ in order to recover the original four dimensions of the classical pre-quantised spacetime in the classical emergent spacetime. This fifth dimension (or more precisely a fourth coordinate operator) emerges from the dynamics of $\mathscr M$ (still generated by a 3 dimensional Lie algebra $\mathfrak X$). Indeed, until now we have considered only time independent observables $(X^i)$, but these operators obey to a noncommutative Klein-Gordon equation \cite{Zarembo,Banks}:
\begin{equation} \label{NCKGeq}
  \ddot X^i - [X_j,[X^i,X^j]] = 0
\end{equation}
But the time evolution of $(X^i)$ is not necessarily slow, and the application of the adiabatic assumption onto eq.(\ref{NCDiraceq}) can fail. To solve this problem, we can use the Schr\"odinger-Koopman approach \cite{Viennot6}. This method consists to separate slow and fast variations and applying the adiabatic assumption onto eigenvectors of an extended operator including the parameters with fast variations as new quantum variables.\\
More explicitly, let $(X^i_0)$ be the generators of $\mathfrak X$, and suppose that $X^i(t) = w^i_j(t) X^j_0$ (we suppose that $X^i \in \mathfrak X$, but generalisations with $X^i \in \Env \mathfrak X$ are possible, with for example $X^i(t) = w^i_j(t) X^j_0+w^i_{jk}(t) X^j_0X^k_0+...$). Eq.(\ref{NCKGeq}) becomes then
\begin{equation}
  \ddot w^i_k - \delta_{jl}C^{i'j'}_{m'}C^{l'm'}_{k} w^l_{l'}w^i_{i'}w^j_{j'} = 0
\end{equation}
where $C^{ij}_k$ are the structure constants of $\mathfrak X$ ($[X^i_0,X^j_0] = C^{ij}_k X^k_0$), or equivalently
\begin{eqnarray} \label{NLdyn}
  \dot w^i_k = p^i_k & \\
  \dot p^i_k = \delta_{jl}C^{i'j'}_{m'}C^{l'm'}_{k} w^l_{l'}w^i_{i'}w^j_{j'} &
\end{eqnarray}
which is a classical nonlinear dynamical system in the phase space $\Gamma = \mathbb R^{18}$. We denote by $(\xi^\alpha)_{\alpha \in \{1,...,18\}} \equiv (w^i_k,p^i_k)_{i,k\in\{1,2,3\}}$ a point of $\Gamma$. We can also consider the simplified case with $w^i_j(t) = w(t)\delta^i_j$, where $\Gamma = \mathbb R^2$ (with $\xi = (w,p)$) with $\dot w=p$ and $\dot p = \frac{1}{3} \delta_{jl} C^{ij}_{m}C^{lm}_{i} w^3$. We denote the differential equation of the nonlinear system as $\dot \xi^\alpha = \mathscr F^\alpha(\xi)$.\\
The Schr\"odinger-Koopman equation is \cite{Viennot6}:
\begin{equation} \label{SKeq}
  \imath |\dot \Psi \rrrangle = \left(\slashed D_x - \imath \mathscr F^\alpha \frac{\partial}{\partial \xi^\alpha}\right) |\Psi \rrrangle
\end{equation}
where $|\Psi \rrrangle \in \mathbb C^2 \otimes \mathcal H \otimes L^2(\Gamma,d\mu(\xi))$ (where $\mu$ is a measure on $\Gamma$), $L^2(\Gamma,d\mu(\xi))$ is the space of square integrable functions on $\Gamma$. The solutions $|\psi\rrangle$ of eq.(\ref{NCDiraceq}) are deduced from the solutions of eq.(\ref{SKeq}) by $|\psi(t)\rrangle = \langle \xi(t)|\Psi(t)\rrrangle$ where $t \mapsto \xi(t)$ is solution of eq.(\ref{NLdyn}) \cite{Viennot6}, so $|\Psi\rrrangle$ is a new representation of the quantum state including the effects of the ``noise'' associated with the fast variations. Let $X^4 \equiv - \imath \mathscr F^\alpha \frac{\partial}{\partial \xi^\alpha}$ be the generator of the Koopman operator. The time-dependence of eq.(\ref{SKeq}) being slow (because they are restricted to the time-dependence of $x$) we can apply the adiabatic assumption with $(\slashed D_x + X^4)|\tilde \Lambda \rrrangle = 0$. Let $x^4 \in \Sp(X^4)$ be a Koopman value associated with the Koopman function $f_{x^4} \in L^2(\Gamma,d\mu(\xi))$:
\begin{equation}
  X^4f_{x^4}(\xi) = x^4 f_{x^4}(\xi)
\end{equation}
  $|\Lambda\rrrangle = f_{x^4}|\tilde \Lambda \rrrangle$ is also an eigenvector of $(\slashed D_x + X^4)$ with eigenvalue $x^4$ (see \cite{Viennot6}). It follows that
\begin{equation}
  \slashed D_x^K |\Lambda \rrrangle = 0, \quad \text{with } \slashed D_x^K = \sigma_I \otimes (X^I - x^I)
\end{equation}
with $I \in \{1,2,3,4\}$ and $\sigma_4 = \sigma_0 = \id_2$. By noting that $x^4 = \langle f_{x^4}|X^4|f_{x^4}\rangle$, the situation is completely similar to the one described section \ref{sec1} with an additional dimension. $M_\Lambda = \{x \in \mathbb R^4, \text{s.t. } \det(\slashed D_x^K)=0\}$ can then be 3 dimensional. Note that:
\begin{eqnarray}
  [X^i,X^4] & = & \imath \mathscr F^\alpha(\xi) \frac{\partial X^i}{\partial x^\alpha} \\
  & = & \imath p^j_k \frac{\partial w^i_l X^l_0}{\partial w^j_k} \\
  & = & \imath p^i_k X^k_0 \in \mathfrak X
\end{eqnarray}
  Since $\sigma_4=\sigma_0$ (and is then a part of $\gamma^0$ in the Weyl representation), or in other words since $x^4$ is associated with the time of fast variations, the metric of the embedding space $\mathbb R^4$ is $d(x^1)^2+d(x^2)^2+d(x^3)^2-d(x^4)^2 \equiv -\eta_{IJ}dx^Idx^J$, i.e. the five dimensional embedding flat spacetime is Anti-de Sitter. The continuation of the discussion is similar to the main sections of this paper.\\

  With the Koopman analysis, the fourth dimension of $\mathscr M$ generated by $(X^I)_{I\in\{1,2,3,4\}}$ (the fifth dimension of the quantum spacetime) is not added to the pre-quantised flat spacetime of the classical Galilean observer (before the use of the BFSS quantisation rules $x^i \leadsto X^i$ and $\partial_i \leadsto L_{X^i}$). It emerges ``spontaneously'' from the fast evolutions of the quantum spacetime at the microscopic scale. Note that, this does not forbid emerging dynamical spacetimes. Indeed, we can have a time-dependent parametrisation as $X^i(w(t),v(t))$ with $t \mapsto w(t)$ fast evolving parameters and $t \mapsto v(t)$ slow evolving parameters. The Koopman approach is then used with $w$, but $X^i$ rests explicitly time-dependent with respect to $v$. Since the variations of $v$ are slow, we can apply the adiabatic assumption, and have a time dependent emerging manifold $M_\Lambda(v(t))$. We do not treat in more detail this case in this present paper which focus on time-independent eigenmanifold. In the present formalism, 6 compact extra dimensions can also emerge via the Koopman analysis if we consider the vacuum fluctuations perturbing eq. (\ref{NCKGeq}) as viewed in \cite{Viennot4}. The geometry of these six emergent compact dimensions is not the subject of the present paper and has been studied in \cite{Viennot4}.

\subsection{Massive test particle and chiral oscillations}\label{chiral}
At the starting point of this paper, we have considered a massless test particle to reveal the geometry. If we consider a massive fermion, the mass (denoted here by $\frac{\omega}{2}$ for convenience) induces a coupling between the two fermionic chiralities:
\begin{equation}
  \imath \gamma^0 \partial_0 |\psi \rrangle = -\gamma_i \otimes (X^i-x^i) |\psi \rrangle + \frac{\omega}{2} |\psi \rrangle
\end{equation}
\begin{equation}
  \iff \imath \partial_0 |\psi \rrangle = \left(\begin{array}{cc} 1 & 0 \\ 0 & -1 \end{array} \right) \otimes \sigma_i \otimes (X^i-x^i) |\psi \rrangle +\gamma^0 \frac{\omega}{2} |\psi \rrangle
\end{equation}
with $|\psi\rrangle \in \mathbb C^4 \otimes \mathcal H$. Let $|\tilde \psi \rrangle = e^{\imath \gamma^0 \frac{\omega}{2} t} |\psi \rrangle$ a gauge change of the fermionic state.
\begin{equation}
  e^{\imath \gamma^0 \frac{\omega}{2}t} = \left(\begin{array}{cc} \cos(\omega t/2) & -\imath \sin(\omega t/2) \\ -\imath \sin(\omega t/2) & - \cos(\omega t/2) \end{array} \right)
\end{equation}
  and we have
\begin{equation}
  \imath \partial_0 |\tilde \psi \rrangle = m(\omega t) \otimes \slashed D_x |\tilde \psi \rrangle
\end{equation}
with the mixing matrix
\begin{equation}
  m(\omega t) = \left(\begin{array}{cc} \cos(\omega t) & -\imath \sin(\omega t) \\ \imath \sin(\omega t) & -\cos(\omega t) \end{array} \right)
\end{equation}
We can solve this equation by using the Schr\"odinger-Floquet approach \cite{Viennot7} which is a particular case of the Schr\"odinger-Koopman approach for periodic dynamical systems. Let $\theta = \omega t$ and $|\Psi \rrrangle \in \mathbb C^2 \otimes \mathcal H \otimes \mathbb C^2 \otimes L^2(\mathbb S^1,\frac{d\theta}{2\pi})$ be such that
\begin{equation} \label{SchroFloEq}
  \imath |\dot \Psi \rrrangle = \left(m(\theta)\otimes \slashed D_x - \imath \omega \frac{\partial}{\partial \theta} \right) |\Psi \rrrangle
\end{equation}
$\mathbb C^2 \otimes L^2(\mathbb S^1,\frac{d\theta}{2\pi})$ is the space of square integrable functions on the circle $\mathbb S^1$ with a chirality degree of freedom. The adding operator $-\imath \omega \partial_\theta$ can be viewed as the quantum observable of a compact extra-dimension (but since the spectrum of $-\imath \omega \partial_\theta$ is discrete, this one does not involve an emerging additional dimension on the eigenmanifold). The original state is recovered by $|\tilde \psi \rrangle = \langle \theta=\omega t|\Psi \rrrangle$. We can apply the adiabatic assumption onto eq.(\ref{SchroFloEq}) with a generalised quasi-coherent state $|\Lambda_0 \rrrangle$ such that
\begin{equation}
  \imath \omega \frac{\partial}{\partial \theta} |\Lambda_0 \rrrangle = m \otimes \slashed D_x |\Lambda_0 \rrrangle
\end{equation}
Since $\imath \omega \partial_\theta e^{-\imath n \theta} = n\omega e^{-\imath \theta}$, it follows that $|\Lambda_n\rrrangle = e^{\imath n \theta}|\Lambda_0 \rrrangle$ is also an eigenvector of $m\otimes \slashed D_x - \imath \omega \partial_\theta$:
\begin{equation}
  \imath \omega \frac{\partial}{\partial \theta} |\Lambda_n \rrrangle = (m \otimes \slashed D_x -n\omega)|\Lambda_n \rrrangle \qquad n \in \mathbb Z
\end{equation}
It follows that
\begin{equation}
  |\Lambda_n(\theta)\rrrangle = e^{\imath n\theta} \Teg^{-\frac{\imath}{\omega} \int_0^\theta m(\theta) d\theta \otimes \slashed D_x} |\Lambda_n(0) \rrrangle
\end{equation}
with to ensure the continuity with $\omega \to 0$ ($n\omega$ is not negligible since $n$ can be large)
\begin{equation}
  \left(\left(\begin{array}{cc} 1 & 0 \\ 0 & -1 \end{array}\right) \otimes \slashed D_x - n\omega \right) |\Lambda_n(0)\rrrangle = 0
\end{equation}
\begin{equation}
  \iff |\Lambda_n(0)\rrrangle = \left(\begin{array}{c} |\Lambda_n \rrangle \\ |\Lambda_{-n} \rrangle \end{array}\right) \quad\text{with } \slashed D_x|\Lambda_n\rrangle = n\omega|\Lambda_n\rrangle
\end{equation}
The emergent geometry is then associated with several eigenmanifolds $M_{\Lambda,n}=\{x \in \mathbb R^3, \text{s.t. } \det(\slashed D_x-n\omega)=0\}$. The ground state $|\Lambda_0\rrangle$ is associated with the eigenmanifold $M_{\Lambda,0}$ revealed by massless fermionic test particles. With massive test particles, we have also excited states associated with eigenmanifolds $M_{\Lambda,n}$ corresponding to the Fourier modes of the chiral oscillations.\\
Note that by application of the Floquet theorem, we have $U_x(\theta)=\Teg^{-\frac{\imath}{\omega} \int_0^\theta m(\theta) d\theta \otimes \slashed D_x} = Z_x(\theta)e^{\imath M_x \theta}$ where $Z_x$ is a $2\pi$-periodic operator (with $Z_x(0)=\id$) and $M_x$ is a $\theta$-independent operator. The monodromy operator $e^{\imath M_x \theta}$ governs the general dynamics induced by the chiral oscillations (without the transient fluctuations described by $Z_x(\theta)$).\\

The emergent geometry depends on the particle mass, because this one warps the spacetime at the microscopic scale (in contrast with a test particle for the geometry at the macroscopic scale where this effect is negligible). The Berry phase generator is
\begin{eqnarray}
  A_n & = & -\imath \lllangle \Lambda_n|d|\Lambda_n \rrrangle \\
  & = & -\imath \llangle \Lambda_n|d|\Lambda_n \rrangle -\imath \llangle \Lambda_{-n}|d|\Lambda_{-n} \rrangle \nonumber \\
  & & \quad -\imath \lllangle \Lambda_n(0)| \left(\int_0^{2\pi} U_x(\theta)^{-1} d U_x(\theta) \frac{d\theta}{2 \pi} \right)|\Lambda_n(0)\rrrangle
\end{eqnarray}
(where $d$ is the exterior derivative of $M_{\Lambda,n}$ - no derivation with respect to $\theta$ -). $-\imath \llangle \Lambda_{\pm n}|d|\Lambda_{\pm n} \rrangle$  represents the general geometry modified by the mass $\omega/2$ added to the contents of the spacetime.
\begin{equation}
  A_{n,loc}=-\imath \lllangle \Lambda_n(0)| \left(\int_0^{2\pi} U_x(\theta)^{-1} d U_x(\theta) \frac{d\theta}{2 \pi} \right)|\Lambda_n(0)\rrrangle
\end{equation}
  $A_{n,loc}$ represents the local deformation of the spacetime around the localised test particle of mass $\omega/2$. For particle with small mass, this term is negligible (since $\imath \omega \partial_\theta$ is then just a perturbation operator). For large mass $\omega/2$, the local correction is
\begin{eqnarray}
  A_{n,loc} & = & \frac{1}{\omega} \int_0^{2\pi} \lllangle \Lambda_n(0)|\int_0^\theta m(\theta)d\theta\otimes \sigma_i|\Lambda_n(0)\rrrangle \frac{d\theta}{2\pi} \frac{\partial x^i}{\partial u^a} du^a \nonumber \\
  & & \qquad \qquad + \mathcal O(1/\omega^2) \\
  & = & \frac{\imath}{\omega}(\llangle \Lambda_{n}|\sigma_i|\Lambda_{-n} \rrangle - \llangle \Lambda_{-n}|\sigma_i|\Lambda_{n} \rrangle)\frac{\partial x^i}{\partial u^a} du^a + \mathcal O(1/\omega^2)
\end{eqnarray}

\section{Computation of the quasi-coherent states}
\subsection{CCR algebra}
\subsubsection{case $z=0$:}\label{CCR}
Let $\Lie(a,a^+,\id)$ be the CCR algebra. The non-commutative plane is defined by $X^1 = \frac{a+a^\dagger}{2}$, $X^2=\frac{a-a^\dagger}{2\imath}$ and $X^3=0$. By using the Block matrix determinant formula $\left|\begin{array}{cc} A & B \\ C & D \end{array} \right| = \det(A-BD^{-1}C)\det D$ \cite{Powell} we have (with $\alpha = x^1+\imath x^2$)
\begin{equation}
  \det \slashed D_x = \left|\begin{array}{cc} -x^3 & a^\dagger-\bar \alpha \\ a - \alpha & x^3 \end{array} \right| = \det(-(x^3)^2 - (a^\dagger - \bar \alpha)(a-\alpha))
\end{equation}
It follows that $\det \slashed D_x = 0$ if and only if $\exists |\Omega \rangle$ such that $(a^\dagger-\bar \alpha)(a-\alpha)|\Omega \rangle = - (x^3)^2|\Omega \rangle$. But $\langle \Omega|(a^\dagger-\bar \alpha)(a-\alpha)|\Omega \rangle= \|(a-\alpha)\Omega\|^2 \geq 0$ implies that $x^3=0$.\\
$(a^\dagger-\bar \alpha)(a-\alpha)|\Omega \rangle = 0 \Rightarrow a|\Omega\rangle = \alpha |\Omega\rangle$ or $a^+(a-\alpha)|\Omega\rangle = \bar \alpha (a-\alpha)|\Omega\rangle$. The second alternative is impossible since $\Sp(a^+) = \varnothing$. We have then $|\Omega \rangle = |\alpha \rangle$ (coherent state of the CCR algebra \cite{Perelomov,Puri}).\\

\begin{equation}
  \slashed D_x|\Lambda\rrangle = 0 \iff \left(\begin{array}{cc} 0 & a^+-\bar \alpha \\ a-\alpha & 0 \end{array} \right)\left(\begin{array}{c} |\Lambda^0\rangle \\ |\Lambda^1 \rangle \end{array} \right) = 0
\end{equation}
implying that $a|\Lambda^0\rangle = \alpha|\Lambda^0 \rangle \Rightarrow |\Lambda^0\rangle = |\alpha\rangle$ and $a^+|\Lambda^1\rangle = \bar \alpha |\Lambda^1 \rangle \Rightarrow |\Lambda^1\rangle = 0$. Finally
\begin{equation}
  |\Lambda \rrangle = |0\rangle \otimes |\alpha \rangle\qquad (\alpha=x^1+\imath x^2)
\end{equation}

\subsubsection{case $z\not=0$:}\label{CCR2}
Let $\Lie(a,a^+,\id)$ be the CCR algebra. The non-commutative manifold is defined by $X^1 = \frac{a+a^\dagger}{2}$, $X^2=\frac{a-a^\dagger}{2\imath}$ and $X^3=\xi \id$ (with $\xi \in \mathbb R^*$).
\begin{equation}
  \det \slashed D_x = \left|\begin{array}{cc} \xi-x^3 & a^\dagger-\bar \alpha \\ a - \alpha & \xi+x^3 \end{array} \right| = \det(\xi^2-(x^3)^2 - (a^\dagger - \bar \alpha)(a-\alpha))
\end{equation}
$\det \slashed D_x=0$ if and only if $\exists |\Omega \rangle$ such that $(a^\dagger-\bar \alpha)(a-\alpha)|\Omega \rangle = (\xi^2- (x^3)^2)|\Omega \rangle$. It follows that $|\xi|\geq |x^3|$. Let $b_\alpha = a-\alpha$ and $b_\alpha^+ = a^+ - \bar \alpha$ be the operators of another CCR algebra : $[b_\alpha,b^+_\alpha]=[a,a^+]=\id$. We have then $b^+_\alpha b_\alpha|\Omega \rangle = (\xi^2-(x^3)^2)|\Omega \rangle$. $|\Omega\rangle$ is then an eigenvector of $b^+_\alpha b_\alpha$, and then $\xi^2-(x^3)^2=n$ with $n \in \mathbb N$ ($b^+_\alpha b_\alpha |n\rangle_\alpha = n|n\rangle_\alpha$).
\begin{equation}
  b_\alpha|0\rangle_\alpha = 0 \iff a|0\rangle_\alpha = \alpha|0\rangle_\alpha \iff |0\rangle_\alpha = |\alpha \rangle
\end{equation}
\begin{equation}
  |n\rangle_\alpha = \frac{(b_\alpha^+)^n}{\sqrt{n!}}|0\rangle_\alpha = \frac{(a^+-\bar \alpha)^n}{\sqrt{n!}} |\alpha \rangle
\end{equation}
We fix the value $n$.
\begin{equation}
  \slashed D_x |\Lambda_n \rrangle = 0 \iff \left(\begin{array}{cc} 0 & b^+_\alpha \\ b_\alpha & 0 \end{array} \right)\left(\begin{array}{c} |\Lambda^0_n\rangle \\ |\Lambda^1_n\rangle \end{array} \right) = \left(\begin{array}{c} (\xi+x^3)|\Lambda^0_n\rangle \\ (\xi-x^3) |\Lambda^1_n\rangle \end{array} \right)
\end{equation}
We have $|\Lambda^1_n\rangle = \frac{b_\alpha}{\xi-x^3}|\Lambda^0_n\rangle$, and then $b^+_\alpha b_\alpha|\Lambda^0_n\rangle = (\xi^2-(x^3)^2)|\Lambda^0_n\rangle = n|\Lambda^0_n\rangle$. It follows that $|\Lambda^0_n \rangle = |n\rangle_\alpha$. In a same way, $|\Lambda^0_n\rangle = \frac{b_\alpha^+}{\xi+x^3}|\Lambda^1_n\rangle$, and then $b_\alpha b^+_\alpha|\Lambda^1_n\rangle = (b^+_\alpha b_\alpha +1)|\Lambda^1_n\rangle = n|\Lambda^1_n\rangle \Rightarrow |\Lambda^1_n\rangle = |n-1\rangle_\alpha$. Finally
\begin{equation}
  |\Lambda_n \rrangle = |0\rangle \otimes |n\rangle_\alpha + |1\rangle \otimes |n-1\rangle_\alpha \qquad (n \in \mathbb N^*)
\end{equation}
(with $n=0$ we find the result of the previous case). 

\subsection{$\mathfrak{su}(2)$ algebra} \label{su2}
Let $\Lie(J^i$) be the $\mathfrak{su}(2)$ algebra and implicitly its irreducible representation of dimension $2j+1$ ($\delta_{ij}J^i J^j = j(j+1)$). The Fuzzy sphere is defined by $X^i = rJ^i$ with $r \in \mathbb R^{+*}$ a parameter.
\subsubsection{Solutions at the poles:}
firstly, we search solution of eq.(\ref{NCEVeq}) with $x^1=x^2=0$. By using the Block matrix determinant formula $\left|\begin{array}{cc} A & B \\ C & D \end{array} \right| = \det(A-BD^{-1}C)\det D$ \cite{Powell} we have
\begin{eqnarray}
  \det (\slashed D_x/r-\lambda) & = & \left|\begin{array}{cc} J^3-\frac{x^3}{r}-\lambda & J^- \\ J^+ & -J^3+\frac{x^3}{r}-\lambda \end{array} \right| \\
  & = & \det(J^3-\frac{x^3}{r}-\lambda-J^-(-J^3+\frac{x^3}{r}-\lambda)^{-1}J^+) \nonumber \\
  & & \qquad \times \det(-J^3+\frac{x^3}{r}-\lambda)
\end{eqnarray}
$\lambda \in \Sp(\slashed D_x/r)$ if $\exists|\Omega\rangle$ such that $(J^3-\frac{x^3}{r}-\lambda)|\Omega\rangle=J^-(-J^3+\frac{x^3}{r}-\lambda)^{-1}J^+|\Omega\rangle$. We write $|\Omega\rangle = \sum_{m=-j}^j c_m|jm\rangle$ and then
\begin{equation}
  \sum_{m=-j}^{+j} c_m(m-\frac{x^3}{r}-\lambda)|jm\rangle = \sum_{m=-j}^{+j} \frac{c_m(j(j+1)-m(m+1))}{\frac{x^3}{r}-\lambda-m-1} |jm \rangle
\end{equation}
This implies that
\begin{eqnarray}
  & & (m-\frac{x^3}{r}-\lambda)(\frac{x^3}{r}-\lambda-m-1) = j(j+1)-m(m+1) \label{su2polychara}\\
  & \iff & \lambda^2+\lambda-j(j+1)+(2m+1)\frac{x^3}{r} - \left(\frac{x^3}{r}\right)^2 = 0
\end{eqnarray}
It follows that
\begin{equation}
  \lambda_{m\pm} = -\frac{1}{2} \pm \frac{1}{2}\sqrt{1+4j(j+1)-4(2m+1)\frac{x^3}{r}+4\left(\frac{x^3}{r}\right)^2}
\end{equation}
$\slashed D_x |\lambda_0 \rrangle = r\lambda_{M\pm}|\lambda_0 \rrangle$ if
\begin{eqnarray}
  (J^3-\frac{x^3}{r})|\lambda^0_0\rangle + J^-|\lambda^1_1\rangle & = & \lambda_{M\pm}|\lambda^0_0\rangle \\
  J^+|\lambda^0_0\rangle -(J^3-\frac{x^3}{r})|\lambda^1_0\rangle & = & \lambda_{M\pm}|\lambda^1_0\rangle
\end{eqnarray}
by writing $|\lambda_0 \rrangle = |0\rangle \otimes |\lambda^0_0\rangle + |1\rangle \otimes |\lambda^1_0\rangle$ (with $(|0\rangle,|1\rangle)$ the canonical basis of $\mathbb C^2$). We set $|\lambda^0_0\rangle = \sum_{m=-j}^{+j} a_m|jm\rangle$ and $|\lambda^1_0\rangle = \sum_{m=-j}^{+j} b_m|jm\rangle$. We have then
\begin{eqnarray}
  & & \sum_{m=-j}^{+j} a_m(m-\frac{x^3}{r}-\lambda_{M\pm})|jm\rangle \nonumber \\ & & \quad =- \sum_{m=-j+1}^{+j} b_m\sqrt{j(j+1)-m(m-1)}|jm-1\rangle  \\
  & & \sum_{m=-j}^{j-1} a_m \sqrt{j(j+1)-m(m+1)}|jm+1\rangle \nonumber \\ & & \quad = \sum_{m=-j}^{+j} b_m(m-\frac{x^3}{r}-\lambda_{M\pm})|jm\rangle 
\end{eqnarray}
and then
\begin{eqnarray}
  & & a_m(m-\frac{x^3}{r}-\lambda_{m\pm})+b_{m+1}\sqrt{j(j+1)-m(m+1)} = 0 \\ & & \qquad \qquad (m\not=j) \nonumber \\
  & & a_j(j-\frac{x^3}{r}) = 0 \\
  & & a_{m-1}\sqrt{j(j+1)-m(m-1)} - b_m(m-\frac{x^3}{r}+\lambda_{M\pm}) = 0 \\ &  & \qquad \qquad  (m\not=-j) \nonumber \\
  & & b_{-j}(-j-\frac{x^3}{r}) = 0
\end{eqnarray}
$a_j=b_{-j}=0$ and $b_{m+1} = a_m \frac{\sqrt{j(j+1)-m(m+1)}}{m+1-\frac{x^3}{r}+\lambda_{M\pm}}$, it follows that
\begin{equation}
  \scriptstyle a_m\left((m-\frac{x^3}{r}-\lambda_{M\pm})(m+1-\frac{x^3}{r}+\lambda_{M\pm})+j(j+1)-m(m+1)\right) = 0
\end{equation}
Because of eq.(\ref{su2polychara}), we have $a_m=0$ except if $m=M$; and then $b_m=0$ except if $m=M+1$.
\begin{eqnarray}
  & & a_M(M+\frac{x^3}{r}-\lambda_{M\pm})+b_{M+1}\sqrt{j(j+1)-M(M+1)} = 0 \\
  & & a_M \sqrt{j(j+1)-M(M+1)} - b_{M+1}(M+1-\frac{x^3}{r}+\lambda_{M\pm}) = 0
\end{eqnarray}
and then $a_M = M+1-\frac{x^3}{r}+\lambda_{M\pm}$ and $b_{M+1} = \sqrt{j(j+1)-M(M+1)}$.

\subsubsection{Solutions in all directions:}
We have $|\lambda_0 \rrangle = k^0|0\rangle \otimes |jm\rangle + k^1|1\rangle \otimes |j,m+1\rangle$ with $k^0 = \kappa (\lambda_{m\pm}+m+1-\frac{|x|}{r})$ and $k^1 = \kappa \sqrt{j(j+1)-m(m+1)}$ ($\kappa$ being just a normalisation factor such that $\|k\|^2 = 1$). Let $\vec n = (\sin\theta \cos \varphi, \sin \theta \sin \varphi, \cos \theta)$ be the direction vector in $\mathbb R^3$ and $\Dis_j(\vec n) = e^{\imath \theta \vec m \cdot \vec J}$ be the $\mathfrak{su}(2)$ displacement operator (see \cite{Perelomov, Puri}) with $\vec  m = (\sin \varphi, -\cos \varphi,0)$ ($\Dis_j(\vec n)J^3\Dis_j(\vec n)^\dagger = \vec n \cdot \vec J$). Let $R$ be the following rotation matrix
\begin{equation}
  R = \left(\begin{array}{ccc} \cos \theta \cos \varphi & \cos \theta \sin \varphi & - \sin \theta \\
    - \sin \varphi & \cos \varphi & 0 \\
    \sin \theta \cos \varphi & \sin \theta \sin \varphi & \cos \theta
  \end{array} \right)
\end{equation}
we have
\begin{equation}
  \Dis_j(\vec n) J^i \Dis_j(\vec n)^\dagger = {R^i}_k J^k
\end{equation}
We have then
\begin{eqnarray}
  \delta_{ik} \Dis_{1/2}(\vec n)\sigma^i\Dis_{1/2}(\vec n)^\dagger \otimes \Dis_j(\vec n)J^k\Dis_j(\vec n)^\dagger & = & \delta_{ik} {R^i}_l \sigma^l \otimes {R^k}_m J^m \\
  & = & (R\vec \sigma) \odot (R\vec J) \\
  & = & (R^T R \vec \sigma) \odot \vec J \\
  & = & \delta_{ik} \sigma^i \otimes J^k
\end{eqnarray}
($\odot$ stands for the scalar product of $\mathbb R^3$ along with the tensor product between operators). It follows that
\begin{eqnarray}
  & & \Dis_{1/2 \otimes j}(\vec n) \sigma_i \otimes (J^i-\frac{|x|}{r}\delta^{i3}) \Dis_{1/2\otimes j}(\vec n)^\dagger \nonumber \\
  & & \qquad \quad =  \sigma_i \otimes J^i - \frac{|x|}{r} \Dis_{1/2}(\vec n) \sigma_3 \Dis_{1/2}(\vec n)^\dagger \\
  & & \qquad \quad = \sigma_i \otimes (J^i - \frac{|x|n^i}{r} )
\end{eqnarray}
(with $\Dis_{1/2\otimes j} \equiv \Dis_{1/2}\otimes \Dis_j$). Finally we have
\begin{equation}
  \slashed D_x \Dis_{1/2 \otimes j}(\vec n) |\lambda_0 \rrangle = r\lambda_{m\pm} \Dis_{1/2\otimes j}|\lambda_0 \rrangle
\end{equation}
with $\vec n = \frac{\vec x}{\|\vec x\|}$ and $\lambda_{m\pm} = -\frac{1}{2}\pm\frac{1}{2}\sqrt{(2j+1)^2-4\frac{x}{r} (2m+1)+4\frac{x^2}{r^2}}$. We can have $\lambda_{m\pm}=0$ in only two cases
\begin{eqnarray}
  \lambda_{j+}=0 & \iff & |x|=rj \\ \lambda_{-j-1,+} = 0 & \iff & |x|=-rj
\end{eqnarray}
we choose $|x|=rj \iff \vec x = rj(\sin \theta \cos \varphi,\sin \theta \sin \varphi,\cos \theta)$, the quasi-coherent state is then
\begin{equation}
  |\Lambda(x)\rrangle = \Dis_{1/2\otimes j}(\vec n) |0\rangle \otimes |jj\rangle = |\zeta \rangle_{1/2} \otimes |\zeta\rangle_j
\end{equation}
with $\zeta = e^{\imath \varphi} \tan\frac{\theta}{2}$ and $|\zeta\rangle_j$ is the Perelomov $\mathfrak{su}(2)$ coherent state:
\begin{eqnarray}
  |\zeta\rangle_j & = & \frac{1}{(1+|\zeta|^2)^j} \sum_{m=-j}^{+j} \sqrt{\frac{(2j)!}{(j+m)!(j-m)!}} \zeta^{j-m} |jm\rangle \\
  |\zeta\rangle_{1/2} & = & \cos \frac{\theta}{2}|0\rangle + e^{\imath \varphi} \sin\frac{\theta}{2}|1\rangle
\end{eqnarray}

\section*{References}

\end{document}